\documentclass[journal]{IEEEtran}
\usepackage[utf8]{inputenc} 
\usepackage[T1]{fontenc}
\usepackage{url}
\usepackage{ifthen}
\usepackage{cite}
\usepackage{amsmath,amsthm,amssymb, cite}
\usepackage{bm,bbm}
\usepackage{mathtools}
\usepackage[small]{caption}
\usepackage{tablefootnote}
\usepackage{multirow,color,amsfonts}
\usepackage{tabulary}
\usepackage{subfigure}
\usepackage{graphicx}
\usepackage{setspace}
\usepackage{enumerate}
\usepackage[]{algorithm2e}
\usepackage{comment}
\usepackage{cleveref}

\newtheorem{theo}{Theorem}
\newtheorem{defi}{Definition}

\newtheorem{prop}{Proposition}
\newtheorem{remark}{Remark}

\newtheorem{cor}{Corollary}

\DeclareMathOperator*{\etaM}{\eta_{\sf MatII}}
\DeclareMathOperator*{\etaGM}{\eta_{\sf GEC}}
\DeclareMathOperator*{\etaI}{\eta_{\sf Indep.}}

\DeclareMathOperator*{\HM}{H_{\sf MatII}}

\DeclareMathOperator*{\HI}{H_{\sf Indep.}}

\DeclareMathOperator*{\Hpi}{H_{\pi}}
\DeclareMathOperator*{\Hpii}{H_{\pi,i}}
\DeclareMathOperator*{\HpiI}{H_{\pi,\mathcal{I}}}

\DeclareMathOperator*{\PPP}{\sf{PPP}} 

\DeclareMathOperator*{\GEC}{\sf{GEC}} 
\DeclareMathOperator*{\HEC}{\sf{HEC}} 

\newcommand{\norm}[1]{\left\lVert#1\right\rVert}

\DeclareMathOperator*{\Rsp}{R_{\sf Sph}} 
\DeclareMathOperator*{\Rdd}{R_{\sf c}} 
\DeclareMathOperator*{\Rdds}{R_{\sf c}^2} 
\newcommand{\etapi}{\eta_{\pi}}

\newcommand{\Ex}[1]{\mathbb{E}\!\left[{#1} \right]}

\newcommand{\Var}[1]{\mathrm{Var}\!\Big[{#1} \Big]}
\newcommand{\Varpi}[1]{\mathrm{Var}_{\pi}\!\left[{#1} \right]}
\newcommand{\Cov}[1]{\mathrm{Cov}\!\left[{#1} \right]}

\interdisplaylinepenalty=2500 

\hyphenation{op-tical net-works semi-conduc-tor}


\begin{document}
\title{Spatial Concentration of Caching in Wireless Heterogeneous Networks}
\author{Derya~Malak, Muriel M\'{e}dard, and Jeffrey G. Andrews
\thanks{Malak is with the Dept. of ECSE, Rensselaer Polytechnic Institute, Troy, NY 12180 USA (email: malakd@rpi.edu).} 
\thanks{M\'{e}dard is with Research Laboratory of Electronics, MIT, Cambridge, MA 02139 USA (email: medard@mit.edu).}
\thanks{Andrews is with the Dept. of ECE at the University of Texas at Austin, TX 78712 USA (email: jandrews@ece.utexas.edu).}
\thanks{An early version of the paper appeared in Proc., IEEE ISIT 2019 \cite{malak2019spatial}. \hfill Manuscript last revised: {\today}.}
}

\maketitle

\begin{abstract}
We propose a decentralized caching policy for wireless heterogeneous networks that makes content placement decisions based on pairwise interactions between cache nodes. We call our proposed scheme  \emph{$\gamma$-exclusion cache placement} ($\GEC$), where a parameter $\gamma$ controls an exclusion radius that discourages nearby caches from storing redundant content. $\GEC$ takes into account item popularity and the nodes' caching priorities and leverages negative dependence to relax the classic 0-1 knapsack problem to yield spatially balanced sampling across caches.  We show that $\GEC$  guarantees a better concentration (reduced variance) of the required cache storage size than the state of the art, and that the cache size constraints can be satisfied with high probability. Given a cache hit probability target, we compare the 95\% confidence intervals of the required cache sizes for three caching schemes: (i) independent placement, (ii) hard exclusion caching ($\HEC$), and (iii) the proposed $\GEC$ approach. For uniform spatial traffic, we demonstrate that $\GEC$ provides approximately a 3x and 2x reduction in required cache size over (i) and (ii), respectively. For non-uniform spatial traffic based on realistic peak-hour variations in urban scenarios, the gains are even greater. 
\end{abstract}

\begin{IEEEkeywords}
Wireless cache, overprovision, soft-core placement, spatial traffic.
\end{IEEEkeywords}

\section{Introduction}
\label{introduction}

Distributed caching is a powerful technique to minimize network latency \cite{Shanmugam2013}, and to enable spectral reuse and throughput gain in networks \cite{MaddahAli2013Journal}.  
The primary goal of cache placement in a wireless network is to maximize the cache hit probability, meaning the probability that a node in the network can obtain a desired item from a nearby cache within radio range.  This then eliminates the need for the network, for example a base station or other form of infrastructure, to fetch and broadcast the content.   The cache hit probability is affected by quantities such as the demand distribution, network topology, communication range, the cache storage size, and the topic of this paper, which is the policy for populating these small caches with content.

\subsection{Motivation and Objectives}
\label{motivation}
Our focus is to devise a decentralized caching policy by exploiting the spatial distribution of wireless caches. We can attain a target cache hit rate performance by trading off the local cache space with the spatial diversity of the caching. The aggregate cache size that a user has access to grows with the diversity across the network. This can ensure that different demands can be satisfied with multiple caches, yielding an improved caching performance with the diversity.  

We assume that the nodes are equipped with fixed size caches and are randomly and independently located. A baseline approach to placement is to independently populate the caches by capturing the popularity of the files. However, this approach may not perform well for the tail of the demand distribution when the number of files is large or when the spatial traffic is non-uniform. One way to improve on such a policy is to assume that a receiver can obtain a desired item as long as it is cached within its communication range, and devise a joint cache placement strategy across nodes so as to minimize the chance of cache misses. This will clearly outperform the independent placement policy, but it may be very complex to devise a spatial cache placement model that captures the joint interactions of all nodes. 

In our prior work \cite{Malak2016twc}, which provides a baseline for the current paper, we considered a pairwise interaction model between nodes to make placement decisions, referred to as hard exclusion caching ($\HEC$). 
$\HEC$ is based on Mat\'{e}rn hard-core process of type II (MatII) which is less regular than a lattice but more regular than a Poisson Point Process (PPP). More specifically in MatII, a node from the baseline process is retained -- that is, selected to cache an item -- only if there is no other node within a deterministic exclusion radius.   The MatII makes binary placement decisions: it caches an item with probability $1$ if the node separation is larger than the exclusion radius, else $0$.  Since each item has a different popularity, the exclusion radius is chosen to be a decreasing function of an item's popularity, which introduces pairwise interactions between caches.   While $\HEC$ captures node interactions, the pairwise placement decisions can eliminate caching a desired item within the communication range, rendering $\HEC$ sub-optimal. In addition, in heterogeneous networks where nodes have variable transmit powers and link qualities, 
intermittent node failures may occur, which further renders a fixed exclusion radius suboptimal. In addition, since the exclusion range is optimized for binary as opposed to probabilistic placement, $\HEC$ is not robust to changing or uncertain network conditions.

The above challenges motivate us to seek answers to the following questions:
(Q1) How can we devise a new decentralized caching policy such that the caching penalty is not as severe as for  
the $\HEC$ policy with a fixed exclusion region? (Q2) Is there a class of probabilistic functions that enables a desired caching configuration in light of the demand distribution and network model?  
(Q3) Can we provide a continuous relaxation of the above 0-1 ``knapsack problem"\footnote{The 0-1 knapsack problem is a combinatorial optimization problem which is NP-hard \cite{sahni1975approximate}. In the context of caching, this is equivalent to restricting the number of copies of each kind of item to zero or one.} as compared to $\HEC$ policy, while still spatially balancing content placement? (Q4) How much gain can such a policy attain in terms of the cache hit versus cache size (provisioning) tradeoff?

\subsection{Related Work}
The modeling and analysis of caching gain have attracted significant interest in recent years, and has been studied from many different perspectives. For example, some of the limits of caching via exploiting the tradeoff between the network bandwidth usage and local caches have been studied in \cite{MaddahAli2013Journal}, in which the content placement phase is carefully designed so that a single coded multicast transmission can simultaneously satisfy multiple content requests. 
From the viewpoint of capacity scaling laws, more specifically, the per-node capacity scaling law as the number of nodes $n$ grows to infinity, for the protocol model for wireless ad hoc networks in \cite{gupta2000capacity}, have been explored in the context of caching, e.g., in \cite{ji2015throughput}, \cite{jeon2017wireless}, and \cite{liu2017much}. In \cite{ji2015throughput}, throughput of a D2D one-hop caching network was shown to grow linearly with cache size, and be inversely proportional to the number of files, but independent of $n$, unlike the scaling behavior without caching. Rate-memory and storage-latency tradeoffs for caching have been studied in \cite{yu2018exact}, \cite{xu2017fundamental}. Caching has also been studied in the context of device-to-device (D2D) communications in \cite{Ji2014}, \cite{golrezaei2014scaling}, \cite{Malak2016_D2DCaching}, and interference management in \cite{naderializadeh2017fundamental}, \cite{maddah2015cache}, and in optimization of cloud and edge processing for radio access networks in \cite{park2016joint}, \cite{sengupta2016cloud}. In addition, D2D caching with limited cooperation has been envisioned to increase offloading gains under low energy consumption requirements \cite{deng2018benefits}, hybrid multi-tier caching has been studied in \cite{fan2020cache}, where macro base stations (BSs) tier deterministically caches the most popular contents and the helpers tier probabilistically caches the less popular contents to improve the content delivery probability and rate, and energy efficiency.  Stochastic geometry-based approaches have been exploited for cluster-centric caching \cite{afshang2016fundamentals} and cloud caching architectures \cite{azimi2020stochastic}, and cache aided throughput that captures the reliability of each transmission in modeling the cache hit probability has been proposed in \cite{chen2016probabilistic}. However, none of these approaches focus on the cache overshoot probability based on a distributed approach tailored for proximity-based caching schemes while attaining a desired cache hit performance.

Temporal caching models have been analyzed in \cite{Che2002} for popular cache replacement algorithms, e.g., least recently used (LRU), least-frequently used (LFU), first in first out (FIFO), and most recently used (MRU). Decentralized spatial LRU caching strategies  
have been developed in \cite{Giovanidis2016}. These combine the temporal and spatial aspects of caching, and approach the performance of centralized policies as the coverage increases. However, they are restricted to the LRU principle. A time-to-live 
policy with a stochastic capacity constraint and low variance has been proposed in \cite{BerHenCiuSch2015}. 
The BitTorrent protocol employs the rarest first and choke algorithms to promote diversity 
among peers, and foster reciprocation, 
respectively. These have been demonstrated in the context of peer-to-peer (P2P) file replication in the Internet \cite{LegKelMic2006}. 

There also exist studies focusing on decentralized and geographic content placement policies such as \cite{Shanmugam2013}, \cite{Blaszczyszyn2014}, \cite{IoannidisYeh2016}, \cite{IoannidisYeh2017}, \cite{Malak2016twc}. The main focus of these works is to maximize the average cache hit probability subject to an average cache size constraint, relaxing the integrality constraints of cache capacities.  
This optimization problem can be solved as a convex program. However, to the best of our knowledge, the related literature does not provide performance guarantees, leading to these three performance questions.   (P1) How far off is the optimized average cache size from the unique per-cache size requirement? (P2) How far off is the average cache hit rate from the attainable performance?  (P3) How consistent and predictable is the cache hit probability across the caches?   Cache size over-provisioning plays a critical role in addressing these issues, including for performance optimization in hybrid storage systems in computing which combine solid-state drives, with the hard disk drive technology \cite{oh2012caching}.  
Hence, it is essential to devise content placement techniques that ensure the concentration of the cache size as well as
provide a spatially balanced allocation and a lower cache hit variance across the nodes.

\subsection{Key Aspects and Contributions} 
\label{contributions}

In this paper, we develop a decentralized spatial exclusion-based cache placement policy that exploits pairwise interactions in heterogeneous networks.  
The intuition behind exclusion-based  
caching models is to promote content diversity and reciprocation by ensuring caches storing the same item are never closer than some given distance.  While any exclusion-based approach can yield a negatively dependent thinning of the baseline PPP, the nature of the thinning depends on whether the exclusion is determined in a strict sense, or in a probabilistic manner. For example, the $\HEC$ policy in \cite{Malak2016twc} is strict: a subset of nodes is retained to cache an item based on binary decisions imposed by the Mat\'{e}rn’s deterministic exclusion range model.  

Our proposed policy generalizes the $\HEC$ policy in two ways. First, the exclusion range is not fixed for a given item type. Second, the placement decision is probabilistic rather than binary. 
This is achieved by making the exclusion range at each node variable and determined by a probability measure $\gamma$ for each item, which we refer to as $\gamma$-exclusion cache placement ($\GEC$) policy. The exclusion range for $\GEC$ will be gamma-distributed which is explained later in Sect. \ref{SSCC_policy}. $\GEC$ retains a subset of the nodes of the baseline PPP to cache an item via this probabilistic rule, as well as a combination of factors involving the nodes' caching priorities, and a continuous function $f(r,m,n)$ that makes the pairwise placement decisions corresponding to an item.  
This function decays in the distance $r$ between the node pairs, which in turn increases the retaining probability of the pairs, and is symmetric with respect to the node exclusion radii $m$, $n$ that are distributed according to $\gamma$ that captures the item popularity. 
In that sense, $\gamma$ is not a sufficient statistic for $\GEC$ as pairwise distances matter for making placement decisions. Under GEC, we observe probabilistic caching of most popular contents in a set of nodes sampled from a larger pool of nodes versus more deterministic caching of less popular items in a smaller subset of nodes with larger separation distance, which is unlike the proposed scheme of \cite{fan2020cache}.

Revising the $\HEC$ policy that induces a binary retaining of cache pairs for a content item, $\GEC$ enables the probability of placement decision to transition smoothly from $1$ to $0$ as a function of the pairwise distances and the exclusion radii of the nodes captured via $f(r,m,n)$.

$\GEC$ roots in spatially balanced sampling,  
which is motivated by the request arrivals. For example, in P2P networking, the actual demand distribution is not known by nodes, and the cache updates in each peer are triggered by the requests. Furthermore, the traffic density  
is in general not uniform across 
cellular networks, and the peak hour density can be approximated by a lognormal distribution \cite{zhou2015spatial}. 
Hence, instead of having a fixed exclusion range, it is desirable to have a variable exclusion range, depending on an item's popularity. The $\GEC$ policy comes to the fore by putting a mark distribution on the exclusion range of an item based on its popularity. The marks may correspond to the detection / coverage ranges or the transmit powers of the nodes in heterogeneous networks. Motivated by the performance guarantee issues P1-P3 and answering design questions Q1-Q4, via $\GEC$ we aim to provide a better tradeoff between the cache hit rate and the cache size violation.   
Our main contributions and $\GEC$'s use cases are: 
\begin{enumerate}[i.]
    \item $\GEC$ has desirable spatial and local properties. Via design of the pairwise placement function, $\GEC$ enables a spatially balanced sampling,  
    yields an improved concentration of the cache size eliminating the need of over-provisioning, and enhances multi-hop connectivity. 
    \item $\GEC$ yields a better cache hit rate versus provisioning tradeoff than the state of the art. $\GEC$ can provide about a 3x and 2x reduction in required cache storage size over independent placement \cite{Blaszczyszyn2014} and $\HEC$ placement \cite{Malak2016twc}, respectively. The achievable gains are demonstrated for both uniform and non-uniform spatial traffic types.
    \item $\GEC$ has connections with rarest first caching as it promotes the item diversity and reciprocation among the nodes. Hence, it can be well-suited for P2P applications.
    \item $\GEC$ is suited for enabling proximity-based applications (such as D2D and P2P), and offloading mobile users in heterogeneous networks because it allows a flexible exclusion range determined by the coverage or transmit powers, rendering GEC robust to heterogeneity. 
\end{enumerate}

\subsection{Organization and Notation}

In Sect. \ref{OptimizationFormulation} we formally define the cache hit probability as function of the placement configuration, and discuss the uniform and non-uniform spatial demand models we use. Sect. \ref{SSCC_policy} contains our main technical contributions where we detail the construction of $\gamma$-exclusion caching model, and provide its characterization to answer Q1-Q4 and P1-P3.  
In Sect. \ref{experiments} we numerically evaluate these models and contrast their cache overprovisioning performance, for spatially uniform and non-uniform settings and demonstrate the performance gains of $\GEC$ over the baseline models.

{\bf Notation.} 
Let $\Phi$ denote the mother point process (p.p.), and $\Phi_{th}$ be the child p.p. obtained via the thinning of $\Phi$. Let $\pi$ be a spatial caching policy that yields a set of child p.p.'s $\{\Phi_{th,i}\}_i$, where $\Phi_{th,i}$ is the set of retained points that cache item $i$. Let $A$ be a given bounded convex set in $\mathbb{R}^2$ containing the origin, and $rA$ be its dilation by the factor $r$. $\mathbbm{1}\{A\}$ is the indicator of event $A$. Let $B$ be a bounded Borel set. Let $\Phi(B)$ be the random number of points of the spatial p.p. $\Phi$ which lie in $B$. Any receiver can obtain the desired item $i$ if it is within a critical communication range $\Rdd$. Assume that $B=B_0(\Rdd)$, where $B_0(r)$ is a ball in $\mathbb{R}^2$ with radius $r$, centered at origin. The notation for the caching network is detailed in Table \ref{notationtable}.

\begin{table*}[h!]\footnotesize
\centering
\setlength{\extrarowheight}{0.5pt}
\begin{tabular}{| l | l |}
\hline
{\bf Definition} & {\bf Function} \\
\hline
Mother p.p.; Intensity of $\Phi$; Child (thinned) p.p. & $\Phi=\{x_k\}$; $\lambda$; $\Phi_{th}$\\
\hline
A spatial caching policy that yields a set of child p.p.'s $\{\Phi_{th,i}\}_i$ & $\pi$\\
\hline
\hline
Ball centered at origin with radius $r$; Communication range & $B_0(r)$; $\Rdd$ \\
\hline
Catalog size; Set of items; Cache storage size & $M$; $\{i\}_{i=1}^M$; $N$\\
\hline
A request path with length $|q|=K_q$; A request pair; Set of all requests & $q=\{q_1, q_2, \hdots , q_{K_q}\}$; $r=(i,q)$; $\mathcal{R}$\\
\hline
Request pmf in the uniform spatial demand setting & $p_r\sim {\rm Zipf}(\gamma_r)$ \\
\hline
Request traffic intensity pmf in the heterogeneous spatial demand setting & $\sim {\rm lognormal}(\mu^*,\sigma)$\\ 
\hline
\hline
Intensity of MatII as function of the exclusion radius $r_i$ & $\lambda_{\rm hcp}$\\
\hline
\hline
Bivariate marks of $\GEC$ & $\{(m_k^{(i)},v_k^{(i)})\}$\\
\hline 
Distribution of the mark exclusion radius $m^{(i)}$; Distribution of the weight of mark & $\mu^{(i)}=\Gamma(\alpha,\beta)$; $v_k^{(i)}\sim U[0,1]$\\
\hline
Pairwise deletion probability for two points with marks $m$ and $n$ separated by $r$  & $f(r, m, n)$\\
\hline
\end{tabular}
\caption{Notation.}
\label{notationtable}
\end{table*}

\section{How to Optimize the Caching Gain}
\label{OptimizationFormulation}
The locations of the nodes (caches) in the network are modeled by a homogeneous PPP $\Phi=\{x_k\}$ in $\mathbb{R}^2$ with intensity $\lambda$. 
The network serves content requests routed over $\Phi$. 
There are $M$ items in the network, each having the same size, and each node has the same cache storage size $N<M$. 
We consider both an isotropic demand model where the request distribution is uniform across the network and a non-isotropic model where the traffic intensity is heterogeneous. 

\subsection{Demand Distribution Model}
\label{demand distribution}
Each user makes requests based on a Zipf popularity distribution over the set of the items. We assume that the global probability mass function (pmf) of such requests (demand) over the network is given by $p_r(i)=i^{-\gamma_r}\big/\sum\nolimits_{j=1}^M {j^{-\gamma_r}}$, where $\gamma_r$ determines the tilt of the Zipf distribution. We also let $\mathcal{I}\sim p_r$ be the random variable that models the demand. The demand profile is the Independent Reference Model (IRM), i.e., the standard synthetic traffic model in which the request distribution does not change over time \cite{traverso2013temporal}. In the uniform demand model, the intensity of the requests for item $i$, i.e., $\lambda_i$, follows
the global request distribution $p_r(i)$.

We next consider a more realistic demand model that varies geographically, inspired from the existing empirical models that demonstrate this
\cite{lee2014spatial}, \cite{ying2014characterizing}, \cite{deng2015ginibre}, \cite{zhou2015spatial}. 

{\bf The peak hour traffic density in urban/rural networks.} 
In \cite{zhou2015spatial}, the geographical variation of traffic measurements were collected from commercial cellular networks, based on a large-scale measurement data set from commercial GPRS/EDGE cellular networks deployed in a major east province of China. 
They have demonstrated that the peak hour traffic density, i.e., the highest volume of the cell  traffic load per unit area in kilobytes per square kilometer, both in a dense urban area and a rural scenario can be modeled by a lognormal distribution. In addition, authors in \cite{ying2014characterizing} have optimized the BS deployments for both urban and rural scenarios, where the deployments should be Mat\'{e}rn cluster and Strauss hard-core processes, respectively.

{\bf Distribution of non-uniform demand.} Motivated by \cite{zhou2015spatial}, we characterize the peak hour traffic density by $D = e^S$ where $S\sim \mathcal{N}(\mu^*,\Sigma)$, where the entries of the covariance matrix satisfy the relation $\Sigma_{ij}=\sigma^2 e^{-d_{ij}/r}$ that is obtained exploiting the exponential variogram model where the correlation is a function of the pairwise distances between the receivers \cite{zhou2015spatial}. The variogram has an appropriate scaling $r$ that is  determined by range, where the peak hour traffic of the two positions beyond this range is almost uncorrelated \cite{zhou2015spatial}.  
Numerical values of the distribution parameters are given in Sect. \ref{experiments}. 
For the non-uniform traffic model, the effective intensity of the requests for item $i$ at $x\in \Phi$ is 
\begin{align}
\label{non_uniform_intensity}
\lambda_i^x= \frac{D_x}{\mathbb{E}[D]}\lambda_i 
= \frac{D_x}{\mathbb{E}[D]} p_r(i),
\end{align}
where $\lambda_i$ is the intensity of the requests for item $i$ in the uniform spatial traffic model, and $D_x$ is the traffic density at $x\in\Phi(B)$ and $\mathbb{E}[D]=\frac{1}{|\Phi(B)|}\sum_{y\in\Phi(B)} D_y$. Note that the isotropic demand result $p_r(i)$ in Sect. \ref{uniform_demand} is can be obtained when $D_x=\mathbb{E}[D]$. 

{\bf Tilting of demand.} We model the non-uniform spatial request distribution at $x\in\Phi(B)$ by exponential tilting of the isotropic demand random variable $\mathcal{I}$, which is parameterized by $\theta_x$:
\begin{align}
\label{tilted_demand}
p_r^x(i)=\frac{e^{i\theta_x}p_r(i)}{\mathbb{E}_{\mathcal{I}}[e^{\mathcal{I}\theta_x}]}=\frac{e^{i\theta_x}p_r(i)}{\sum\limits_{m=1}^M e^{m\theta_x} p_r(m)},\quad i=1,\hdots, M.
\end{align}
From (\ref{non_uniform_intensity}) and (\ref{tilted_demand}) we choose the tilting parameter $\theta_x$ such that
\begin{align}
\label{non_uniform_demand_v2}
e^{i\theta_x}=e^{i\log\left(\frac{D_x}{\mathbb{E}[D]}\right)},\quad i=1,\hdots, M.    
\end{align}
Note that if the original distribution for the intensity of the requests for item $i$ is Poisson$(\lambda_i)$, its tilted distribution parameterized by $\theta_x$ will also be a Poisson distribution with parameter $\lambda_i^x=e^{\theta_x}\lambda_i $. Hence, exponential tilting ensures the validity of (\ref{non_uniform_intensity}). 
When the original distribution is different from Poisson, $\theta_x$ has to be set to ensure that the local request distribution at $x\in\Phi(B)$ is valid. For example, if $p_r\sim {\rm Zipf}(\gamma_r)$, then $e^{i\theta_x}$ equals $(1/i)^{\gamma_{\theta_x}}$ for some $\gamma_{\theta_x}$ so that the tilted distribution is also a Zipf distribution with parameter $\gamma_r+\gamma_{\theta_x}$.

Another choice of the tilting parameter is based on a weighted version of exponential tilting such that the peak hour traffic density is higher for only a subset of items:
\begin{align}
\label{non_uniform_demand_v1}
e^{i\theta_x}=
\begin{cases}
\frac{\sum\limits_{x\in \Phi(B): D_x>\mathbb{E}[D]} D_x}{|\Phi(B)|\mathbb{E}[D]-\sum\limits_{x\in \Phi(B): D_x>\mathbb{E}[D]} D_x},\,\,\, i=1,\hdots, \frac{M}{2}\\
1,\,\,\, i=\frac{M}{2}+1,\hdots, M.
\end{cases}
\end{align}

{\bf Uniform versus non-uniform  demand.} Provided that non-uniform spatial traffic satisfies $\mathbb{E}_x[\lambda_i^x]=\lambda_i$, the average caching gain over all the users remains unchanged from the uniform model. However, non-uniform demand also affects the second order properties of caching, such as its spatial distribution and variance across nodes, which play a key role in the performance.

For simplicity of presentation, in the rest of this section, we focus on the uniform demand model and present the wireless caching gain function for different coverage models. The caching gain function can similarly be derived for the heterogeneous demand model using the local demand distribution, which then can be averaged over the network. Hence, we defer the discussion of the heterogeneous demand model to the numerical simulations section (see Sect. \ref{experiments}).

\subsection{Caching Gain Model under Different Wireless Coverage Scenarios}
\label{uniform_demand}

Each node $x\in\Phi$ is associated with the variables $z_{xi}=\mathbbm{1}\{i\in {\rm Cache}(x)\}$ that denote whether item $i$ is available in its cache or not. 
 Assuming that the network node have no caching capability. In that case, the requests are served by a base station which gives an upper bound on the expected caching cost (e.g., routing, download delay). 
 The goal of caching is by designing $Z=[z_{xi}]_{x\in\Phi,\, i=1,\hdots, M}$ to enable a maximum average reduction of the costs, i.e., maximize the caching gain, via content caching in the network. 
 Let $w_k$ be the cost associated with a requested item within the presence of $k$ nodes in the range of the user.  
 We assume that request forwarding costs are negligible, and requests and downloads are at a smaller timescale than the request arrivals, which is aligned with existing approaches, e.g., see \cite{IoannidisYeh2016}, \cite{IoannidisYeh2017}.  
Given these parameters, the wireless caching gain function or the cache hit probability can be defined as follows: 
\begin{align}
\label{generalcachecost}
F(Z)=\mathbb{E}_{\mathcal{I}}\left[\sum\limits_{k=1}^{\infty}{w_{k}\Big(1-\prod\limits_{k'=1}^k\big(1-z_{q_{k'} \mathcal{I}}\big)\Big)}\right],
\end{align}
where the expectation is with respect to the demand modeled via $\mathcal{I}$. Furthermore, when the demand is uniform, i.e., $\lambda_i=p_r(i)$, we can replace $\mathbb{E}_{\mathcal{I}}[\cdot]$ with $\sum\nolimits_{i}{\lambda_i [\cdot]}$. We next justify our choice and observations on the caching gain function in (\ref{generalcachecost}) through the following steps: 
\begin{enumerate}[i.]
\item $F(Z)$ is a monotonically nondecreasing submodular function of $Z$.
\item The product term in (\ref{generalcachecost}) satisfies the following relation:
\begin{align}
\label{product_observation}
f_{z_i}(1,\hdots,k)=\prod\limits_{k'=1}^k\big(1-z_{q_{k'} i}\big)\nonumber\\
=\begin{cases}
1,\,\, z_{q_{k'} i}=0,\,\, \forall k'\in\{1,\hdots,k\},\\
0,\,\, \text{otherwise}.
\end{cases}
\end{align}
\item If the nodes do not cache, i.e., $z_{xi}=0$ for all $x$ and $i$, then $F(Z)=0$. Hence, the cost of obtaining an item is $\sum\nolimits_{k=1}^{\infty}w_{k}$, i.e., the costs for all possible node configurations accumulate. 
\item  From (\ref{product_observation}) the term $\big(1-\prod\nolimits_{k'=1}^k\big(1-z_{q_{k'} \mathcal{I}}\big)\big)$ in (\ref{generalcachecost})  equals $0$ when the item is not cached in the subset of nodes $\{{q_{k'}} \}_{k'=1}^k$, and equals $1$ when the item is cached at least once in $\{{q_{k'}} \}_{k'=1}^k$, with the 
convention $\prod\nolimits_{k=1}^0 a_k=1$, yielding no caching gain versus a gain of $w_k$, respectively, given that there is a subset $\{q_1,\hdots, q_k\}\subseteq \Phi$ of $k$ nodes the user can obtain the item from. The caching gain is obtained via aggregating the costs $w_k$ of different user associations.
\item From (\ref{generalcachecost}) we observe a caching gain of $w_k$ if $k^*$ is the first $k$ such that $1-\prod\nolimits_{k'=1}^{k^*}\big(1-z_{q_{k'}i}\big)=1$ holds, i.e., the item is not available in any of $q_k$, $k=1,\hdots k^*-1$. This implies for $l< k^*$ that $1-\prod\nolimits_{k'=1}^{l}\big(1-z_{q_{k'}i}\big)=0$ and for $l\geq k^*$ that $1-\prod\nolimits_{k'=1}^l\big(1-z_{q_{k'}i}\big)=1$ because at least one node in $\{{q_{k'}} \}_{k'=1}^l$ in the range of the user has the desired item. As a result, the caching gain is $\sum\nolimits_{k=k^*}^{\infty} w_{k}$. Equivalently, the caching cost is given by 
$\sum\nolimits_{k=1}^{\infty} w_{k}-\sum\nolimits_{k=k^*}^{\infty} w_k=\sum\nolimits_{k=1}^{k^*-1} w_{k}$. 
\end{enumerate}

We next describe two coverage scenarios that follow from the caching gain function in (\ref{generalcachecost}).

{\bf Multi-hop coverage scenario.} In a multi-hop setting each node can use other intermediate nodes as relays to reach  some destination when radio range of each node is smaller than network coverage area, see e.g., \cite{IoannidisYeh2016}, \cite{IoannidisYeh2017}. Applications of multi-hop includes multi-path routing \cite{rossi2011caching}, and path replication algorithm \cite{cohen2002replication}, \cite{Che2002}, \cite{lv2002search} which is also being used in the context of caching. The replication algorithm is such that when an item traverses the reverse path towards a node that requested it, it is cached by every intermediate node encountered and the traditional cache eviction policies, such as LRU, LFU, FIFO, can be implemented when caches are full. 

A request is a pair $(i, q)$ that is determined by the item requested $i \in C$, and the path $q$ that the request traverses. We denote by $\mathcal{R}$ the set of all requests. Given a path $q$ that is a sequence $\{q_1, q_2, \hdots , q_{K_q}\}$ of nodes with length $|q|=K_q$ and the terminal node in the path is a designated source node for $i$, i.e., if $|q| = K_q$, then the item is always cached at $q_{K_q}$ and a $x \in q$, denote by $k_q(x)$ the position of $x$ in $q$, i.e., $k_q(x)$ equals $k \in \{1, \hdots , |q|\}$ such that $q_k = x$. In this case, the caching gain function in (\ref{generalcachecost}) can be given in this form:
\begin{align}
\label{MultiHopCachingGain}
F(Z)=\sum\limits_{(i,q)\in\mathcal{R}}{\lambda_{(i,q)}\sum\limits_{k=1}^{|q|-1}{w_{q_k}\Big(1-\prod\limits_{k'=1}^k\big(1-z_{q_{k'} i}\big)\Big)} },
\end{align}
where weight $w_{q_k} \geq 0$ associated with link $(q_k, q_{k+1})$ represents the cost of transferring an item across the link. If the item was not cached over the path, the cost of obtaining the item would be $\sum\nolimits_{k=1}^{|q|-1}w_{q_k}$. 
The cost of caching is accumulated over the path traversed by a request until the item is found, i.e., if $k^*$ is the first node index where item $i$ is stored on path $q$, then from step v. the caching gain for the item is $\sum\nolimits_{k=k^*}^{|q|-1}{w_{q_k}}$. 
We assume that the path originates from the request centered at origin, and $q_1$ is the nearest node, $q_2$ is the second nearest node, and so on. 
To determine $\{w_{q_k}\}$ we consider a path loss-based model and let  $w_{q_k}=\mathbb{E}[||q_k-q_{k+1}||^{\alpha}]$, where $\alpha$ is the path loss exponent. For $\Phi$, let $D_i$ be the distance of the $i$-th closest node. 
Then the joint distance distribution for $(D_1,D_2,\hdots, D_n)$ ordered in a nondecreasing way equals \cite{Ganti2012}
\begin{multline}
\label{distancedistribution}
f_{\vec{D}_{n}}(\vec{d}_{n})=f_{D_1,\hdots, D_n}(d_1,\hdots, d_n)\\
=(2\pi\lambda)^n \Big(\prod\limits_{i=1}^n d_i\Big) \exp(-\lambda\pi d_n^2),\quad 0\leq d_1\leq d_2\leq\hdots. 
\end{multline}
From the path loss model and (\ref{distancedistribution}), and exploiting the fact that $\Phi$ is isotropic, the weight satisfies
\begin{multline}
w_{q_k}=\frac{1}{2\pi}\int\limits_0^{2\pi}\int\limits_0^{\infty} \hdots \int\limits_{d_{k-1}}^{\infty}\int\limits_{d_k}^{\infty} (d_k^2+d_{k+1}^2 \nonumber\\
-2d_kd_{k+1}\cos(\theta))^{\alpha/2}  
f_{\vec{D}_{k+1}}(\vec{d}_{k+1}) {\rm d} \vec{d}_{k+1} {\rm d}\theta,\nonumber
\end{multline}
which can be plugged into (\ref{MultiHopCachingGain}) to determine the caching gain.

{\bf Boolean coverage scenario.} 
The Boolean coverage model captures the wireless networks in the noise-limited regime, and is also known as the SNR model. It has been used in the context of wireless caching in \cite{Giovanidis2016,Blaszczyszyn2014,Blaszczyszyn2014,Malak2016twc}. In wireless networks, the Boolean model can be realized for example via frequency reuse such that neighbouring nodes do not operate on the
same bandwidth. Because it is the noise-limited regime, the coverage cell of a typical node can be modeled using a ball $B$ of fixed radius (i.e., critical communication range $\Rdd$) centered at the node assuming that the random effects of the fading can be ignored (i.e., without channel variations). In Boolean scenario, each cache node has a possibly random area of coverage associated with it. Coverage cells of different nodes can overlap and a user at a random location may be covered by multiple nodes (all single hop away different from the multi-hop model) such that the user can obtain the desired item from which, or may not be covered at all. 

\begin{figure*}[t!]
\centering
    \includegraphics[width=0.8\textwidth]{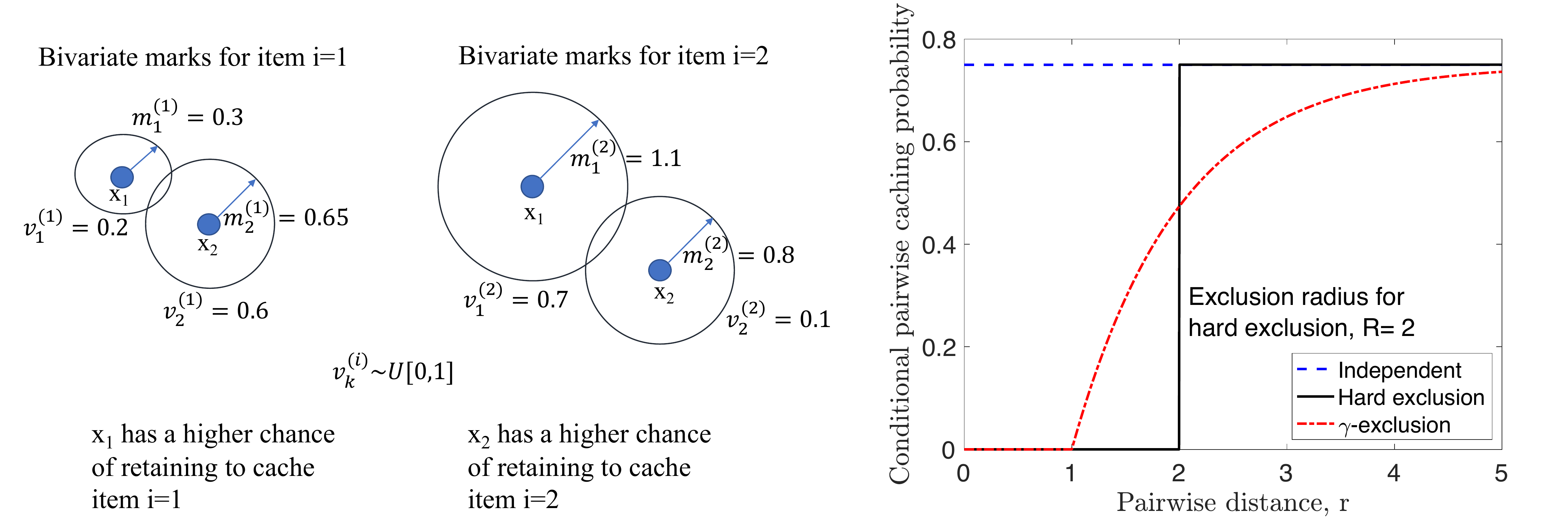}
\caption{\small{(Left) Bivariate marks for a node pair and 2 items. (Right) An example contrasting the probability of retaining a node of the baseline PPP for the independent caching, $\HEC$ and $\GEC$ policies as function of the pairwise distance. }}
\label{fig:gamma_function}
\end{figure*}

The caching gain in (\ref{generalcachecost}) for the Boolean coverage scenario is given as follows: 
\begin{align}
\label{BooleanCachingGain}
F(Z)= \sum\limits_{i\in\mathcal{R}}{\lambda_{i}\sum\limits_{k=0}^{\infty}w_k{\Big(1-\prod\limits_{k'=1}^k\big(1-z_{q_{k'} i}\big)\Big)} }, 
\end{align}
where $w_k=\mathbb{P}(\Phi(B)=k)$ is the probability that $k$ caches of the PPP $\Phi$ within $B$ cover the typical receiver. We assume that $\lambda_{i}=\lambda_{(i,q)}$ is the same across all paths. If there are $\Phi(B)=k^*$ and $k^*$ is the first index such that a cache has the desired item $i$, then from (\ref{generalcachecost}) and step $v$, the caching gain for item $i$ is given by the probability of having at least $k^*$ caches within the communication range, i.e., $\sum\nolimits_{k=k^*}^{\infty}{w_{k}}=\mathbb{P}(\Phi(B)\geq k^*)$ 
under the assumption that the worst case cost of caching is attained when $\Phi(B)=\infty$, i.e., none of the nodes cache the desired item. The caching gain increases in $w_k$ since the chance of finding the item locally increases.

We note that the multi-hop and Boolean coverage scenarios with the cache hit probabilities given in (\ref{MultiHopCachingGain}) and (\ref{BooleanCachingGain}), respectively, are indeed special cases of (\ref{generalcachecost}) and equivalent up to scaling of the costs (or weights). Thus, in the rest of the paper we focus on the Boolean coverage scenario.

\begin{figure*}[t!]
    \centering
    \includegraphics[width=0.9\textwidth]{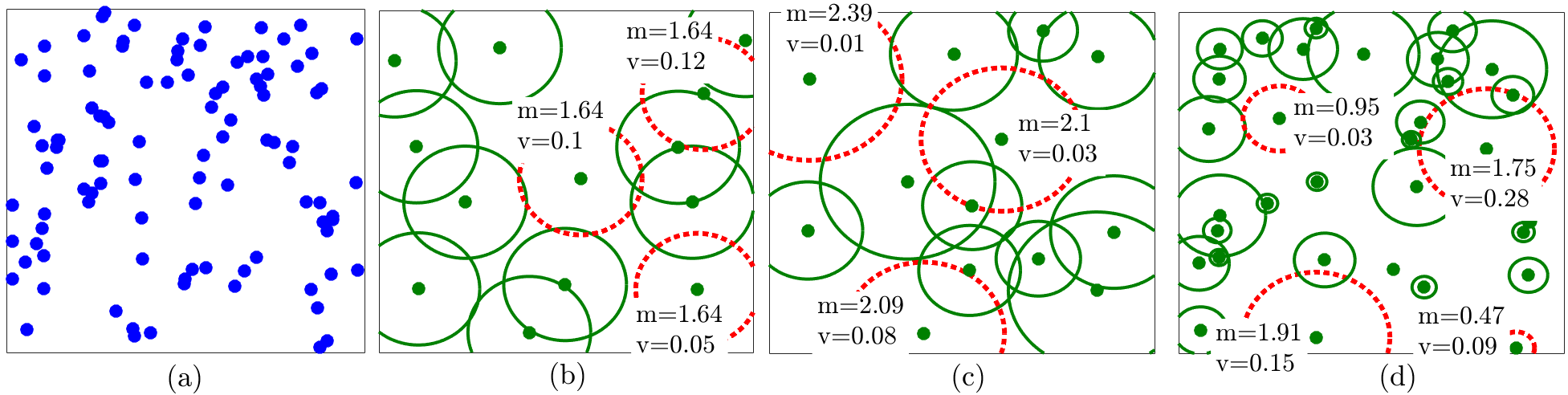}
    \caption{\small{$\GEC$ p.p. realizations: (a) Begin with a realization of PPP $\Phi$. Associate a gamma-distributed mark $m \sim \mu=\Gamma(\alpha,\beta)$ and a weight $v \sim  U[0, 1]$ to each node independently.  
    A node $x\in\Phi$ is selected based on (\ref{retainingprobability}), where $p_0=1$, $f(r,m,n)$ is given 
    in (\ref{fcfunction}) with $c=100$. The marks and weights for some of the retained points are shown in dotted circles. The retained p.p.'s for (b) $\beta=0$ (fixed mark radii), (c) $\beta=0.1$ (mark radii have low variance), and (d) $\beta=1$ (mark radii have high variance). As $\beta$ (the mark variance) increases, the packing is denser.}}
    \label{fig:softcoremodel}
\end{figure*}

\section{The $\gamma$-Exclusion Caching Model}
\label{SSCC_policy}
The 
$\gamma$-exclusion caching policy is constructed from the underlying PPP $\Phi$ 
with intensity $\lambda$, by removing certain nodes depending on the positions of the neighboring nodes, and on the marks and weights attached to them. It is based on a generalization of the Mat\'{e}rn II hard-core p.p. (MatII). 
The MatII process is obtained via the thinning of $\Phi$, where each node is equipped with different exclusion radii (deterministic marks) for distinct items according to their popularity \cite{Malak2016twc}. For this process, as the exclusion radius 
increases, the intensity $\lambda_{\rm hcp}(i)<\lambda$ corresponding to the set of nodes caching item $i$ decreases. The $\GEC$ policy is more general than MatII such that there is a distinct distribution that models the exclusion radius of each item. This can enable a softer version of thinning to ensure a more effective content placement.

For each item $i$, let $\tilde{\Phi}_i=\{(x_k,m_k^{(i)},v_k^{(i)})\}_{k}$ be a homogeneous independently marked PPP with intensity $\lambda$, and i.i.d. $\mathbb{R}^2$-valued marks, where $\Phi=\{x_k\}$, and $\{(m_k^{(i)},v_k^{(i)})\}$ is the random bivariate mark. The first component $m^{(i)}$ of the bivariate mark is referred to as mark, and has distribution $\mu^{(i)}$. The mark of item $i$, i.e., $m^{(i)}$, denotes its exclusion radius, and depends on its popularity in the network. If item $i$ is more popular than item $j$, then $m^{(i)}$ is stochastically dominated 
by $m^{(j)}$ \cite{marshall1979inequalities}. The marks $m_k^{(i)}$ are distributed according to $\mu^{(i)}$ for each $x_k$, and $i$. For example, we can associate a gamma-distributed mark to each node to model the variable wireless channel gains or compensate for the random effects of the fading ignored by the Boolean model. For the special case of MatII, i.e., when the marks are fixed, we optimized the exclusion radii in \cite{Malak2016twc}. 
The second component $v^{(i)}$ of the bivariate mark is weight, where $v^{(i)}$ serves as a weight of the thinning procedure, and is modeled by a random variable   
$\nu^{(i)}_{m^{(i)}}$ which might depend on $m^{(i)}$. In this paper, we assume that weights $v_k^{(i)}$ do not depend on the mark $m$. Instead, they are i.i.d. and uniformly distributed, i.e., $v_k^{(i)}$ are realizations of $\nu^{(i)}_{m^{(i)}}\sim \nu_m \sim U[0,1]$, for each  
$x_k$ and item $i$, meaning that the nodes are treated identically for purposes of thinning for caching. An example scenario for bivariate marks of 2 nodes $x_1,\,x_2\in \Phi$ is sketched in Fig. \ref{fig:gamma_function} (Left). 

Let $\Phi_{th,i}$ be a child p.p. that denotes the set of points that cache item $i$. The cache placement model is such that 
item $i$ is stored in cache $x_k\in \Phi$, i.e., $i\in {\rm Cache}(x_k)$, if and only if cache $x_k$ is kept as a point of $\Phi_{th,i}$. Equivalently, the caching variables satisfy 
\begin{align}
\label{GEC_cache_variable}
z_{x_ki}=\mathbbm{1}\{i\in {\rm Cache}(x_k)\}=\mathbbm{1}\{x_k\in\Phi_{th,i}\}. 
\end{align}
Node $x_k$ is retained as a point of $\Phi_{th,i}$ with probability $\mathbb{E}[z_{x_ki}]=p(x_k,m_k^{(i)},v_k^{(i)},\Phi)$. 

The number of items in cache $x_k$ is the sum of the individual items' indicator functions $C(x_k) = \sum\nolimits_i \mathbbm{1}\{i\in {\rm Cache}(x_k)\}$. The cache size constraint has to be satisfied on average, i.e.,
\begin{align}
\label{avgcachesize}
N=\mathbb{E}[C(x_k)]=\sum\limits_i p(x_k,m_k^{(i)},v_k^{(i)},\Phi),\quad x_k\in\Phi.
\end{align}
We next detail the dependent thinning procedure based on a pairwise deletion rule, and investigate the relationship between the mother p.p. $\Phi$ and the child p.p.'s $\Phi_{th,i}$, $i=\{1,\hdots, M\}$.

{\bf \em Dependent sampling of nodes for content placement.}  
From this part and onwards, for brevity of notation, we omit the index $i$, and consider the generic thinned process $\Phi_{th}$, derived from $\Phi$ by applying the following probabilistic dependent thinning rule. Assume that mark $m$ is distributed according to $\mu$, and $\vec{\delta}=\{m\}$ is the set of marks for all points in $\tilde{\Phi}$, where $m\sim \mu$ and $\bar{m}=\mathbb{E}_{m}[m]$. The marked point $(x, m, v) \in \tilde{\Phi}$ is retained as a point of $\Phi_{th}$ with probability
\begin{multline}
\label{retainingprobability}
p(x, m, v, \Phi) \\
= p_0 \prod\limits_{(y,n,u)\in\tilde{\Phi},\, y\neq x} [1 - \mathbbm{1}\{v \geq u\} f(||x - y||, m, n)]
\end{multline}
independently from deleting or retaining other points of $\Phi$. In other words, a node $x \in \Phi$ is retained to cache item $i$ with probability $p_0$, if it has the lowest weight among all the points within its exclusion range. In (\ref{retainingprobability}), $p_0 \in (0, 1]$, $f : [0, \infty[ \times \mathbb{R}^2 \to [0, 1]$ is a  mapping 
deterministic function satisfying $f(\cdot, m, n) = f(\cdot, n, m)$ for all $m,\, n \in \mathbb{R}$. This means that if two points with marks $m$ and $n$, and weights $v\geq u$ are a distance $r > 0$ apart, then the point with weight $v$  
is deleted by the other point with probability $f(r, m, n)$. Additionally, each surviving point is then again independently $p_0$-thinned. Conditional on retaining $(y,n,u)\in\tilde{\Phi}$ such that $y\neq x$, $1-f(||x - y||, m, n)$ represents the pairwise retaining probability of $x\in \Phi$ given that $v\geq u$.

The retaining policy in (\ref{retainingprobability}) would work under more general baseline p.p. models, such as the Binomial p.p., the Determinantal p.p., and the Ginibre p.p. and so on \cite{andrews2010primer}, while the resulting sampled processes will be different. For example, in \cite{deng2015ginibre}, the authors use the Ginibre p.p. to model random phenomena of BS positions with repulsion. If the baseline process is repulsive (where repulsion means that the node locations of a real deployment usually appear to form a more regular point pattern than the homogeneous PPP) or more deterministic like the binomial p.p. model, (\ref{retainingprobability}) will retain a larger portion of nodes compared to the PPP model versus for a clustered baseline process where (\ref{retainingprobability}) will result in a loss of a high fraction of points due to negative association between node pairs captured via the exclusion marks. Due to space limitations we leave the extension to different baseline models a future work.

The function $f(||x - y||, m, n)$ for $\GEC$ in (\ref{retainingprobability}) should be determined to ensure that (\ref{avgcachesize}) holds. Inspired from  \cite{teichmann2013generalizations}, we assume that $f$ is continuous and symmetric, and satisfies 
\begin{align}
\label{fcfunction}
    f(r,m,n)=\exp{(-c\lfloor r-m-n\rfloor_+)},\quad r\geq 0,
\end{align}
where $\lfloor x\rfloor_+ = \max\{0,x\}$. Fig. \ref{fig:gamma_function} (Right) clarifies the distinction of $\GEC$ and the independent caching and $\HEC$ policies, via indicating their pairwise retaining probabilities for  
$1-f(r,2,2)$.

Denote by $\GEC$$[\lambda,\mu,(\nu_m)_{m\in\mathbb{R}}, p_0, f]$ the distribution of $\Phi_{th}$. We next give its intensity, i.e., $$\lambda_{th}=\lambda \mathbb{E}[p(x, m, v, \Phi)].$$

\begin{theo}\label{theorem_intensity}\cite[Theorem 12]{teichmann2013generalizations}
The intensity of p.p. $\Phi_{th}\sim$$\GEC$$[\lambda,\mu,(\nu_m)_{m\in\mathbb{R}}, p_0, f]$ is given by 
\begin{multline}
\lambda_{th}=\lambda p_0 \int\nolimits_{\mathbb{R}} \int\nolimits_{\mathbb{R}} \exp\Big(-\lambda \int\nolimits_{\mathbb{R}} F_{\nu_n}(u)  \\
\int\nolimits_{\mathbb{R}^2}  f(||x ||, m, n){\rm d}x\ \mu({\rm d} n ) \Big)\, \nu_m({\rm d}u)\,\mu({\rm d}m),
\end{multline}
where $F_{\nu_m}(u)=\nu_m((-\infty,u])
$, $u\in\mathbb{R}$ 
is the cumulative distribution function (CDF) of $\nu_m$. 
\end{theo}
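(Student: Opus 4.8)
The plan is to compute the expected per-point retaining probability $\mathbb{E}[p(x,m,v,\Phi)]$ directly and then invoke the already-stated identity $\lambda_{th}=\lambda\,\mathbb{E}[p(x,m,v,\Phi)]$. The computation rests on two standard tools for the independently marked PPP: the Slivnyak--Mecke theorem, which lets us evaluate the retaining probability at a typical point, and the probability generating functional (PGFL), which converts the product over the remaining points in (\ref{retainingprobability}) into an exponential.

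First I would fix the typical point at $x$ and condition on its mark $m$ and weight $v$. Because $\Phi$ is a PPP, Slivnyak's theorem gives that the reduced Palm distribution of $\tilde\Phi$ given a point at $(x,m,v)$ coincides with the law of $\tilde\Phi$ itself; hence the conditional expectation of the product in (\ref{retainingprobability}) over the remaining points $(y,n,u)$ with $y\neq x$ is exactly a PGFL of the marked PPP evaluated at $g(y,n,u)=1-\mathbbm{1}\{v\ge u\}\,f(\norm{x-y},m,n)$. This is the step that correctly excludes the self-interaction term $y=x$ and reduces everything to the stationary expectation.

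Next, applying the PGFL of the independently marked homogeneous PPP of intensity $\lambda$, and using $1-g(y,n,u)=\mathbbm{1}\{v\ge u\}\,f(\norm{x-y},m,n)$, the conditional expectation becomes $\exp(-\lambda\int_{\mathbb{R}^2}\int\int \mathbbm{1}\{v\ge u\}\,f(\norm{x-y},m,n)\,\nu_n(du)\,\mu(dn)\,dy)$. Two simplifications then follow. The inner integral over the competing weight $u$ is $\int \mathbbm{1}\{v\ge u\}\,\nu_n(du)=\nu_n((-\infty,v])=F_{\nu_n}(v)$, reflecting that a point survives against neighbors of larger weight. By translation invariance of $\Phi$, the spatial integral over $y\in\mathbb{R}^2$ can be rewritten as $\int_{\mathbb{R}^2} f(\norm{x},m,n)\,dx$, independent of the location of the typical point. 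Averaging the resulting conditional retaining probability over the mark $m\sim\mu$ and weight $v\sim\nu_m$ of the typical point, and restoring the prefactor $p_0$ and the leading intensity $\lambda$, yields precisely the claimed formula after renaming the dummy weight variable $v$ to $u$.

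The main point to secure is not the algebra but the rigor of the two reductions: that the reduced Palm expectation equals the stationary expectation (Slivnyak), and that the PGFL representation is valid, i.e. that the exponent is finite so Fubini permits the interchange of the $dx$, $\mu(dn)$ and $\nu_n(du)$ integrations. Finiteness holds because $f\in[0,1]$ and, for the $\GEC$ choice (\ref{fcfunction}), $f(\norm{x},m,n)=\exp(-c\lfloor\norm{x}-m-n\rfloor_+)$ is integrable over $\mathbb{R}^2$ for every fixed $m,n$. Since the statement is quoted as \cite[Theorem 12]{teichmann2013generalizations}, one may alternatively just cite that result; the PGFL argument above reconstructs it in the present notation.
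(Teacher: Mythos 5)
Your proposal is correct and follows essentially the same route as the paper: apply the PGFL of the (marked) PPP to turn the product in (\ref{retainingprobability}) into an exponential, average the indicator over the competing point's weight to produce $F_{\nu_n}(\cdot)$, and then integrate over the typical point's mark and weight, using $\lambda_{th}=\lambda\,\mathbb{E}[p(x,m,v,\Phi)]$. The only difference is that you make explicit the Slivnyak reduction, the translation-invariance step, and the Fubini/integrability check, which the paper leaves implicit.
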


\begin{proof}
The probability generating functional (PGFL) \cite{Stoyan1996} of the PPP states for function $f(x)$ that $\mathbb{E}\left[\prod\nolimits_{x\in\Phi} f(x)\right] = \exp\big(-\lambda \int\nolimits_{\mathbb{R}^2} (1 - f(x)){\rm d}x\big)$. We can compute the intensity of $\Phi_{th}$ using
\begin{multline}
\lambda_{th}
=\lambda p_0 
\int\nolimits_{\mathbb{R}} \int\nolimits_{\mathbb{R}} \exp\Big(-\lambda \mathbb{E}\big[ \mathbbm{1}\{\nu_n\leq u\} \nonumber\\  
\int\nolimits_{\mathbb{R}^2}  f(||x ||, m, n){\rm d}x\ \big] \Big)\, \nu_m({\rm d}u)\,\mu({\rm d}m).\nonumber
\end{multline}
We obtain $\lambda_{th}$ using the PGFL and computing 
\begin{align}
\mathbb{E}[\mathbbm{1}\{\nu_n \leq u\}]=\int\nolimits_{\mathbb{R}} \mathbb{P}(\nu_n\leq u)\mu({\rm d}n)
=\int\nolimits_{\mathbb{R}} F_{\nu_n}(u) \mu({\rm d}n),\nonumber
\end{align}
along with incorporating the retaining probability in (\ref{retainingprobability}). 
\end{proof}

We next provide the intensity of MatII in terms of the marks as a special case of Theorem \ref{theorem_intensity}.

\begin{cor}\label{MatIIsimplified} Letting $f(\cdot, m, n)=1_{[0,m+n]}(\cdot)$, 
and $m=n=R/2$, the intensity of MatII equals 
\begin{align}
\lambda_{th}&=\frac{1-\exp\left(-\lambda     \pi R^2  \right)}{\pi R^2 }. 
\end{align}
\end{cor}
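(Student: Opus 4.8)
The plan is to derive Corollary \ref{MatIIsimplified} as a direct specialization of Theorem \ref{theorem_intensity}, requiring no new machinery. First I would fix the parameters that place MatII inside the $\GEC$ family: the mark distribution $\mu$ degenerates to a Dirac mass at $R/2$ (the exclusion radius is deterministic rather than gamma-distributed), the retention parameter is $p_0 = 1$, and the weights remain uniform, $\nu_m \sim U[0,1]$ for every $m$. Substituting the indicator kernel $f(\cdot, m, n) = 1_{[0,m+n]}(\cdot)$ with $m = n = R/2$ then collapses the pairwise deletion function to $f(\lVert x \rVert, R/2, R/2) = 1_{[0,R]}(\lVert x \rVert)$.

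Next I would evaluate the nested integrals in Theorem \ref{theorem_intensity} from the inside out. The innermost spatial integral is simply the area of the disk of radius $R$, namely $\int_{\mathbb{R}^2} 1_{[0,R]}(\lVert x \rVert)\,\mathrm{d}x = \pi R^2$. Because $\mu$ is concentrated at $R/2$, the integral against $\mu(\mathrm{d}n)$ collapses to evaluation at $n = R/2$, and the uniform assumption gives $F_{\nu_{R/2}}(u) = u$ on $[0,1]$. Hence the argument of the exponential reduces to $-\lambda \pi R^2 u$.

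The remaining computation is the weight integral $\int_0^1 \exp(-\lambda \pi R^2 u)\,\mathrm{d}u = (1 - \exp(-\lambda \pi R^2))/(\lambda \pi R^2)$, after which the outer $\mu(\mathrm{d}m)$ integral is again trivial by degeneracy of $\mu$. Multiplying through by the prefactor $\lambda p_0 = \lambda$ cancels the $\lambda$ in the denominator and yields exactly $(1 - \exp(-\lambda \pi R^2))/(\pi R^2)$, the claimed intensity of MatII.

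I do not expect a genuine obstacle here; the argument is bookkeeping of how the degenerate mark law and the uniform weight law simplify the general formula. The only point that warrants care is invoking the uniform CDF identity $F_{\nu}(u) = u$ solely on its support $[0,1]$, which is precisely the range over which the weight integral runs, so no clamping of the CDF outside $[0,1]$ ever enters the final expression.
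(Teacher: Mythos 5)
Your proposal is correct and matches the paper's (implicit) argument: the paper presents this corollary as a direct specialization of Theorem~\ref{theorem_intensity} without writing out the computation, and your substitution of the degenerate mark law at $R/2$, $p_0=1$, uniform weights, and the indicator kernel, followed by the evaluation $\int_0^1 \exp(-\lambda\pi R^2 u)\,\mathrm{d}u = (1-\exp(-\lambda\pi R^2))/(\lambda\pi R^2)$, is exactly the intended derivation. No gaps.
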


In Fig. \ref{fig:softcoremodel}, we illustrate different realizations of $\GEC$ $\Phi_{th}$ formed by thinning $\Phi$ for different mark radii where the mark distributions are gamma-distributed. A gamma-distributed random variable $X$ is characterized by a shape parameter $\alpha$ and a rate parameter $\beta$, and is denoted as $X\sim \Gamma (\alpha ,\beta ) \equiv \operatorname {Gamma} (\alpha ,\beta )$. The corresponding probability density function is $f(x;\alpha ,\beta )={\frac {\beta ^{\alpha }x^{\alpha -1}\exp{(-\beta x)}}{\Gamma (\alpha )}},\,\, {\text{ for }}x>0,\, \alpha ,\,\beta >0$, 
where its mean is $\alpha/\beta$, variance is $\alpha/\beta^2$, and  $\Gamma (\alpha )$ is the gamma function. As the mark variance increases, the packing becomes denser, which is desired for spatially balanced caching.  The gamma distribution is an exponential family. Exponential families arise in the context of the maximum-entropy distributions with specified means.

{\bf \em Spherical contact distribution function.} 
Our next goal in this section is to establish the relationship between the cache hit probability distribution, i.e., the distribution of $F(Z)$, to the spherical contact distribution function or empty space function, which we next formally define.

\begin{defi}{\bf Spherical contact distribution function \cite{Stoyan1996}.}
The spherical contact distribution function of the p.p. $\Xi$ is the conditional distribution function of the distance from a point chosen randomly outside $\Xi$ (i.e., $0$), to the nearest point of $\Xi$ given $0\notin \Xi$. It is given by 
\begin{align}
\label{sphericalCDF}
H(r) =\mathbb{P}(\Rsp\leq r\vert \Rsp>0), \quad r \geq 0,\nonumber\\
= 1 - \exp{\left( -\lambda \big[\mathbb{E}(\nu_2(\Xi_0 \oplus rA)) - \mathbb{E}(\nu_2(\Xi_0))\big] \right)},
\end{align}
where $\Rsp=\inf\{s: \Xi\cap sA\neq \emptyset\}$, $A=B_0(1)$ is the unit ball in $\mathbb{R}^2$ containing the origin, and $rA$ is the dilation of the set $A$ by the factor $r$, $\Xi_0$ is the typical grain of $\Xi$, $\mathbb{E}(\nu_2(\Xi_0))$ is the mean volume of the typical grain, $\Xi_0 \oplus rA$ denotes the Minkowski addition, and $\nu_2$ is the Lebesgue measure on $[\mathbb{R}^2,\mathcal{B}^2]$ where the $\sigma$-algebra $\mathcal{B}^2$ contains all the subsets of $\mathbb{R}^2$ that can be constructed from the open subsets by the basic set operations and by limits \cite{Stoyan1996}. 
\end{defi}

In Fig. \ref{fig:CDF} we illustrate the spherical contact distribution function for the Boolean model with random spherical grains  
\cite[Ch. 3.1]{BaccelliBook1} that model the coverage cells. For example, in the special case of PPP-MatII, the spherical grains have the same radius. We next exploit this notion to characterize the cache hit probability $F(Z)$ for different spatial content placement policies. 

\begin{figure}[t!]
    \centering
    \includegraphics[width=0.2\textwidth]{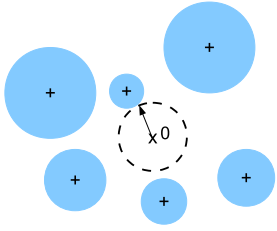}
    \caption{\small{The radius of the smallest sphere centered at $0$ intersecting the Boolean Model $\Phi_{th}$. The spherical contact distribution function is the conditional distribution function of the radius of the sphere, given $0\notin\Phi_{th}$ \cite[Ch. 3.1]{BaccelliBook1}.}}
    \label{fig:CDF}
\end{figure}

\begin{theo}\label{CachehitvsSCDF} The average cache hit probability of policy $\pi$ for a caching gain function $F(Z)$ that satisfies (\ref{generalcachecost}) is given as
\begin{align}
\label{avg_hit_rate_pi}
\mathbb{E}_{\pi}[F(Z)]=\mathbb{E}_{\mathcal{I}}[\HpiI(\Rdd)],
\end{align}
where $\HpiI(\Rdd)$ is the spherical contact distribution function of the thinned p.p. $\Phi_{th,\mathcal{I}}$ for $\mathcal{I}$.
\end{theo}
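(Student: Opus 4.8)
The plan is to carry out the argument in the Boolean coverage scenario of (\ref{BooleanCachingGain}), which the excerpt has already observed subsumes the general gain (\ref{generalcachecost}) up to a rescaling of the weights, and to show that each item's contribution to the averaged gain is exactly the probability that a copy of that item lies within the communication range $\Rdd$ of the typical receiver. First I would fix an item $i$ and isolate the inner sum $\sum_{k=0}^{\infty} w_k\bigl(1 - \prod_{k'=1}^{k}(1 - z_{q_{k'}i})\bigr)$ from (\ref{BooleanCachingGain}), recalling that $w_k = \mathbb{P}(\Phi(B)=k)$ with $B = B_0(\Rdd)$ and that $q_1, q_2, \ldots$ are the in-range nodes ordered by distance to the origin.

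The central step is a law-of-total-probability identification. By step ii. and (\ref{product_observation}), on any configuration the factor $1 - \prod_{k'=1}^{k}(1 - z_{q_{k'}i})$ is the indicator that at least one of the $k$ nearest in-range nodes caches item $i$; weighting by $w_k = \mathbb{P}(\Phi(B)=k)$ and averaging over the mother process together with the policy $\pi$ marginalizes out the number of in-range nodes, so that
\begin{align}
\mathbb{E}_{\pi}\Bigl[\sum_{k=0}^{\infty} w_k\bigl(1 - \prod_{k'=1}^{k}(1 - z_{q_{k'}i})\bigr)\Bigr] = \mathbb{P}\bigl(\Phi_{th,i}(B_0(\Rdd)) \geq 1\bigr). \nonumber
\end{align}
The delicate point is that, because $\GEC$ (and $\HEC$) thin $\Phi$ dependently, the variables $z_{q_{k'}i}$ are correlated across $k'$ and the conditional hit probability is not a clean function of $k$ alone; I would therefore read the right-hand side as the full average $\mathbb{E}_{\Phi}$ of the per-configuration hit indicator rather than factorize it, and treat the $w_k$-decomposition as the bookkeeping that reconstructs this same event.

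Finally I would invoke the spherical contact (empty-space) distribution function of the thinned process. Taking the grains of the Boolean model to be the germ points of $\Phi_{th,i}$ and identifying the dilation factor with the communication range, $\Rsp = \inf\{s : \Phi_{th,i}\cap sA \neq \emptyset\}$ is the distance from the origin to the nearest node caching item $i$, so the event $\{\Phi_{th,i}(B_0(\Rdd)) \geq 1\}$ coincides with $\{\Rsp \leq \Rdd\}$. Since $0\notin\Phi_{th,i}$ almost surely, the conditioning on $\Rsp>0$ in the definition (\ref{sphericalCDF}) is vacuous, and hence $\mathbb{P}(\Phi_{th,i}(B_0(\Rdd)) \geq 1) = \Hpii(\Rdd)$. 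Summing over items with the uniform-demand weights $\lambda_i = p_r(i)$ and rewriting the sum as an expectation over $\mathcal{I}\sim p_r$ then gives $\mathbb{E}_{\pi}[F(Z)] = \sum_i p_r(i)\,\Hpii(\Rdd) = \mathbb{E}_{\mathcal{I}}[\HpiI(\Rdd)]$, which is the claim.

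I expect the main obstacle to be the middle step: making the total-probability identification rigorous in the presence of the dependent thinning, i.e. confirming that the product term, evaluated configuration-by-configuration, is exactly the cache-hit indicator and that mixing against $w_k = \mathbb{P}(\Phi(B)=k)$ correctly averages over the (correlated) in-range placements. Once the empty-space interpretation with point grains and contact radius $\Rdd$ is fixed, the remaining equalities are definitional.
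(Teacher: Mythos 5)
Your proposal is correct and follows essentially the same route as the paper's proof: collapse the weighted inner sum to the per-item hit indicator $\mathbbm{1}(\Phi_{th,i}(B_0(\Rdd))>0)$, identify its expectation with the spherical contact distribution function evaluated at $\Rdd$, and average over $\mathcal{I}\sim p_r$. The only difference is that you spell out the law-of-total-probability step over $w_k=\mathbb{P}(\Phi(B)=k)$ that the paper asserts in one line when it writes $F(Z)=\sum_i p_r(i)\mathbbm{1}(\Phi_{th,i}(B)>0)$.
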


\begin{proof}
Let $B=B_0(\Rdd)$ and $\Phi_{th,i}(B)=\sum\nolimits_{x\in\Phi_{th,i}}1(x\in B)$ be the number of transmitters containing item $i$ within a circular region of radius $\Rdd$ around the origin. Then we have
\begin{align}
F(Z)=\sum\nolimits_{i}{p_r(i)\mathbbm{1}(\Phi_{th,i}(B)>0)}.\nonumber
\end{align}
The average cache hit probability is given by $\mathbb{E}[F(Z)]=\sum\nolimits_{i}{p_r(i)\mathbb{P}(\Phi_{th,i}(B)>0)}$, where defining $\Rsp=\inf\{s: \Phi_{th,i}(B_0(s)) \neq 0\}$, given $0\notin \Phi_{th,i}$ we have that
\begin{align}
\label{hit_cdf_equivalence}
    \mathbb{P}(\Phi_{th,i}(B)>0) = \mathbb{P}(\Rsp\leq \Rdd \vert \Rsp>0),
\end{align}
which is the spherical contact distribution function of $\Phi_{th,i}$ evaluated at $\Rdd$. 
\end{proof}

The variance of $F(Z)$ across the nodes satisfies the following additive relation
\begin{align}
\Varpi{F(Z)}=\sum\nolimits_{i} p^2_r(i) \Hpii(\Rdd) (1-\Hpii(\Rdd)) 
\end{align}
because 
$\Phi_{th,i}$ across $i=\{1,\hdots, M\}$ are independent of each other. Under the IRM and a Zipf popularity model, $\Varpi{F(Z)}$ decreases with increasing variance of marks when $\mathbb{E}[C(x)]$ is held constant \cite[Prop. 1]{BerHenCiuSch2015}. A spatially balanced sampling yields a low $\Varpi{F(Z)}$ as expected.

{\bf \em Migration to the child process via effective thinning.}  
Consider the pair $\Phi-\Phi_{th}$ of mother and child p.p.'s. The spherical contact distance denotes the distance between a typical point in $\Phi$ and its nearest neighbor from the thinned process $\Phi_{th}$. 
The spherical contact distribution function given in (\ref{sphericalCDF}) for the child process obtained via thinning of p.p. $\Phi$ can be rewritten as:
\begin{align}
\label{HRintegral}
\Hpi(R)=1 - \exp\Big( -\int\nolimits_{0}^{R}2\pi r\lambda \etapi(r,\delta)  {\rm d}r \Big), 
\end{align}
where $\etapi(r,\delta)$ is the conditional thinning Palm-probability, i.e., the probability of the point $x\in\Phi$ migrating to 
$\Phi_{th}$ under policy $\pi$, with a fixed (exclusion) radius $\delta$. It equals
\begin{align}
\label{ConditionalThinningPalm-Probability}
    \etapi(r,\delta)
    =\mathbb{P}(x\in \Phi_{th} \vert \Phi_{th}\cap B_{x_0}(r)=\emptyset, x_0\in \Phi).
\end{align}

We observe that the more effective the thinning 
policy $\pi$ is, the larger its conditional thinning Palm-probability $\etapi(r,\delta)$ in (\ref{ConditionalThinningPalm-Probability}) and the spherical contact distribution function $\Hpi(R)$ in (\ref{HRintegral}) are. From Theorem \ref{CachehitvsSCDF}, the expected caching gain $\mathbb{E}_{\pi}[F(Z)]$ is improved if $\pi$ is more effective.

We next formally provide the conditional thinning Palm-probability of the $\GEC$ policy.

\begin{prop} \label{CTPP_SSCC}
The conditional thinning Palm-probability for $\PPP$-$\GEC$ is given as 
\begin{multline}
    \etaGM(r,\vec{\delta})=\nonumber\\
    \int\nolimits_{\mathbb{R}}\int\nolimits_{0}^1
    \exp\Big(-u \lambda \int\nolimits_{\mathbb{R}} \int\nolimits_{\mathbb{R}^2} h(||x||,m,n)
    {\rm d}x \, \mu({\rm d}n) \Big)\, {\rm d}u\, \mu({\rm d}m),\nonumber
\end{multline}
where given radius marks $m,\,n$, $h(||x||,m,n)$ satisfies the relation 
\begin{align}
\int\nolimits_{\mathbb{R}^2} h(||x||,m,n)\,{\rm d}x=\pi(m+n)^2-l_2(r,n), 
\end{align}
where $l_2(r,\delta)$ is the area of the intersection of $B_{x_0}(r)$ and $B_{x}(\delta)$. It is given by
\begin{align}
\label{AreaIntersectingCircles}
    l_2(r,\delta)=\begin{cases}
    \pi r^2,\quad 0<r<\frac{\delta}{2}\\
    r^2\cos^{-1}\big(1-\frac{\delta^2}{2r^2}\big)+\delta^2\cos^{-1}\big(\frac{\delta}{2r}\big)\\
    -\frac{\delta}{2}\sqrt{4r^2-\delta^2},\quad r\geq \delta/2.
    \end{cases}
\end{align}

\end{prop}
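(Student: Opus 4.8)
The plan is to read $\etaGM(r,\vec{\delta})$ through its definition in (\ref{ConditionalThinningPalm-Probability}) as a conditional Palm probability: the probability that a point $x\in\Phi$ lying at distance $r$ from the reference point $x_0$ migrates to the child process $\Phi_{th}$, given that the ball $B_{x_0}(r)$ contains no retained point. By stationarity of $\Phi$ I would place $x_0$ at the origin and invoke Slivnyak's theorem, so that conditionally on $x_0\in\Phi$ the remaining points still form a PPP of intensity $\lambda$ carrying the i.i.d. bivariate marks $\{(m,v)\}$ of Sect. \ref{SSCC_policy}. Specializing the pairwise deletion function to the hard-core form $f(\cdot,m,n)=1_{[0,m+n]}(\cdot)$ used in Corollary \ref{MatIIsimplified}, the point $x$ with mark $m$ and weight $u$ is retained exactly when none of its potential deleters --- the points $y$ of $\Phi$ with weight below $u$ lying in $B_x(m+n_y)$ --- is present.

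First I would compute the retention probability of $x$, conditioned on its mark $m$ and weight $u$, in the \emph{unconditioned} setting exactly as in the proof of Theorem \ref{theorem_intensity}: averaging the factor $1-\mathbbm{1}\{u\geq u_y\}f(\lVert x-y\rVert,m,n)$ over the deleter's weight (which contributes $F_{\nu_n}(u)=u$ for $\nu_m\sim U[0,1]$) and over its mark $n$, and then applying the PGFL of the PPP, yields a survival factor $\exp\big(-u\lambda\int_{\mathbb{R}}\int_{\mathbb{R}^2} f(\lVert x\rVert,m,n)\,{\rm d}x\,\mu({\rm d}n)\big)$ with $\int_{\mathbb{R}^2} f\,{\rm d}x=\pi(m+n)^2$. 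The heart of the argument is then to re-insert the conditioning $\Phi_{th}\cap B_{x_0}(r)=\emptyset$: the potential deleters of $x$ that sit in the overlap of $B_{x_0}(r)$ with the exclusion disk around $x$ are already constrained by this event, so their contribution must be stripped out of the exclusion integral. Carrying the geometry through replaces the full exclusion area $\pi(m+n)^2$ by $\pi(m+n)^2-l_2(r,n)$, which is precisely the quantity $\int_{\mathbb{R}^2} h(\lVert x\rVert,m,n)\,{\rm d}x$ in the statement; the closed form (\ref{AreaIntersectingCircles}) then follows from the standard two-disk intersection formula with radii $R_1=r$, $R_2=n$ and center distance $r$ (one checks the three terms match directly). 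Averaging the resulting survival factor over $u\in[0,1]$ and over the candidate's own mark $m\sim\mu$ gives the stated double integral, with $p_0=1$ as in the construction of Fig. \ref{fig:softcoremodel}.

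I expect the single genuine obstacle to be this reinsertion of the empty-ball conditioning, i.e. justifying that conditioning on $\Phi_{th}\cap B_{x_0}(r)=\emptyset$ subtracts exactly the lens area $l_2(r,n)$ and not some other correction. The difficulty is that the retention of $x$ and the emptiness of $B_{x_0}(r)$ are \emph{not} independent --- both are driven by the weights of the points near $x_0$ --- so one cannot simply multiply probabilities. I would resolve this by exploiting the weight-ordering characterization of Mat\'{e}rn-II thinning: a point is retained iff it carries the smallest weight in its exclusion neighborhood, which lets the empty-ball event and the survival of $x$ be written on a common configuration of weights and positions and then decoupled through the conditional PGFL, with the overlap region contributing the correction $l_2(r,n)$. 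Once this geometric bookkeeping is settled, the remaining steps are the routine weight and mark integrations inherited from Theorem \ref{theorem_intensity}.
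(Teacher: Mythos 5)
Your proposal is correct in approach and is, in substance, exactly what the paper intends: the paper's own "proof" is a one-line citation to generalizing Eq.\ (15) of \cite{al2016nearest}, and your reconstruction --- Slivnyak plus the PGFL/weight-ordering computation from Theorem \ref{theorem_intensity}, with the empty-ball conditioning stripping the lens area $l_2$ out of the exclusion integral $\pi(m+n)^2$, followed by the $u$- and $m$-averages --- is precisely that generalization, worked out in more detail than the paper provides. The one point you correctly flag as the genuine obstacle (decoupling the retention of $x$ from the emptiness of $B_{x_0}(r)$) is also the step the paper leaves entirely to the cited reference, so your treatment is consistent with, and more explicit than, the paper's.
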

 
\begin{proof}
The proof follows from generalizing \cite[Eq. (15)]{al2016nearest}.
\end{proof}

We next give special cases of Prop. \ref{CTPP_SSCC} corresponding to PPP-PPP and PPP-MatII thinnings, where significant simplifications on the conditional thinning Palm-probabilities and hence the contact distribution functions $\Hpi(R)$ in (\ref{HRintegral}) for these child point processes are possible. As a result, the average cache hit probabilities for such thinnings can be computed using Theorem \ref{CachehitvsSCDF}.

\begin{cor}
{\bf {\sf PPP}-{\sf PPP}.} The homogeneous PPP $\Phi_{th}$ with intensity $\lambda_{th}$ is obtained via independent thinning of $\Phi$, exploiting that $\delta=0$, where
\begin{align}
\label{PPPthinning}
    \etaI(r,\delta)=\int\nolimits_{0}^{1}\frac{\lambda_{th}}{\lambda}\exp{(-t\lambda \pi 0^2)}{\rm d}t=\frac{\lambda_{th}}{\lambda}.
\end{align}
Using (\ref{HRintegral}) and (\ref{PPPthinning}), the contact distribution function satisfies $\HI(R)=1-\exp{(-\pi \lambda_{th} R^2)}$. Incorporating $\HI(R)$ into Theorem \ref{CachehitvsSCDF} gives the average cache hit probability $\mathbb{E}_{\sf Indep.}[F(Z)]$.
\end{cor}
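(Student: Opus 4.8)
The plan is to obtain the PPP-PPP case as the $\delta\to 0$ degeneration of Proposition \ref{CTPP_SSCC}, and then propagate the resulting Palm probability through the downstream identities (\ref{HRintegral}) and (\ref{avg_hit_rate_pi}). First I would set the exclusion marks to $m=n=0$. With the indicator deletion kernel $f(\cdot,m,n)=1_{[0,m+n]}(\cdot)$ used for the hard-core specialization in Corollary \ref{MatIIsimplified}, this collapses the pairwise rule to the null kernel almost everywhere, so no point deletes another and the dependent thinning degenerates to an independent $p_0$-thinning of $\Phi$. Quantitatively, both contributions to the integral relation in Proposition \ref{CTPP_SSCC} vanish at $m=n=0$: the term $\pi(m+n)^2$ is zero, and $l_2(r,0)=0$ follows directly from (\ref{AreaIntersectingCircles}) because $\cos^{-1}(1)=0$ annihilates the surviving branch. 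Hence $\int_{\mathbb{R}^2}h(\|x\|,0,0)\,{\rm d}x=0$ and the exponential in the Palm integrand reduces to $\exp(-t\lambda\pi 0^2)=1$, leaving only the retention factor under the weight integral.

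The conceptual content of the first claim is that, for independent thinning, conditioning on $\Phi_{th}\cap B_{x_0}(r)=\emptyset$ in (\ref{ConditionalThinningPalm-Probability}) does not alter the retention event of any other point, so the conditional thinning Palm-probability collapses to the unconditional retention probability $p_0=\lambda_{th}/\lambda$. This is exactly what the degenerated integral returns, $\etaI(r,\delta)=\int_0^1(\lambda_{th}/\lambda)\,{\rm d}t=\lambda_{th}/\lambda$, a constant independent of $r$, which is the value asserted in (\ref{PPPthinning}).

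Next I would substitute this constant into (\ref{HRintegral}). Since $\etaI$ is $r$-independent the exponent integral is elementary, $\int_0^R 2\pi r\lambda(\lambda_{th}/\lambda)\,{\rm d}r=\pi\lambda_{th}R^2$, giving $\HI(R)=1-\exp(-\pi\lambda_{th}R^2)$; this reproduces the classical void-probability law of a homogeneous PPP of intensity $\lambda_{th}$ and thus serves as an independent check. Finally, inserting $\HI(\Rdd)$ into Theorem \ref{CachehitvsSCDF} via (\ref{avg_hit_rate_pi}) yields $\mathbb{E}_{\sf Indep.}[F(Z)]=\mathbb{E}_{\mathcal{I}}[\HI(\Rdd)]=\sum_i p_r(i)\big(1-\exp(-\pi\lambda_{th,i}\Rdd^2)\big)$, where $\lambda_{th,i}$ denotes the retained intensity for item $i$.

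The step I expect to be the main obstacle is not computational but the bookkeeping of the retention factor. Proposition \ref{CTPP_SSCC} is written with the $p_0$-thinning absorbed (effectively $p_0=1$), whereas the PPP-PPP intensity satisfies $\lambda_{th}=\lambda p_0$, so the constant $\lambda_{th}/\lambda$ in (\ref{PPPthinning}) is precisely this reinstated $p_0$. One must also justify taking the limit with the indicator kernel of Corollary \ref{MatIIsimplified} rather than the smooth kernel (\ref{fcfunction}), whose value $f(r,0,0)=e^{-cr}$ does not vanish, so that $\delta=0$ genuinely recovers the pure PPP rather than a residually interacting process.
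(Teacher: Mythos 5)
Your proposal is correct and follows essentially the same route as the paper, which presents this corollary as the $\delta=0$ specialization of Proposition \ref{CTPP_SSCC}: the exponent vanishes since $\pi(m+n)^2=0$ and $l_2(r,0)=0$, leaving the constant retention probability $\lambda_{th}/\lambda$, which then integrates in (\ref{HRintegral}) to the standard PPP void probability. Your added remarks on reinstating the $p_0$ factor and on using the indicator kernel rather than the smooth kernel (\ref{fcfunction}) are careful and consistent with the paper's intent.
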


The contact distribution function for MatII, i.e., the CDF of the contact distance from a typical point in MatII to its nearest point in the same process, is derived in \cite[Eq. (10)]{al2016nearest}. 
The CDF of the PPP-MatII contact distance is larger than the MatII-MatII contact distance.

\begin{cor}{\bf {\sf PPP}-{\sf MatII}.} 
The conditional thinning Palm-probability for PPP-MatII is given as 
\begin{align}
\label{MatIIfixedexclusion}
    \etaM(r,\delta)&=\int\nolimits_{0}^{1}\exp{\left(-u\lambda (\pi\delta^2-l_2(r,\delta))\right)}{\rm d}u \nonumber\\
    &=\frac{1-\exp{\left(-\lambda (\pi\delta^2-l_2(r,\delta))\right)}}{\lambda (\pi\delta^2-l_2(r,\delta))}.
\end{align}
\end{cor}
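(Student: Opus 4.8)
The plan is to read off this corollary as the deterministic-mark specialization of Proposition \ref{CTPP_SSCC}, the regime in which $\GEC$ degenerates to the classical hard-core MatII process. Following the convention of Corollary \ref{MatIIsimplified}, I would take the mark law $\mu$ to be a point mass at $\delta/2$, so that the deletion kernel $f(\cdot,m,n)=1_{[0,m+n]}(\cdot)$ produces a single fixed pairwise exclusion distance $m+n=\delta$. Under this choice the two mark integrals $\int_{\mathbb{R}}(\cdots)\,\mu({\rm d}n)$ and $\int_{\mathbb{R}}(\cdots)\,\mu({\rm d}m)$ appearing in Proposition \ref{CTPP_SSCC} collapse to evaluation at the fixed radius, leaving only the weight integral over $u\in[0,1]$.

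The next step is to evaluate the exclusion-area integral inside the exponent. Specialized to the fixed marks, the identity $\int_{\mathbb{R}^2}h(||x||,m,n)\,{\rm d}x=\pi(m+n)^2-l_2(r,\cdot)$ of Proposition \ref{CTPP_SSCC} becomes $\pi\delta^2 - l_2(r,\delta)$: the first term is the area of the exclusion ball $B_x(\delta)$ of the retained point, and the subtracted term is the area of its overlap with the empty conditioning ball $B_{x_0}(r)$, the retained point sitting at distance $r$ from the center of that ball. That overlap is exactly the two-disk lens area $l_2(r,\delta)$ given in closed form by (\ref{AreaIntersectingCircles}) (radii $r$ and $\delta$, centers a distance $r$ apart). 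Substituting yields the first equality,
\begin{align}
\etaM(r,\delta)=\int_{0}^{1}\exp\!\left(-u\lambda\big(\pi\delta^2-l_2(r,\delta)\big)\right){\rm d}u. \nonumber
\end{align}
Carrying out the elementary weight integral then finishes the derivation: setting $a=\lambda\big(\pi\delta^2-l_2(r,\delta)\big)$, the primitive $\int_0^1 e^{-ua}\,{\rm d}u=(1-e^{-a})/a$ immediately delivers the second line of (\ref{MatIIfixedexclusion}).

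The step I expect to require the most care is the geometric bookkeeping in the second paragraph: I must confirm that the correction radius collapses to the full pairwise exclusion distance $\delta=m+n$ rather than to a half-radius, so that the effective exclusion area is $\pi\delta^2-l_2(r,\delta)$ and the conditioning on an empty $B_{x_0}(r)$ enters correctly through the overlap term. I would verify this by matching (\ref{AreaIntersectingCircles}) to the standard two-disk lens formula with the stated radii and center separation, and by cross-checking the induced contact distribution $\HM(R)$ through (\ref{HRintegral}) against the known MatII nearest-neighbor result of \cite[Eq. (10)]{al2016nearest}. I would also note that, unlike the PPP-PPP corollary where the residual $p_0$-thinning contributes the factor $\lambda_{th}/\lambda$, here $p_0=1$ since all thinning is produced by the exclusion mechanism.
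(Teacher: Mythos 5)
Your proposal is correct and follows exactly the route the paper intends: the corollary is the deterministic-mark specialization of Proposition \ref{CTPP_SSCC} (point mass at $\delta/2$ so that the two mark integrals collapse and $m+n=\delta$), followed by the elementary evaluation $\int_0^1 e^{-ua}\,{\rm d}u=(1-e^{-a})/a$ with $a=\lambda(\pi\delta^2-l_2(r,\delta))$; the paper offers no separate proof since it presents the result precisely as this special case (ultimately \cite[Eq.~(15)]{al2016nearest}). Your flagged concern about whether the overlap term collapses to $l_2(r,\delta)$ rather than $l_2(r,\delta/2)$ is well placed — the literal $l_2(r,n)$ in Proposition \ref{CTPP_SSCC} reads as a half-radius, but your resolution in favor of the full pairwise exclusion distance $m+n$ is the one consistent with both the corollary and the cited source.
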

Using (\ref{HRintegral}) and (\ref{MatIIfixedexclusion}), $\HM(R)$ can be computed, which in turn determines $\mathbb{E}_{\sf MatII}[F(Z)]$.

We next show that having a CDF 
on marks yields a more effective thinning than MatII does.
\begin{theo}\label{convexity}
The conditional thinning Palm-probabilities for $\GEC$ and MatII satisfy the relation
\begin{align}
\etaGM(r,\vec{\delta})\geq \etaM(r,\bar{m}),\nonumber 
\end{align}
where $\vec{\delta}=\{m\}$ is the set of marks in $\tilde{\Phi}$, with 
$\bar{m}=\mathbb{E}_{m}[m]$. 
\end{theo}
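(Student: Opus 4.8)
The plan is to collapse both conditional thinning Palm-probabilities onto a single scalar kernel of an ``effective exclusion area'' and then to read the claim as a statement about spreading the mark law. Set $\varphi(a)=\int_{0}^{1}e^{-u\lambda a}\,{\rm d}u=\frac{1-e^{-\lambda a}}{\lambda a}$. Proposition~\ref{CTPP_SSCC} writes $\etaGM(r,\vec{\delta})=\mathbb{E}_{m\sim\mu}[\varphi(\Lambda(m))]$ with $\Lambda(m)=\int_{\mathbb{R}}\int_{\mathbb{R}^2}h(\|x\|,m,n)\,{\rm d}x\,\mu({\rm d}n)$ the mark-averaged area seen by a point of radius-mark $m$, while the PPP-MatII corollary gives $\etaM(r,\bar m)=\varphi(\pi\bar m^2-l_2(r,\bar m))$, i.e. the very same kernel for the \emph{degenerate} mark law $\mu=\delta_{\bar m}$, where $\bar m=\mathbb{E}_m[m]$. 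Since MatII is exactly $\GEC$ with a deterministic mark, the theorem is the single inequality $\mathbb{E}_{m\sim\mu}[\varphi(\Lambda(m))]\ge\varphi(\Lambda_{\delta_{\bar m}})$: replacing the Dirac mark by any law $\mu$ with the same mean $\bar m$ can only increase the $\varphi$-averaged survival probability, which by Theorem~\ref{CachehitvsSCDF} then transfers to the cache-hit probability.

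The backbone is the convexity of the kernel. Differentiating twice under the integral sign in $\varphi(a)=\int_0^1 e^{-u\lambda a}\,{\rm d}u$ gives $\varphi'(a)=-\int_0^1 u\lambda\,e^{-u\lambda a}\,{\rm d}u<0$ and $\varphi''(a)=\int_0^1 u^2\lambda^2 e^{-u\lambda a}\,{\rm d}u>0$, so $\varphi$ is strictly decreasing and strictly convex on $(0,\infty)$; I would record this as a short lemma. The natural first move is then Jensen's inequality against the outer mark-average, together with the closed form $\int_{\mathbb{R}^2}h\,{\rm d}x=\pi(m+n)^2-l_2(r,n)$ of Proposition~\ref{CTPP_SSCC} and the monotonicity of the lens-overlap $l_2(r,\cdot)$ in~(\ref{AreaIntersectingCircles}), the goal being to show that the composite $m\mapsto\varphi(\Lambda(m))$ is convex so that averaging over $\mu$ dominates the deterministic $\bar m$ value. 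Framed this way the statement is a convex-order (mean-preserving spread) monotonicity of the map $\mu\mapsto\mathbb{E}_{m\sim\mu}[\varphi(\Lambda(m))]$.

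The hard part is that the mark law $\mu$ enters \emph{twice}: once in the outer average over $m$ and once inside $\Lambda(m)$ through the average over competing marks $n$. This mean-field coupling is what prevents a one-line argument, because the quadratic growth of the exclusion area in the radius mark means the inner $n$-average already inflates $\Lambda$ above its value at $\bar m$ by an amount of the order of the mark variance; a naive ``Jensen to the mean area'' therefore lands \emph{below} the MatII target and does not close the inequality on its own. The genuine content is that the convexity gain of $\varphi$ extracted by the outer average outweighs this upward shift. I would address this by establishing joint convexity of $(m,n)\mapsto\varphi\bigl(\int_{\mathbb{R}^2}h\,{\rm d}x\bigr)$ and then controlling the coupled term directly from the explicit quadratic-plus-lens form, and—where plain composite convexity is not by itself enough—by invoking the majorization framework of \cite{marshall1979inequalities}, since passing from $\delta_{\bar m}$ to any mean-preserving $\mu$ is a majorization step and the relevant functional is Schur-convex along it. This Schur-convex route is also what links the result to the decreasing-variance statement for $\Varpi{F(Z)}$ recorded after Theorem~\ref{CachehitvsSCDF}, and I expect verifying the coupled-average estimate to be the main obstacle.
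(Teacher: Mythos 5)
Your proposal is essentially the paper's own argument: Appendix~\ref{App:convexity} likewise writes $\etaGM(r,\vec{\delta})=\mathbb{E}_m[g(m)]$ with $g(x)=\frac{1-e^{-x}}{x}$ evaluated at the effective exclusion area $q(\lambda,r,m)$, verifies $g''>0$, and concludes by Jensen that $\mathbb{E}_m[g(m)]\geq g(\bar{m})=\etaM(r,\bar{m})$. The double appearance of $\mu$ that you single out as the main obstacle is dispatched in the paper simply by splitting $q(\lambda,r,m)=\lambda\pi(m^2+2m\bar{m}_2)+\lambda\,\mathbb{E}_{m_2}[\pi m_2^2-l_2(r,m_2)]$ and noting the inner-average contribution is an $m$-independent constant, so no majorization or Schur-convexity machinery is invoked; your observation that this constant carries the second moment $\mathbb{E}[m_2^2]$ rather than $\bar{m}_2^2$, so that the final identification $g(\bar{m})=\etaM(r,\bar{m})$ needs more care than either argument supplies, is a fair criticism of the paper's proof as written, not a defect of your route relative to it.
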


\begin{proof}
See Appendix \ref{App:convexity}.
\end{proof}

Exploiting Theorem \ref{convexity}, $\etaGM(r,\vec{\delta})$ is improved using a mixture of marks. The variable exclusion range model can suit to the case of cellular networks where the spatial demand is not uniform 
\cite{zhou2015spatial}. We numerically investigate the caching performance for non-uniform demand in Sect. \ref{experiments}.

To contrast 
the spatial performance and correlation of different caching models, we provide the functions describing the second-order behavior of the thinned p.p.'s in Appendix \ref{SecondOrderProperties}.

\section{Caching Optimization via Negative Association}
\label{cachingoptimization}

{\bf \em Spatial characterization via variances of point counts.} In this section, we contrast the performance of $\GEC$ with two baselines, in terms of first and second-order characterization of the cache hit probability. To that end, we devise a general negatively associated thinning model that embeds $\GEC$ as its special case which we detail in the sequel.

As the first baseline, we consider independent thinning. Then, the resulting p.p. for item $i$ is PPP with intensity $\lambda p_c(i)$, and from the Slivnyak-Mecke theorem \cite[Ch. 4.5]{Stoyan1996}, if $B=B_0(\Rdd)$, 
it is satisfied that $\lambda K(\Rdd) =\mathbb{E}[\Phi_{ppp,i}(B)]=\Var{\Phi_{ppp,i}(B)}=\lambda p_c(i) \pi \Rdds$. Optimal caching pmf for PPP that maximizes the cache hit probability has been computed in \cite{Blaszczyszyn2014}, via a probabilistic allocation that places exactly $N$ items at each node $x\in \Phi_{ppp,i}$. This guarantees that the cache constraint is satisfied with equality, i.e., $\mathbb{E}[C(x)]=\sum\nolimits_{i} p_c(i)=N$ and ${\rm Var}[C(x)]=0$. 

As the second baseline model, we consider the MatII model with intensity $\lambda_{\rm hcp}(i)=\frac{1-\exp(-\lambda \pi r_i^2)}{\pi r_i^2}$, i.e., the caching probability for item $i$ is $p_c(i)=\frac{\lambda_{\rm hcp}(i)}{\lambda}$. In this model, since the placement across the nodes is independent across the items, at each node $x\in \Phi_{th,i}$, we have that $\mathbb{E}[C(x)]=\sum\nolimits_{i} p_c(i)=N$, and ${\rm Var}[C(x)]=\sum\nolimits_{i} p_c(i)(1-p_c(i))$.

Next result puts an upper bound on the variances of point-counts of $\Phi_{th,i}$ when $\Rdd$ is large.

\begin{prop}\label{spatial_var}\cite[Ch. 4]{Stoyan1996} 
The variances of point-counts of $\Phi_{th,i}$ for large $B$ is bounded as
\begin{align}
\label{SpatialVarianceMatIIhardcore}
\mathrm{Var}\big[\Phi_{th,i}(B)\big] \leq \lambda_{\rm hcp}(i)\pi \Rdds\exp{(-\lambda\pi r_i^2)}.
\end{align}
\end{prop}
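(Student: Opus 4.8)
The plan is to reduce the count variance to a first- and second-order moment computation and then exploit the hard-core geometry of $\Phi_{th,i}$. First I would invoke the standard identity for the variance of the number of points of a stationary process in a bounded set (\cite[Ch.~4]{Stoyan1996}), written through the pair correlation function $g$ of $\Phi_{th,i}$:
\[
\Var{\Phi_{th,i}(B)}=\lambda_{\rm hcp}(i)\,|B|+\lambda_{\rm hcp}^2(i)\int_B\int_B\big(g(\|x-y\|)-1\big)\,{\rm d}x\,{\rm d}y,
\]
where $|B|=\pi\Rdds$ and $\mathbb{E}[\Phi_{th,i}(B)]=\lambda_{\rm hcp}(i)\pi\Rdds$ is exactly the prefactor appearing on the right-hand side of the claim. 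The target bound is thus equivalent to showing that the bracketed double integral supplies a sufficiently negative correction.

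Second, I would pass to the large-$B$ regime. Because $\Phi_{th,i}$ is a MatII thinning with exclusion radius $r_i$, its pair correlation has bounded range: two retained points are never closer than $r_i$, so $g\equiv 0$ on $[0,r_i)$, and the two exclusion disks become disjoint once $\|x-y\|\ge 2r_i$, so $g\equiv 1$ on $[2r_i,\infty)$. This finite range lets me replace the set-covariance weight $\nu_2(B\cap(B-z))$ by $|B|$ up to a boundary term that is $o(|B|)$ as $\Rdd\to\infty$, giving
\[
\Var{\Phi_{th,i}(B)}\;\approx\;|B|\Big(\lambda_{\rm hcp}(i)+\lambda_{\rm hcp}^2(i)\!\!\int_{\mathbb{R}^2}\!\!\big(g(\|z\|)-1\big)\,{\rm d}z\Big).
\]
Splitting the radial integral at $r_i$ and $2r_i$, the inner disk $\{\|z\|<r_i\}$ contributes exactly $-\pi r_i^2$ (since $g=0$ there) while the tail $\{\|z\|\ge 2r_i\}$ contributes nothing.

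Finally, I would collect terms using the intensity identity $\lambda_{\rm hcp}(i)\,\pi r_i^2=1-\exp(-\lambda\pi r_i^2)$: the disk term alone yields $\lambda_{\rm hcp}(i)-\lambda_{\rm hcp}^2(i)\pi r_i^2=\lambda_{\rm hcp}(i)\exp(-\lambda\pi r_i^2)$, which is precisely the claimed bound. What remains is the annulus $\{r_i\le\|z\|<2r_i\}$, and the inequality follows once its contribution $\int(g-1)$ is shown to be non-positive. This is the step I expect to be the main obstacle, because it is exactly here that the negative dependence of the hard-core thinning central to this section must be invoked, and one has to be careful: the MatII pair correlation does not remain below unity for every $r$ — it overshoots just past the core before settling back to $1$ — so a clean argument needs either the negative-association idealization of $\Phi_{th,i}$ adopted throughout, or an explicit estimate, via the two-point retention probability and the lens area $l_2(r,r_i)$ in (\ref{AreaIntersectingCircles}), that the deficit of the hole dominates the intermediate-range excess.
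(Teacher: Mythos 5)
Your derivation follows the paper's proof of Prop.~\ref{spatial_var} essentially line for line: the same variance identity through the pair correlation function, the same large-$B$ replacement of the set covariance by $|B|$, the same split of $\int_0^\infty (g_i(r)-1)r\,{\rm d}r$ at $r_i$ and $2r_i$, and the same use of the intensity identity $\lambda_{\rm hcp}(i)\pi r_i^2 = 1-\exp(-\lambda\pi r_i^2)$ to turn the hole contribution into the claimed exponential factor. The one step you leave open --- showing that the annulus $r_i\le r<2r_i$ contributes non-positively --- is exactly where the paper's proof is thinnest: step $(a)$ there simply asserts $\rho_i^{(2)}(r)\le\lambda_{\rm hcp}^2(i)$, i.e.\ $g_i(r)\le 1$ on the annulus, ``from (\ref{SOPD_MatII})'', with no further justification.

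Your skepticism about that pointwise bound is warranted, so the gap you flag is a genuine one that the paper does not close either. Writing $a=\pi r_i^2$ and $V=V_{r_i}(r)$, the standard MatII product density gives $g_i(r)=2a\big[V(1-e^{-\lambda a})-a(1-e^{-\lambda V})\big]\big/\big[(1-e^{-\lambda a})^2V(V-a)\big]$ on the annulus, and at $r=r_i^{+}$, where $V=(4\pi/3+\sqrt{3}/2)r_i^2\approx 1.61a$, this exceeds $1$ for every $\lambda a>0$ (about $1.11$ at $\lambda a=1$): the retention indicators of two parent points just outside the hard core are \emph{positively} correlated, so neither the pointwise bound $g_i\le1$ nor a blanket appeal to negative association is available. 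Moreover, as $\lambda a\to\infty$ one has $g_i(r)\to 2a/V_{r_i}(r)\ge 1$ on the entire annulus, so even the integrated inequality $\int_{r_i}^{2r_i}(g_i(r)-1)r\,{\rm d}r\le 0$ fails there; the asymptotic variance per unit area then tends to $2\pi a^{-2}\int_{r_i}^{2r_i}(2a/V_{r_i}(r)-1)r\,{\rm d}r>0$ while the right-hand side of (\ref{SpatialVarianceMatIIhardcore}) vanishes, so the stated bound cannot hold uniformly in $\lambda\pi r_i^2$. Completing the argument therefore requires the explicit lens-area estimate you allude to, restricted to a regime of moderate $\lambda\pi r_i^2$ in which the deficit from the hole provably dominates the intermediate-range excess; as written, both your proposal and the paper stop short of that.
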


\begin{proof}
See Appendix \ref{App:spatial_var}.
\end{proof}

From (\ref{SpatialVarianceMatIIhardcore}) spatial variance decays exponentially fast in $r_i^2$. This yields a more deterministic placement of rare items. Hence, MatII is spatially balanced, unlike independent placement.

The next propositions bound the average cache hit probability and its variance for MatII.

\begin{prop}\label{Avg_Hit_MHCP}
\cite{Malak2016twc} A lower bound on average cache hit rate for MatII is given as follows:
\begin{align}
F(Z)&\geq \sum\limits_{i=1}^{m_c}{p_r(i)[1-\exp(-\lambda_{\rm hcp}(i)\pi\Rdds)]}\nonumber\\
&+\sum\limits_{i=m_c+1}^{M}{p_r(i)\lambda_{\rm hcp}(i)\pi\Rdds},\nonumber
\end{align}
where $i\leq m_c: r_i<\Rdd$, i.e., the set $i= \{1,\hdots, m_c\}$ corresponds to the set of files within the communications range. 
An upper bound on average cache hit rate for MatII is given as follows:
\begin{align}
F(Z)\leq\sum\limits_{i=1}^{m_c}p_r(i)\Big[1-\exp(-\lambda_{\rm hcp}(i)\pi\Rdds) \nonumber\\
+\lambda^{-1}\int\nolimits_{B_0(\Rdd)}\rho_i^{(2)}(x){\rm d}x\Big]
+\sum\limits_{i=m_c+1}^{M}{p_r(i)\lambda_{\rm hcp}(i)\pi\Rdds}.\nonumber
\end{align}
\end{prop}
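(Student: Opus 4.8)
The plan is to reduce the claim to the item-wise void probabilities of the hard-core child processes. From the derivation in the proof of Theorem~\ref{CachehitvsSCDF}, the average caching gain obeys $\mathbb{E}[F(Z)]=\sum_i p_r(i)\,\mathbb{P}(\Phi_{th,i}(B)>0)$ with $B=B_0(\Rdd)$, and under MatII the $\Phi_{th,i}$ are constructed independently across items. I would therefore fix $i$, write $q_i:=\mathbb{P}(\Phi_{th,i}(B)>0)$, record the mean count $\mathbb{E}[\Phi_{th,i}(B)]=\lambda_{\rm hcp}(i)\pi\Rdds$, and bound $q_i$ separately on the two index ranges $i\le m_c$ (exclusion radius $r_i<\Rdd$) and $i>m_c$ (so $r_i\ge\Rdd$).

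For the lower bound on the in-range terms I would exploit the regularity of MatII: a hard-core, negatively associated thinning spreads its points, so it is less likely than a Poisson process of equal intensity to leave $B$ empty. Partitioning $B$ into fine cells and using negative association on the (decreasing) cell-emptiness events yields the void-probability domination $\mathbb{P}(\Phi_{th,i}(B)=0)\le\exp(-\lambda_{\rm hcp}(i)\pi\Rdds)$, hence $q_i\ge 1-\exp(-\lambda_{\rm hcp}(i)\pi\Rdds)$. For the rare items $i>m_c$ the exclusion radius meets or exceeds the communication range, so repulsion prevents more than one retained point from lying in $B$; then $\Phi_{th,i}(B)\in\{0,1\}$ and $q_i=\mathbb{E}[\Phi_{th,i}(B)]=\lambda_{\rm hcp}(i)\pi\Rdds$ exactly, which supplies the matching rare-item contribution in both bounds.

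For the upper bound I would use the union (Markov) bound $q_i\le\mathbb{E}[\Phi_{th,i}(B)]=\lambda_{\rm hcp}(i)\pi\Rdds$ on the rare terms, and a Poisson-approximation argument on the in-range terms. Specifically, I would control the deviation of $\mathbb{P}(\Phi_{th,i}(B)=0)$ from the Poisson value $\exp(-\lambda_{\rm hcp}(i)\pi\Rdds)$ by the reduced second factorial moment of $\Phi_{th,i}$, which for a stationary process is governed by the pair correlation and produces the error term $\lambda^{-1}\int_{B_0(\Rdd)}\rho_i^{(2)}(x)\,{\rm d}x$. This gives $\mathbb{P}(\Phi_{th,i}(B)=0)\ge\exp(-\lambda_{\rm hcp}(i)\pi\Rdds)-\lambda^{-1}\int_{B_0(\Rdd)}\rho_i^{(2)}(x)\,{\rm d}x$, and hence the stated upper bound after weighting by $p_r(i)$ and summing.

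The main obstacle is making the upper-bound correction precise. Two points are delicate: (i) justifying the void-probability comparison for MatII rigorously rather than heuristically, e.g.\ via the factorial-moment expansion $\mathbb{P}(N=0)=\sum_k\frac{(-1)^k}{k!}\int_{B^k}\rho_i^{(k)}$ together with the sign and magnitude control that the hard-core constraint imposes on the higher-order correlation functions; and (ii) matching the Poisson-approximation (Stein--Chen) total-variation bound to the exact single-integral form $\lambda^{-1}\int_{B_0(\Rdd)}\rho_i^{(2)}(x)\,{\rm d}x$, where one must verify the stationarity reduction of the double integral $\int_B\int_B\rho_i^{(2)}(x-y)\,{\rm d}x\,{\rm d}y$ and the $\lambda^{-1}$ normalization. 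A secondary subtlety is the rare-item ``at most one point'' step, which is exact once $r_i$ exceeds the diameter $2\Rdd$ of $B$ and otherwise requires the second factorial moment of $\Phi_{th,i}$ over $B$ to be treated as negligible.
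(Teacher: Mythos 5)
Your proposal matches the paper's own proof in Appendix~\ref{App:Avg_Hit_MHCP} essentially step for step: the reduction to per-item void probabilities $\mathbb{P}(\Phi_{th,i}(B)>0)$, the negative-association domination of the MatII void probability by the equal-intensity Poisson void probability for $r_i<\Rdd$, the ``at most one retained point in $B$'' equality $\mathbb{P}(\Phi_{th,i}(B)>0)=\lambda_{\rm hcp}(i)\pi\Rdds$ for the rare items, and the second-order correction $\lambda^{-1}\int_{B_0(\Rdd)}\rho_i^{(2)}(x)\,{\rm d}x$ for the in-range upper bound (which the paper justifies only with a one-line appeal to Markov's inequality, so your Stein--Chen/factorial-moment route is a reasonable way to make that same term precise). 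The two subtleties you flag --- that the single-point argument is exact only once $r_i$ exceeds the diameter $2\Rdd$ rather than $\Rdd$, and the normalization of the pair-correlation integral --- are real, but the paper's proof glosses over them as well, so they are not gaps relative to the published argument.
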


\begin{proof}
See Appendix \ref{App:Avg_Hit_MHCP}.
\end{proof}

\begin{prop}\label{MHCP}
The variance of the cache hit probability of MatII is upper bounded as 
\begin{align}
    \mathrm{Var}[F(Z)]\leq \sum\limits_{i=1}^{M}{p_r^2(i)\Big(\frac{1}{e}+\pi\lambda p_c^2(i)(\Rdds-r_i^2)_+\Big)}.
\end{align}
\end{prop}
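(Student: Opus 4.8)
The plan is to reduce the variance of $F(Z)$ to a sum of per-item Bernoulli variances and bound each term. First I would invoke Theorem \ref{CachehitvsSCDF} to write $F(Z)=\sum_i p_r(i)\mathbbm{1}(\Phi_{th,i}(B)>0)$ with $B=B_0(\Rdd)$. Because the child processes $\{\Phi_{th,i}\}_i$ are independent across items (each item carries its own independent marks and weights), all cross-covariances vanish and the variance is additive,
\begin{align}
\Varpi{F(Z)}=\sum\nolimits_i p_r^2(i)\, q_i(1-q_i),\qquad q_i\eqdef \HMi(\Rdd)=\mathbb{P}(\Phi_{th,i}(B)>0),\nonumber
\end{align}
so it suffices to bound each $q_i(1-q_i)$ by $1/e+\pi\lambda p_c^2(i)(\Rdds-r_i^2)_+$.

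For the leading term I would use the representation (\ref{HRintegral}), writing $q_i=1-e^{-\Lambda_i}$ with $\Lambda_i=\int_0^{\Rdd}2\pi r\lambda\,\etaM(r,r_i)\,{\rm d}r\ge 0$. Then $q_i(1-q_i)=(1-e^{-\Lambda_i})e^{-\Lambda_i}\le \Lambda_i e^{-\Lambda_i}\le \sup_{t\ge 0}te^{-t}=1/e$, using $1-e^{-t}\le t$ and that $te^{-t}$ is maximized at $t=1$. Since the spatial term is nonnegative, this inequality alone already proves the stated bound, and it is tight precisely in the rare-item regime $r_i\ge\Rdd$, where $(\Rdds-r_i^2)_+=0$.

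To exhibit the spatial term as a genuine sharpening in the common-item regime $r_i<\Rdd$ (so that the bound parallels those derived for independent placement and $\GEC$), I would replace the crude $1-e^{-t}\le t$ step by a second-order estimate that keeps the pairwise structure of the hard-core process inside $B$. The hard-core exclusion makes the pair density of $\Phi_{th,i}$ vanish for separations below $r_i$, so the relevant pair contribution over $B=B_0(\Rdd)$ is supported on the ``free'' area $\pi(\Rdds-r_i^2)_+$ and carries pair intensity $\lambda^2 p_c^2(i)$; normalizing by $\lambda$ as in Proposition \ref{spatial_var} produces the correction $\pi\lambda p_c^2(i)(\Rdds-r_i^2)_+$. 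Summing the per-item bounds against $p_r^2(i)$ then yields the claim.

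The main obstacle is making this last step exact: the spatial coefficient requires the second-order product density of MatII restricted to a finite ball, together with a careful evaluation of the intersection-area integrals $l_2(r,\delta)$ in (\ref{AreaIntersectingCircles}), rather than the large-$B$ asymptotic behind Proposition \ref{spatial_var}. Both the truncation to $(\Rdds-r_i^2)_+$ and the prefactor $\pi\lambda p_c^2(i)$ hinge on this finite-window pair computation; by contrast, the independence decomposition and the $1/e$ bound are routine.
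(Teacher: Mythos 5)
Your proposal is correct, and in fact your first two paragraphs alone already prove the proposition by a more elementary route than the paper's. The paper starts from the same additive decomposition $\mathrm{Var}[F(Z)]=\sum_i p_r^2(i)\,q_i(1-q_i)$ with $q_i=\mathbb{P}(\Phi_{th,i}(B)>0)$, but it then bounds the two factors separately via Proposition \ref{Avg_Hit_MHCP}: the negative-association lower bound gives $1-q_i\le e^{-\lambda_{\rm hcp}(i)\pi\Rdds}$, the Markov-type upper bound gives $q_i\le \lambda_{\rm hcp}(i)\pi\Rdds+\lambda^{-1}\int_{B_0(\Rdd)}\rho_i^{(2)}(x)\,{\rm d}x$ for $r_i<\Rdd$, and multiplying out produces $xe^{-x}\le 1/e$ for the leading product plus a residual SOPD integral, which is then bounded by $\pi\lambda p_c^2(i)(\Rdds-r_i^2)_+$ using $\rho_i^{(2)}(r)\le\lambda_{\rm hcp}^2(i)$ together with $\rho_i^{(2)}(r)=0$ for $r<r_i$. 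So the spatial term in the statement is the leftover of the paper's upper bound on $q_i$, not a refinement of the $1/e$ term. Your observation that $q_i(1-q_i)\le 1/e$ directly (indeed $q_i(1-q_i)\le 1/4<1/e$ for any probability, so not even the exact representation $q_i=1-e^{-\Lambda_i}$ is needed) suffices because the spatial term is nonnegative; this proves the strictly stronger bound $\sum_i p_r^2(i)/e$. What the paper's longer route buys is an explanation of where the specific form $\pi\lambda p_c^2(i)(\Rdds-r_i^2)_+$ comes from; what your route buys is brevity and a sharper constant. Your third and fourth paragraphs are therefore unnecessary and slightly misdirected: adding a nonnegative term can only weaken an upper bound, so there is no finite-window pair-density computation that needs to be made exact in order to ``sharpen'' the result --- the obstacle you describe does not have to be overcome to establish the stated inequality.
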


\begin{proof}
See Appendix \ref{VarMHCII}. 
\end{proof}

We next formally introduce the concept of negative association and detail how to exploit it to devise caching models with desired second-order spatial properties. We then connect this notion to $\GEC$ to give a tight upper bound to cache over-utilization for $\GEC$.

{\bf \em Negative association and further implications in caching optimization.} 
In the rest of the section, we assume that $\{Z_i, \, 1\leq i\leq n\}$ be a feasible negatively associated sequence, and let $\{Z_i^*, 1\leq i\leq n\}$ be a sequence of independent random variables such that $Z_i^*$ and $Z_i$ have the same distribution for each $i=1,\hdots, n$. We next define negative association.

\begin{defi}\label{NAdef}{\bf Negative association \cite{DevPro1983}.}
A set of random variables $\{Z_i, \, 1\leq i\leq n\}$ is said to be negatively associated if for any two disjoint index sets $I, J\subseteq [n]$ and two functions $f,g$ both monotone increasing or both monotone decreasing, it holds
\begin{multline}
\label{NA_condition}
\Ex{f(Z_i: i\in I)\cdot g(Z_j: j\in J) }  \\ \leq \Ex{f(Z_i: i\in I)} \cdot \Ex{ g(Z_j: j\in J) }.
\end{multline}
\end{defi}

We note that for the caching gain function defined as in (\ref{generalcachecost}), $\GEC$ policy ensures that the caching variables $z_{x_ki}$, $x_k\in \Phi$, are negatively associated for all items $i$ in (\ref{GEC_cache_variable}) as (\ref{NA_condition}) holds. The next result bounds the cost of negatively associated caching that is a result of a coupling.

\begin{theo}\label{NAcostbound}{\bf Bounding the cost \cite[Theorem 1]{Shao2000}.}
For any convex function $f$ on $\mathbb{R}^1$
\begin{align}
\label{boundedfcost}
\mathbb{E}\Big[f\Big(\sum\limits_{i=1}^n Z_i\Big)\Big] \leq \mathbb{E}\Big[f\Big(\sum\limits_{i=1}^n Z_i^*\Big)\Big]
\end{align}
if the expectation on the right hand side of (\ref{boundedfcost}) exists. If $f$ is also non-decreasing, then 
\begin{align}
\label{maxboundedfcost}
\mathbb{E}\Big[f\Big(\max\limits_{1\leq k\leq n}\sum\limits_{i=1}^k Z_i\Big)\Big] \leq \mathbb{E}\Big[f\Big(\max\limits_{1\leq k\leq n}\sum\limits_{i=1}^k Z_i^*\Big)\Big]
\end{align}
provided that the expectation on the right hand side of (\ref{maxboundedfcost}) exists.
\end{theo}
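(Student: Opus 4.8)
The plan is to prove both inequalities by a coordinate-by-coordinate replacement argument that swaps each negatively associated $Z_i$ for its independent copy $Z_i^*$ one at a time, controlling the cost of each swap through Hoeffding's covariance identity. The single reusable ingredient is a two-variable comparison: if $(X,Y)$ are negatively associated (hence negatively quadrant dependent, so their joint CDF $H$ satisfies $H(x,y)\le F_X(x)F_Y(y)$ pointwise) and $Y'$ is an independent copy of $Y$ that is also independent of $X$, then $\mathbb{E}[f(X+Y)]\le \mathbb{E}[f(X+Y')]$ for convex $f$. I would first reduce to $f\in C^2$ with bounded second derivative by smoothing and monotone convergence; the affine part of $f$ contributes equally on both sides because the marginals of $Z_i$ and $Z_i^*$ agree, so only the genuinely convex part matters. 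I would then express the difference via Hoeffding's identity as $\iint f''(x+y)\,[H(x,y)-F_X(x)F_Y(y)]\,dx\,dy$, which is $\le 0$ since $f''\ge 0$ and $H-F_XF_Y\le 0$; equivalently, $(x,y)\mapsto f(x+y)$ is supermodular, which is exactly what makes the identity applicable under negative quadrant dependence.

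For the first inequality (\ref{boundedfcost}), I would telescope through the hybrid sums $T_k=Z_1^*+\dots+Z_k^*+Z_{k+1}+\dots+Z_n$, so that $T_0=\sum_i Z_i$ and $T_n=\sum_i Z_i^*$, and show $\mathbb{E}[f(T_{k-1})]\le \mathbb{E}[f(T_k)]$ for every $k$. Conditioning on the already-independent block $A=Z_1^*+\dots+Z_{k-1}^*$, which is independent of the remaining $Z_k,\dots,Z_n$, reduces the step to comparing $f_A(\cdot)=f(A+\cdot)$ at $B+Z_k$ versus $B+Z_k^*$, where $B=Z_{k+1}+\dots+Z_n$. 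The key structural fact is that $(Z_k,B)$ is negatively associated, because $B$ is a nondecreasing function of the disjoint subcollection $\{Z_{k+1},\dots,Z_n\}$ and negative association (Definition \ref{NAdef}) is preserved under monotone functions of disjoint index sets; applying the two-variable comparison with $X=B$ and $Y=Z_k$ closes the step, and summing the telescope yields (\ref{boundedfcost}).

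For the maximal inequality (\ref{maxboundedfcost}) I would argue by induction on $n$, and this is where the monotonicity of $f$ becomes essential. Writing $W=\max_{2\le k\le n}\sum_{i=2}^k Z_i$, the peeling identity $\max_{1\le k\le n}\sum_{i=1}^k Z_i = Z_1 + W^+$ isolates the first coordinate. Since $W^+$ is a nondecreasing function of $\{Z_2,\dots,Z_n\}$, the pair $(Z_1,W^+)$ is negatively associated, so the two-variable comparison replaces $Z_1$ by $Z_1^*$ in the correct direction. To replace the variables inside $W$, I would condition on $Z_1^*$ and set $g(w)=\mathbb{E}_{Z_1^*}[f(Z_1^*+w^+)]$, then check that $g$ is again nondecreasing and convex, being the average over $Z_1^*$ of the composition of the nondecreasing convex $w\mapsto w^+$ with the nondecreasing convex $f(Z_1^*+\cdot)$; the induction hypothesis for the function $g$ and the $n-1$ negatively associated variables $Z_2,\dots,Z_n$ then completes the step. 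The main obstacle is precisely this bookkeeping: at each stage one must verify that the conditioning leaves both a nondecreasing convex function and a negatively associated subfamily, since losing the monotonicity of $g$, or replacing a coordinate once the remaining block is no longer NA, would break the induction. The smoothing required for Hoeffding's identity is routine but should be stated carefully so that the passage back to general convex $f$ by monotone convergence is justified, using the assumed finiteness of the right-hand expectations.
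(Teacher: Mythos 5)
The paper does not prove this theorem; it imports it verbatim from Shao (2000) with a citation, so there is no in-paper argument to compare against. Your reconstruction --- the two-variable NQD/supermodular comparison via the Hoeffding--Cambanis--Simons--Stout identity, the hybrid-sum telescoping for (\ref{boundedfcost}), and the peeling induction $\max_{1\le k\le n}\sum_{i=1}^{k}Z_i = Z_1+W^+$ for (\ref{maxboundedfcost}) --- is correct and is essentially Shao's original argument, including the right appeal to closure of negative association under nondecreasing functions of disjoint index sets.
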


The following observation on the caching gain function defined in (\ref{generalcachecost}) is immediate.
\begin{prop}\label{NA}
$F(Z)$ is convex if $\{z_{q_{k'} i}\}_{k'\in\{1,\hdots,k\}}$'s are negatively associated for all items $i$ \cite{Dubhashi1996p2}. 
\end{prop}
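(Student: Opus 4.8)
The plan is to reduce $F(Z)$ to a non-negative combination of \emph{univariate} functions of the partial sums of the caching variables, and then to read off the convexity claim from the integer-lattice structure of those functions together with the stability of convexity under non-negative weighting and expectation. Fix an item $i$ and a path $q$, and define the partial sum $S_{k,i}=\sum\nolimits_{k'=1}^{k} z_{q_{k'}i}$, i.e., the number of copies of item $i$ cached among the first $k$ nodes of $q$. The binary identity (\ref{product_observation}) lets me rewrite each summand of (\ref{generalcachecost}) through this scalar: $\prod\nolimits_{k'=1}^{k}(1-z_{q_{k'}i})=\mathbbm{1}\{S_{k,i}=0\}$, so that the complementary (miss) contribution is $\phi_k(S_{k,i})$ with $\phi_k(s)=\mathbbm{1}\{s=0\}$, and, by step v, $F(Z)$ equals the constant $\mathbb{E}_{\mathcal{I}}[\sum\nolimits_k w_k]$ minus the cost functional $\mathbb{E}_{\mathcal{I}}[\sum\nolimits_k w_k\,\phi_k(S_{k,\mathcal{I}})]$. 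Thus $F$ depends on $Z$ only through the sums $\{S_{k,i}\}$, and the relevant convexity is that of these one-dimensional blocks.

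First I would verify convexity of a single block. On the non-negative integers the map $\phi_k(s)=\mathbbm{1}\{s=0\}$ has increments $\phi_k(s+1)-\phi_k(s)$ equal to $-1,0,0,\dots$, which are non-decreasing, so $\phi_k$ is a univariate convex function applied to the partial sum $S_{k,i}$ of the caching variables. This is \emph{precisely} the object appearing in Theorem \ref{NAcostbound}, namely a convex $f$ on $\mathbb{R}^1$ evaluated at a sum as in (\ref{boundedfcost})--(\ref{maxboundedfcost}). Next I would combine blocks: since the coverage weights $w_k\ge 0$ and the popularity weights $p_r(i)\ge 0$, and convexity is preserved under non-negative linear combinations and under $\mathbb{E}_{\mathcal{I}}[\cdot]$, the cost functional is convex in the sums. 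Finally, the hypothesis that $\{z_{q_{k'}i}\}_{k'}$ is negatively associated for every $i$ is exactly what makes each $S_{k,i}$ a sum of NA variables, so the convex-order framework of \cite{Dubhashi1996p2} applies and the statement is, as the authors note, immediate once the reduction to partial sums is carried out.

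The main obstacle is the multilinearity of the raw summand $1-\prod\nolimits_{k'=1}^{k}(1-z_{q_{k'}i})$: as a function of the full vector $Z$ it is neither convex nor concave (its Hessian is indefinite already for two nodes), so one cannot argue pointwise convexity in $Z$. The crux is therefore the passage to the scalar partial-sum representation $S_{k,i}$, where the relevant building block is genuinely a univariate convex function of a sum, matching the hypothesis of Theorem \ref{NAcostbound}. A secondary subtlety, which I would state explicitly, is the gain-versus-cost bookkeeping: the hit term $\mathbbm{1}\{S_{k,i}\ge 1\}=1-\phi_k(S_{k,i})$ is the affine complement of $\phi_k$, so it is the complementary cost functional that carries the convexity consumed by the coupling bound, and one must track this sign so that the convex direction established is the one actually used when controlling cache over-utilization.
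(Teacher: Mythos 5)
Your route is genuinely different from the paper's, and with two repairs it lands in the same place. The paper's proof does not attempt to show that $F$ is convex as a function of $Z$; it proves the Jensen-direction inequality $\mathbb{E}[F(Z)]\geq F(\mathbb{E}[Z])$ in one line by applying the negative-association product inequality $\mathbb{E}\big[\prod_{k'}(1-z_{q_{k'}i})\big]\leq\prod_{k'}\big(1-\mathbb{E}[z_{q_{k'}i}]\big)$ (valid because each $1-z_{q_{k'}i}$ is a nonincreasing function of a single coordinate) directly inside (\ref{generalcachecost}). You instead collapse each product to $\mathbbm{1}\{S_{k,i}=0\}$ with $S_{k,i}=\sum_{k'\leq k}z_{q_{k'}i}$ and route the comparison through the convex-order coupling of Theorem \ref{NAcostbound}. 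That works, but two points must be made explicit. First, Theorem \ref{NAcostbound} is stated for convex $f$ on $\mathbb{R}^1$, and $\mathbbm{1}\{s=0\}$ is not convex on $\mathbb{R}$ (check the points $-1,0,1$); you must replace it by its convex extension $(1-s)_+$, which agrees with it on the nonnegative integers where $S_{k,i}$ lives, before invoking (\ref{boundedfcost}) to conclude $\mathbb{P}(S_{k,i}=0)\leq\mathbb{P}(S^*_{k,i}=0)=\prod_{k'}(1-\mathbb{E}[z_{q_{k'}i}])$. Second, you should close the loop rather than stopping at ``the statement is immediate'': your chain yields $\mathbb{E}[F(Z)]\geq\mathbb{E}[F(Z^*)]=F(\mathbb{E}[Z])$, which is precisely the inequality the paper labels convexity and uses downstream, whereas your representation of $F$ as a constant minus $\sum_k w_k\phi_k(S_{k,i})$ literally exhibits $F$ as \emph{concave} in the partial sums (indeed $1-\mathbbm{1}\{s=0\}=\min\{1,s\}$ on the integers, which is the concave surrogate $L(Z)$ the paper introduces later), so the conclusion should be stated as the Jensen inequality, not as convexity of $F$ in $Z$. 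What your approach buys is that the same partial-sum reduction immediately gives the convex-order comparison for an arbitrary convex test function, which is the stronger statement powering the variance and over-utilization bounds; what the paper's approach buys is brevity, since it needs only the elementary monotone-product inequality for negatively associated variables rather than the full coupling theorem.
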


\begin{proof}
Using (\ref{product_observation}), we provide the first-order characterization of $f_{z_i}(1,\hdots, k)$ as 
$\mathbb{E}\Big[f_{z_i}(1,\hdots, k)\Big]
=\mathbb{P}(\Phi_{th,i}(B)=0)$. Exploiting the caching gain function in (\ref{generalcachecost}), observe that the following holds: 
\begin{align}
\label{NA_avg_cache_rate_bound}
&\mathbb{E}[F(Z)]=\mathbb{E}_{\mathcal{I}}\Big[\sum\limits_{k=0}^{\infty}w_k{\left(1-\mathbb{E}\left[f_{z_{\mathcal{I}}}(1,\hdots, k)\right]\right)} \Big]\\
&\overset{(a)}{\geq} \mathbb{E}_{\mathcal{I}}\Big[\sum\limits_{k=0}^{\infty}w_k{\Big(1-\prod\limits_{k'=1}^k\left(1-\mathbb{E}[z_{q_{k'} \mathcal{I}}]\right)\Big)} \Big]=F(\mathbb{E}[Z]), \nonumber
\end{align}
where $(a)$ is due to $\mathbb{E}\Big[f_{z_i}(1,\hdots, k)\Big]\leq\prod\nolimits_{k'=1}^k\big(1-\mathbb{E}[z_{q_{k'} i}]\big)$ as $z_{q_{k'} i}$'s are negatively associated.  
\end{proof}

From Prop. \ref{NA}, $\mathbb{E}[F(Z)]\geq F(\mathbb{E}[Z])$. The expected cache hit probability obtained via negatively associated placement upper bounds the independent placement solution with probabilities $\mathbb{E}[z_{q_{k'} i}]$. Negative association has desirable properties in terms of sampling and concentration. Some important results that hold for independent variables, e.g., the Chernoff-Hoeffding bounds, and the Kolmogorov's inequality \cite{Dubhashi1996p2}, \cite{wajc2017negative}, also hold for negatively associated variables. Hence, Prop. \ref{NA} justifies that negative association across caches is desirable.

\begin{prop}
\label{NegDepPlacement}
The average cache hit probabilities of negatively associated and independent caching policies satisfy the relation $\Ex{F(Z)}\geq\Ex{F(Z^*)}$.
\end{prop}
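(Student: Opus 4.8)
The plan is to derive this as an almost immediate consequence of Proposition~\ref{NA}, combined with the observation that for a policy whose placement variables are mutually \emph{independent}, the expected cache hit probability coincides exactly with the deterministic evaluation $F(\mathbb{E}[Z])$. Concretely, since $Z_i^*$ and $Z_i$ are constructed to share the same marginal distribution for every $i$, the two policies have identical mean placement vectors, $\mathbb{E}[Z^*]=\mathbb{E}[Z]$ componentwise, so it suffices to show $\Ex{F(Z^*)}=F(\mathbb{E}[Z])$ and then invoke the convexity bound $\Ex{F(Z)}\geq F(\mathbb{E}[Z])$ already established inside the proof of Proposition~\ref{NA}.

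First I would expand $\Ex{F(Z^*)}$ using the caching gain function~(\ref{generalcachecost}):
\begin{align}
\Ex{F(Z^*)}=\mathbb{E}_{\mathcal{I}}\Big[\sum_{k}w_k\big(1-\mathbb{E}[\textstyle\prod_{k'=1}^k(1-z^*_{q_{k'}\mathcal{I}})]\big)\Big].\nonumber
\end{align}
Because the variables $\{z^*_{q_{k'}i}\}_{k'}$ are independent by construction, the expectation of the product factorizes, $\mathbb{E}[\prod_{k'=1}^k(1-z^*_{q_{k'}\mathcal{I}})]=\prod_{k'=1}^k(1-\mathbb{E}[z^*_{q_{k'}\mathcal{I}}])$, which is exactly the product term appearing in $F(\mathbb{E}[Z^*])$. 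Using the matched marginals $\mathbb{E}[z^*_{q_{k'}i}]=\mathbb{E}[z_{q_{k'}i}]$, this yields $\Ex{F(Z^*)}=F(\mathbb{E}[Z^*])=F(\mathbb{E}[Z])$. Chaining this identity with the inequality $\Ex{F(Z)}\geq F(\mathbb{E}[Z])$ from Proposition~\ref{NA} then gives $\Ex{F(Z)}\geq\Ex{F(Z^*)}$, as claimed.

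The genuine content sits entirely in Proposition~\ref{NA}: the bound $\Ex{F(Z)}\geq F(\mathbb{E}[Z])$ rests on the negative association of $\{z_{q_{k'}i}\}_{k'}$ under $\GEC$, via $\mathbb{E}[\prod_{k'=1}^k(1-z_{q_{k'}i})]\leq\prod_{k'=1}^k(1-\mathbb{E}[z_{q_{k'}i}])$, which holds because each factor $1-z_{q_{k'}i}$ is a monotone (decreasing) function of its variable. Thus I do not expect a substantive obstacle here; the only point of care I anticipate is bookkeeping rather than mathematics. Specifically, I would confirm that the outer expectation $\mathbb{E}_{\mathcal{I}}$ and the (possibly infinite) sum over $k$ may be interchanged freely—justified since the summands are nonnegative and dominated by the convergent weight series $\sum_k w_k$—and that the monotonicity directions line up so the negative-association inequality is applied with the correct sign. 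Beyond verifying that the nonnegative weights $w_k$ are identical across the two policies, no further difficulty arises, so this proposition is effectively a corollary of the machinery already in place.
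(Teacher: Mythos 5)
Your argument is correct and rests on exactly the same two ingredients as the paper's proof: the negative-association product inequality $\mathbb{E}\bigl[\prod_{k'}(1-z_{q_{k'}i})\bigr]\leq\prod_{k'}(1-\mathbb{E}[z_{q_{k'}i}])$ (the paper cites it as \cite[Lemma 2]{Dubhashi1996p2}) and the factorization of the product expectation for the independent variables $Z^*$ with matched marginals. Routing the comparison through the intermediate quantity $F(\mathbb{E}[Z])$ via Proposition~\ref{NA} rather than contrasting $\Ex{F(Z)}$ and $\Ex{F(Z^*)}$ directly is only a repackaging, not a different method.
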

\begin{proof} 
If $\{Z_i, \, 1\leq i\leq n\}$ satisfy the negative association condition, then for any non-decreasing (or non-increasing) functions $g_j$, $j\in [n]$ \cite[Lemma 2]{Dubhashi1996p2}, $\mathbb{E}\Big[\prod\nolimits_{j\in[n]} g_j(Z_j)\Big] \leq \prod\nolimits_{j\in[n]} \Ex{g_j(Z_j)}$. The proof follows from contrasting the caching gain function in (\ref{generalcachecost}) of $\{Z_i\}$ versus  
$\{Z_i^*\}$.
\end{proof}

The following corollary from \cite{wajc2017negative} demonstrates that negatively associated placement policies have lower variance across nodes, hence are more stable than independent placement policies.
\begin{cor}\label{subadditivevariance} 
 \cite{wajc2017negative}. The variables $\{Z_i, \, 1\leq i\leq n\}$ satisfy $\mathrm{Var}\big[\sum\limits_{i=1}^n Z_i\big] \leq \sum\limits_{i=1}^n \mathrm{Var}[Z_i]$.
\end{cor}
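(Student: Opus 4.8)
The plan is to reduce the claim to the elementary fact that negative association forces every pairwise covariance to be non-positive, and then to invoke the standard decomposition of the variance of a sum. First I would expand
\[
\mathrm{Var}\Big[\sum\nolimits_{i=1}^n Z_i\Big] = \sum\nolimits_{i=1}^n \mathrm{Var}[Z_i] + \sum\nolimits_{i\neq j}\mathrm{Cov}[Z_i, Z_j],
\]
so that the entire statement hinges on controlling the off-diagonal covariance terms.

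Next I would establish that $\mathrm{Cov}[Z_i, Z_j]\le 0$ for every pair $i\neq j$. To do this I would apply Definition \ref{NAdef} with the disjoint singleton index sets $I=\{i\}$ and $J=\{j\}$, and with the monotone increasing test functions $f(z)=z$ and $g(z)=z$. The negative association condition (\ref{NA_condition}) then reads $\Ex{Z_i Z_j}\le \Ex{Z_i}\,\Ex{Z_j}$, which is exactly the statement $\mathrm{Cov}[Z_i, Z_j]\le 0$.

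Finally, substituting this sign bound into the variance decomposition, every cross term is non-positive and can only decrease the right-hand side, so the diagonal variances dominate and $\mathrm{Var}\big[\sum\nolimits_{i=1}^n Z_i\big]\le \sum\nolimits_{i=1}^n \mathrm{Var}[Z_i]$, as desired.

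The only delicate point, and the closest thing to an obstacle, is verifying that the identity maps are admissible as the monotone test functions in Definition \ref{NAdef} and that singleton index sets are permitted, so that the covariance sign follows immediately; once this is confirmed, the remainder is the routine quadratic expansion and requires no further estimates.
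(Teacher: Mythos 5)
Your proof is correct. The paper itself offers no proof of this corollary -- it is stated as a cited result from the referenced work on negative dependence -- so there is nothing to compare against; your argument is the standard one: singleton index sets $I=\{i\}$, $J=\{j\}$ with the identity test functions are admissible under Definition \ref{NAdef} (the identity is monotone increasing and distinct singletons are disjoint), giving $\mathrm{Cov}[Z_i,Z_j]\leq 0$ for all $i\neq j$, and the bilinear expansion of the variance of the sum then yields the claim.
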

From Cor. \ref{subadditivevariance}, we infer $\mathrm{Var}\big[\sum\nolimits_{k'=1}^k z_{q_{k'} i}\big] \leq \sum\nolimits_{k'=1}^k \mathrm{Var}\big[z_{q_{k'} i}\big]$. Since variance is sub-additive for negatively associated variables, upper bounding the number of times a network caches the same item in a Boolean coverage setting or in a path of the multi-hop scenario, implies less redundancy. Hence, negative association provides less variability within the coverage region, i.e., a more balanced allocation in a given path or in a Boolean coverage region such that an item is not cached in a redundant manner. For example, a more balanced allocation is given in \cite{AzaBroKarUpf1999}.

\begin{remark}
{\bf Use of negative association in caching  
\cite{wajc2017negative}.} When $\{Z_i\}$ is a feasible negatively associated solution that minimizes the convex function $f\Big(\sum\limits_{i=1}^n Z_i\Big)$, then the expected value of this solution is upper bounded by $\mathbb{E}\Big[f\Big(\sum\limits_{i=1}^n Z_i^*\Big)\Big]$, which is easier to compute as $Z_i^*$ are independent. 
\end{remark}

\begin{figure*}[t!]
    \centering
    \includegraphics[width=0.4\textwidth]{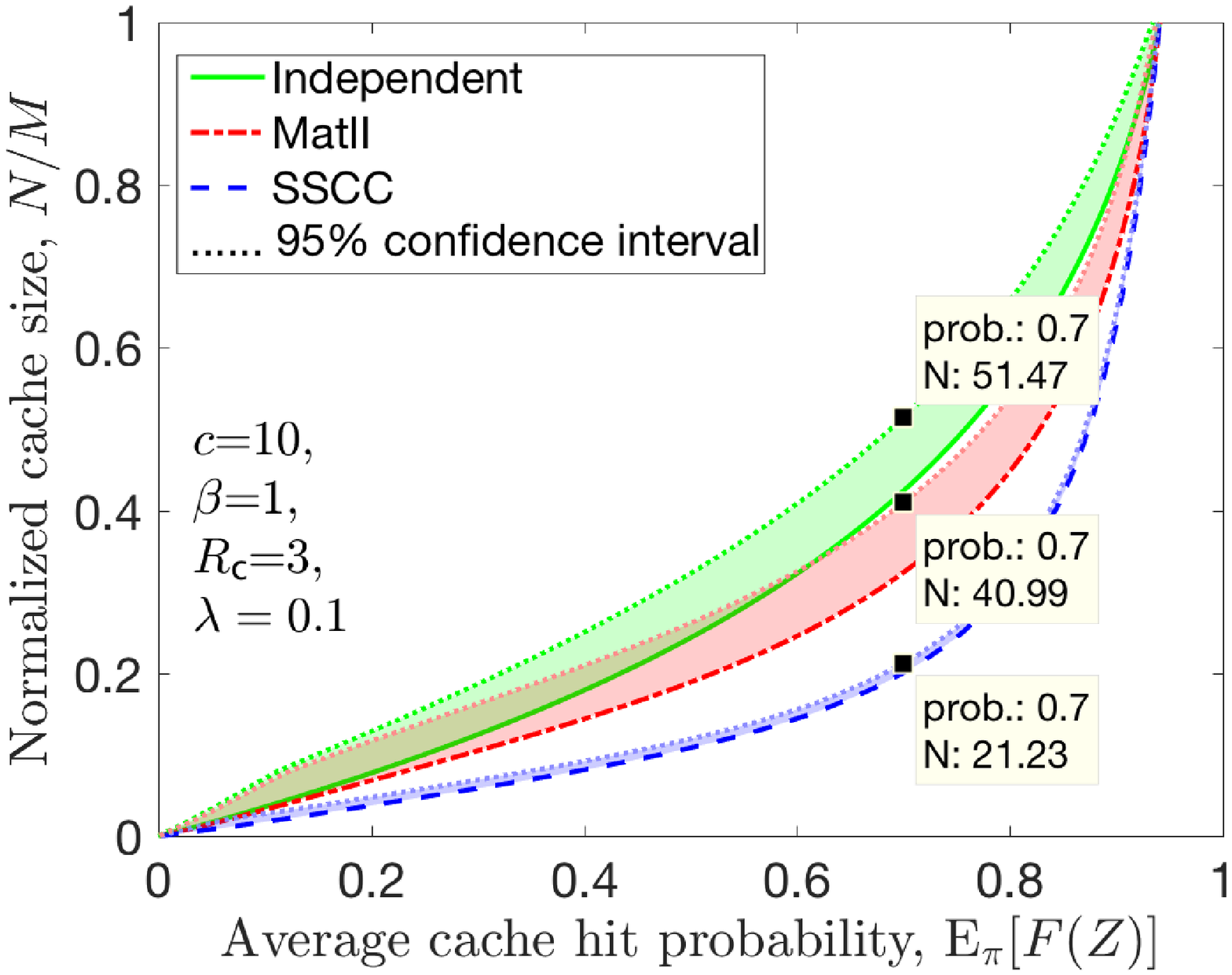}
    \includegraphics[width=0.4\textwidth]{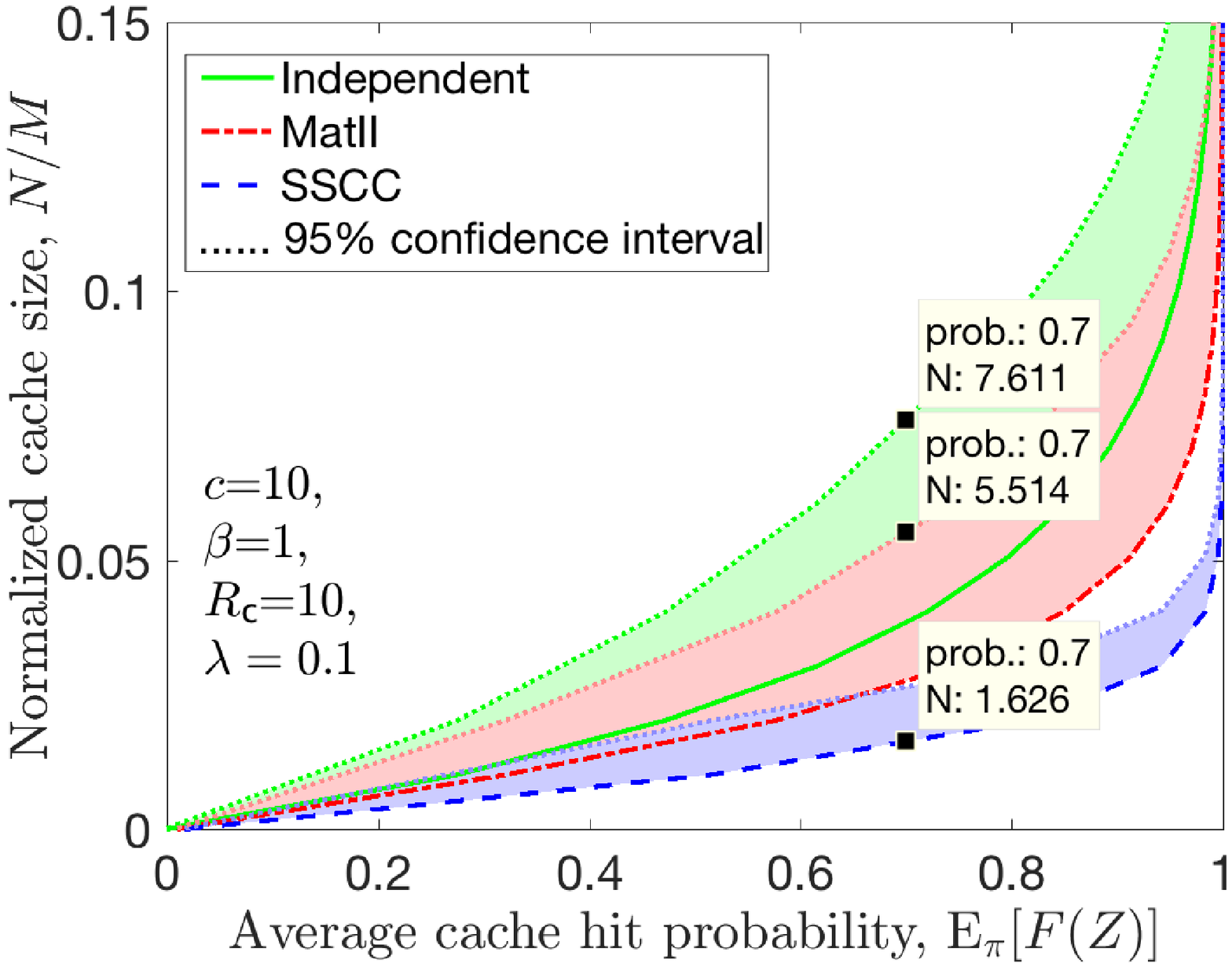}
    \caption{\small{The normalized cache size $N/M$ versus average cache hit rate for different placement policies. (Left) $\Rdd=3$, (Right) $\Rdd=10$. The intensity for the baseline process is $\lambda=0.1$. The request process is isotropic. The parameter $c$ for function $f(r,m,n)$ in (\ref{fcfunction}) of $\GEC$ is selected as indicated. The gamma-distributed mark distribution satisfies $\Gamma (\alpha=\mathbb{E}_{\mathcal{I}}(R_{\mathcal{I}})/\beta,\beta )$ with $\beta=1$, where $R_{\mathcal{I}}=\{R_i\}_i$ are the exclusion radii optimized for MatII. Note that the dotted curves correspond to the $95\%$ confidence intervals. The remaining curves correspond to the average of the normalized cache size $N/M$ for different content placement techniques. On the figures, we have marked the 95th percentile of the cache sizes $N$ required to achieve an average cache hit probability of $0.7$.}}
    \label{fig:performance1}
\end{figure*}

\begin{prop}\label{var_neg_assoc}
The variances of the cache hit probabilities of negatively associated and independent caching policies satisfy the relation $\mathrm{Var}[F(Z)]\leq \mathrm{Var}[F(Z^*)]$.
\end{prop}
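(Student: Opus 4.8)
The plan is to reduce the comparison to a per-item statement and then control the sign of a Bernoulli-type variance. First I would invoke the representation used in the proof of Theorem~\ref{CachehitvsSCDF}, namely $F(Z)=\sum_{i}p_r(i)\,\mathbbm{1}(\Phi_{th,i}(B)>0)$ with $B=B_0(\Rdd)$. Since the thinned processes $\Phi_{th,i}$ are independent across items, the hit indicators are independent, and the variance decomposes exactly as in the earlier display: $\mathrm{Var}[F(Z)]=\sum_{i}p_r^2(i)\,\Hpii(\Rdd)\big(1-\Hpii(\Rdd)\big)$. This representation is valid both for the negatively associated policy (with hit probabilities $H_i^{\sf NA}=\Hpii(\Rdd)$) and for the independent policy with matched marginals (with $H_i^{\sf Ind}$), because cross-item independence holds in either case. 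Hence it suffices to compare the per-item Bernoulli variances $H_i(1-H_i)$ term by term.

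Second, I would establish the per-item dominance $H_i^{\sf NA}\ge H_i^{\sf Ind}$. Writing the no-hit probability as $1-H_i=\mathbb{E}\big[\prod_{x\in B}(1-z_{xi})\big]$ and noting that each $x\mapsto 1-z_{xi}$ is monotone decreasing, the negative-association inequality used in the proof of Prop.~\ref{NegDepPlacement} (that $\mathbb{E}[\prod_j g_j(Z_j)]\le\prod_j\mathbb{E}[g_j(Z_j)]$ for monotone $g_j$) gives $1-H_i^{\sf NA}\le\prod_x(1-\mathbb{E}[z_{xi}])=1-H_i^{\sf Ind}$, i.e. $H_i^{\sf NA}\ge H_i^{\sf Ind}$. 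This is the single-item form of Prop.~\ref{NegDepPlacement} and uses only that the two policies share the per-node marginals $\mathbb{E}[z_{xi}]$, which is exactly the coupling defining $Z^*$. I would then feed this into the elementary identity, with $h(p)=p(1-p)$,
\begin{align}
h(H_i^{\sf NA})-h(H_i^{\sf Ind})=\big(H_i^{\sf NA}-H_i^{\sf Ind}\big)\big(1-H_i^{\sf NA}-H_i^{\sf Ind}\big),\nonumber
\end{align}
whose first factor is nonnegative by the dominance, so that the sign of $\mathrm{Var}[F(Z)]-\mathrm{Var}[F(Z^*)]=\sum_i p_r^2(i)\big[h(H_i^{\sf NA})-h(H_i^{\sf Ind})\big]$ is governed by the factors $1-H_i^{\sf NA}-H_i^{\sf Ind}$.

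The main obstacle is controlling this sign. Because $h$ is concave and only decreasing for $p\ge\tfrac12$, the desired inequality follows cleanly only when $H_i^{\sf NA}+H_i^{\sf Ind}\ge 1$ for every contributing item, i.e. in the high-hit regime; in the low-hit tail the difference can in fact reverse (a two-node, one-item exclusion example with small marginal $p$ shows $\mathrm{Var}[F(Z)]>\mathrm{Var}[F(Z^*)]$). I would therefore either state this regime condition explicitly, or split the catalogue into the head (the files within communication range, which dominate the hit probability and sit on the decreasing branch of $h$) and a tail whose contribution I bound separately using the exponential decay in $r_i^2$ of the spatial variance from Prop.~\ref{spatial_var}. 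This is also the regime of the design targets in Sect.~\ref{experiments} (average hit probabilities on the order of $0.7$), which justifies reading the statement in the operating regime of interest.

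It is worth noting why the tempting unconditional shortcut does not work, so as to motivate the per-item, sign-controlled route. One would like to argue at the level of $F$ via $\mathrm{Var}[F]=\mathbb{E}[F^2]-(\mathbb{E} F)^2$, combining $\mathbb{E}[F(Z)]\ge\mathbb{E}[F(Z^*)]$ (Prop.~\ref{NegDepPlacement}) with convexity of $F^2$ and Theorem~\ref{NAcostbound}. But Theorem~\ref{NAcostbound} applies to convex functions of the \emph{sum} $\sum_i Z_i$, not to the nonlinear $F^2$, and indeed the per-item decomposition shows that negative association \emph{increases} $\mathbb{E}[F^2]$ (both the diagonal $\sum_i p_r^2(i)H_i$ and the cross terms $\sum_{i\ne j}p_r(i)p_r(j)H_iH_j$ grow with the $H_i$). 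The subtle cancellation against $(\mathbb{E} F)^2$ is precisely the per-item comparison of $h$, which is why I would carry out the proof at the item level rather than through a single convexity inequality on $F$.
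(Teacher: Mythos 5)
Your proof takes essentially the same route as the paper's: after reducing to a per-item comparison of Bernoulli variances (the paper does this in the multi-hop representation, additionally bounding the cross-covariance terms by negative association, while you use the Boolean form where items are independent), both arguments hinge on the same difference-of-squares factorization $h(p)-h(q)=(p-q)(1-p-q)$ and both close only under the same high-hit-regime condition --- the paper's requirement $\prod_l\mathbb{E}\big[1-z^*_{q_l i}\big]\le 1/2$ is exactly your $H_i^{\mathsf{NA}}+H_i^{\mathsf{Ind}}\ge 1$. Your explicit acknowledgment that the inequality can reverse in the low-hit tail is a limitation the paper's own proof shares but states less candidly, so it is not a gap relative to the paper.
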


\begin{proof}
See Appendix \ref{App:var_neg_assoc}.
\end{proof}

{\bf \em A concave approximation.} The caching gain function $F(Z)$ in (\ref{generalcachecost}) is not concave. Following the approximation technique in \cite{AgeSvi2004}, a concave approximation of $F(Z)$ is given as
\begin{align}
L(Z)=\sum\limits_{(i,q)\in\mathcal{R}}{\lambda_{(i,q)}\sum\limits_{k=1}^{|q|-1}{w_{q_{k+1}q_k}(i,q)\min\Big\{1,\sum\limits_{l=1}^k z_{q_l i}\Big\}} }.\nonumber
\end{align}
From Theorem \ref{NAcostbound}, we infer that the approximate caching gain function satisfies $L(Z^*)\leq L(Z)$. 
We also note that  
the solution of the concave relaxation of the expected caching gain $L(Z^*)$ will be within $1-1/e$ factor from the optimal expected caching gain, through pipage rounding \cite{AgeSvi2004}. Furthermore, from Prop. \ref{NA}, we have $\mathbb{E}[F(Z)]\geq F(Z^*)$ where $Z^*=\mathbb{E}[Z]$ which implies that for any $Z$, there exist the set of variables $Z^*$ such that $\mathbb{E}[F(Z)]$ is guaranteed to be at least within a $1-1/e$ approximation from the optimal caching gain solution that can be achieved via $Z^*$.

Although there might be multiple policies that maximize $\mathbb{E}[F(X)]$, a negatively associated allocation can have a smaller variance than the two baselines (independent and MatII models). The variance of a caching policy can help characterize the loss of the algorithm in terms of cache over-provisioning. We next investigate how much over-provisioning we require for $\GEC$.

{\bf \em Cache over-utilization.} The cache placement requires $C(x)=\sum\nolimits_{i}\mathbbm{1}_{x\in\Phi_{th,i}}\leq N$, for all $x\in\Phi$, where $N$ is finite. This constraint is satisfied on average, i.e., $N=\mathbb{E}[C(x)]=\sum_i p(x,m_i,v,\Phi)$, $x\in\Phi$. However, the p.p.'s $\{\Phi_{th,i}\}_i$, $i=1,\hdots, M$ might overlap. We need to make sure that the cache capacities are not over-utilized. The cache violation probability is the probability that an enforced average cache capacity constraint $N$ is violated by an amount $\epsilon$, i.e., $\mathbb{P}(C(x)>N+\epsilon)$. Hence, the intersection of the sampled processes, i.e., $\cap_i \Phi_{th,i}$, should not include any $x\in\Phi$ more than $N$ times with high probability. We next provide an upper bound for the violation probability of the cache size for $\GEC$, exploiting the Chernoff bound.

\begin{prop} 
The cache violation probability of $\GEC$ is upper bounded as 
\begin{align}
    \mathbb{P}(C(x)>N+\epsilon)\leq \exp{\Big(\epsilon - (N+\epsilon)\log\Big(1+\frac{\epsilon}{N }\Big)\Big)},\nonumber
\end{align}
where $\epsilon>0$ is an arbitrarily small constant.
\end{prop}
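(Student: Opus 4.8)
The plan is to recognize $C(x)=\sum_{i=1}^{M} z_{xi}$ as a sum of Bernoulli indicators that are independent across the item index $i$: by construction each item $i$ is thinned using its own independently marked PPP $\tilde{\Phi}_i$, so with $p_i \eqdef \mathbb{E}[z_{xi}]=p(x,m_i,v,\Phi)$ we have $\sum_{i} p_i=N$ by the average cache-size constraint (\ref{avgcachesize}). The bound is then obtained by the exponential-moment (Chernoff) method, and the stated right-hand side is exactly the classical Poisson upper-tail bound with mean $N$ evaluated at $N+\epsilon$.

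First I would apply Markov's inequality to $e^{tC(x)}$ for an arbitrary $t>0$, giving $\mathbb{P}(C(x)>N+\epsilon)\le e^{-t(N+\epsilon)}\,\mathbb{E}[e^{tC(x)}]$. Next I would factor the moment generating function across items: since the $z_{xi}$ are independent $\mathrm{Bernoulli}(p_i)$, $\mathbb{E}[e^{tC(x)}]=\prod_{i}(1+p_i(e^{t}-1))$. Using $1+y\le e^{y}$ termwise together with $\sum_i p_i=N$, this is bounded by $\exp(N(e^{t}-1))$, so that $\mathbb{P}(C(x)>N+\epsilon)\le \exp\big(N(e^{t}-1)-t(N+\epsilon)\big)$ for every $t>0$.

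Then I would optimize the exponent over $t$. Differentiating $g(t)=N(e^{t}-1)-t(N+\epsilon)$ and setting $g'(t)=Ne^{t}-(N+\epsilon)=0$ yields the optimizer $t^\star=\log(1+\epsilon/N)$, which is positive for $\epsilon>0$. Substituting $e^{t^\star}=1+\epsilon/N$ gives $g(t^\star)=\epsilon-(N+\epsilon)\log(1+\epsilon/N)$, which is precisely the claimed bound.

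The only genuine subtlety — the main obstacle — is justifying the product form of the exponential moment. If one prefers not to invoke exact independence across items, the same factorization is reached through the negative-association machinery already developed: since $f(u)=e^{tu}$ is convex and non-decreasing, Theorem \ref{NAcostbound} gives $\mathbb{E}[e^{tC(x)}]=\mathbb{E}\big[f\big(\sum_i z_{xi}\big)\big]\le\mathbb{E}\big[f\big(\sum_i z_{xi}^\ast\big)\big]=\prod_i\mathbb{E}[e^{t z_{xi}^\ast}]$, where $z_{xi}^\ast$ are independent copies of $z_{xi}$. After this reduction, the remaining steps are the routine Poisson-type Chernoff optimization above, and the tightness of the final expression reflects that the Bernoulli moment generating function is being relaxed to its Poisson envelope.
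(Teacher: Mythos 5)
Your proof is correct and follows essentially the same route as the paper: both establish that $C(x)$ is a sum of item-wise independent Bernoulli indicators with total mean $N$ and then invoke the Chernoff bound, which the paper cites directly at its step $(b)$ while you carry out the standard exponential-moment optimization explicitly to arrive at the same Poisson-type tail expression. Your optional negative-association fallback via Theorem~\ref{NAcostbound} is a sound (if unnecessary, given the stated independence across items) extra safeguard.
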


\begin{proof} We can rewrite the cache violation probability as
\begin{align}
    &\mathbb{P}(C(x)>C)
    =\mathbb{P}\Big(\sum\limits_{i}\mathbbm{1}_{x\in\Phi_{th,i}}>C\Big)\nonumber\\
    &=\sum\limits_{A\subseteq S\,: |A|>C}\mathbb{P}(x\in\Phi_{th,i},\,i\in A,\, x\notin\Phi_{th,j},\,j\in A^{\mathsf{c}})\nonumber\\
    &\overset{(a)}{=}\sum\limits_{A\subseteq S\,: |A|>C}\,\,\prod\limits_{i\in A}\mathbb{P}(x\in\Phi_{th,i})\prod\limits_{j\in A^{\mathsf{c}}}(1-\mathbb{P}(x\in\Phi_{th,j}))\nonumber\\
    &\overset{(b)}{\leq} \exp{\Big(C-\sum\limits_i p(x,m_i,v,\Phi) - C\log\Big(\frac{C}{\sum\limits_i p(x,m_i,v,\Phi) }\Big)\Big)},\nonumber
\end{align}
where $C=N+\epsilon$ for $N=\mathbb{E}[C(x)]$ and $\epsilon>0$ is an arbitrarily small constant, $S$ is the set of all subsets of $\{1,\hdots, M\}$, and $A$ is a subset of $S$ and $A^{\mathsf{c}}$ is its complement, and $(a)$ is due to the independent sampling of points for each $i$, i.e., $\mathbb{P}(x\in\Phi_{th,i},\,x\in\Phi_{th,j})=\mathbb{P}(x\in\Phi_{th,i})\mathbb{P}(x\in\Phi_{th,j})$ for all $i\neq j$, and $(b)$ follows from employing the Chernoff bound.
\end{proof}

As the Chernoff bound does not capture the second-order characteristics of the sampling, we next provide another bound based on the Bernstein inequalities.

\begin{figure*}[t!]
    \centering
    \includegraphics[width=0.4\textwidth]{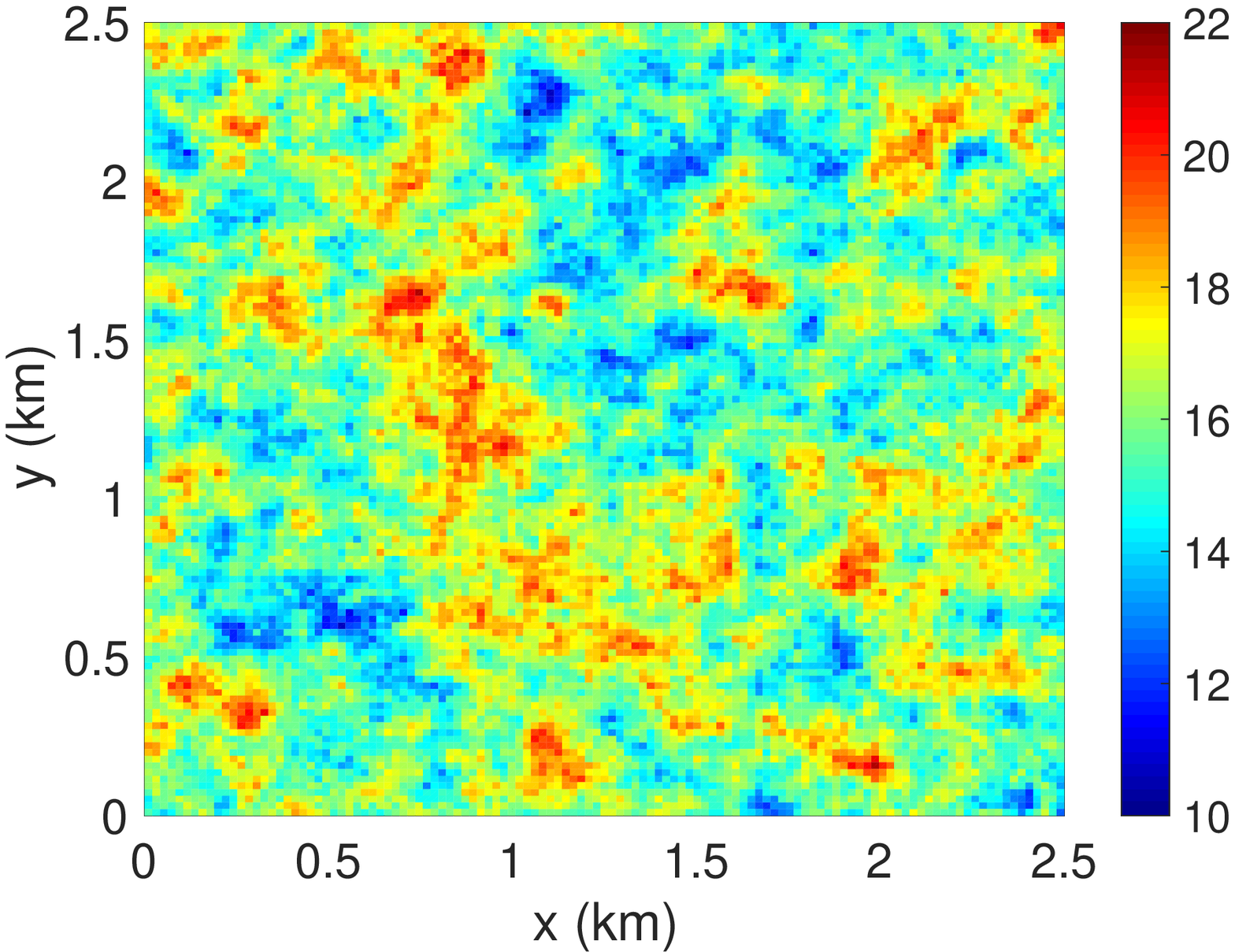}
    \includegraphics[width=0.4\textwidth]{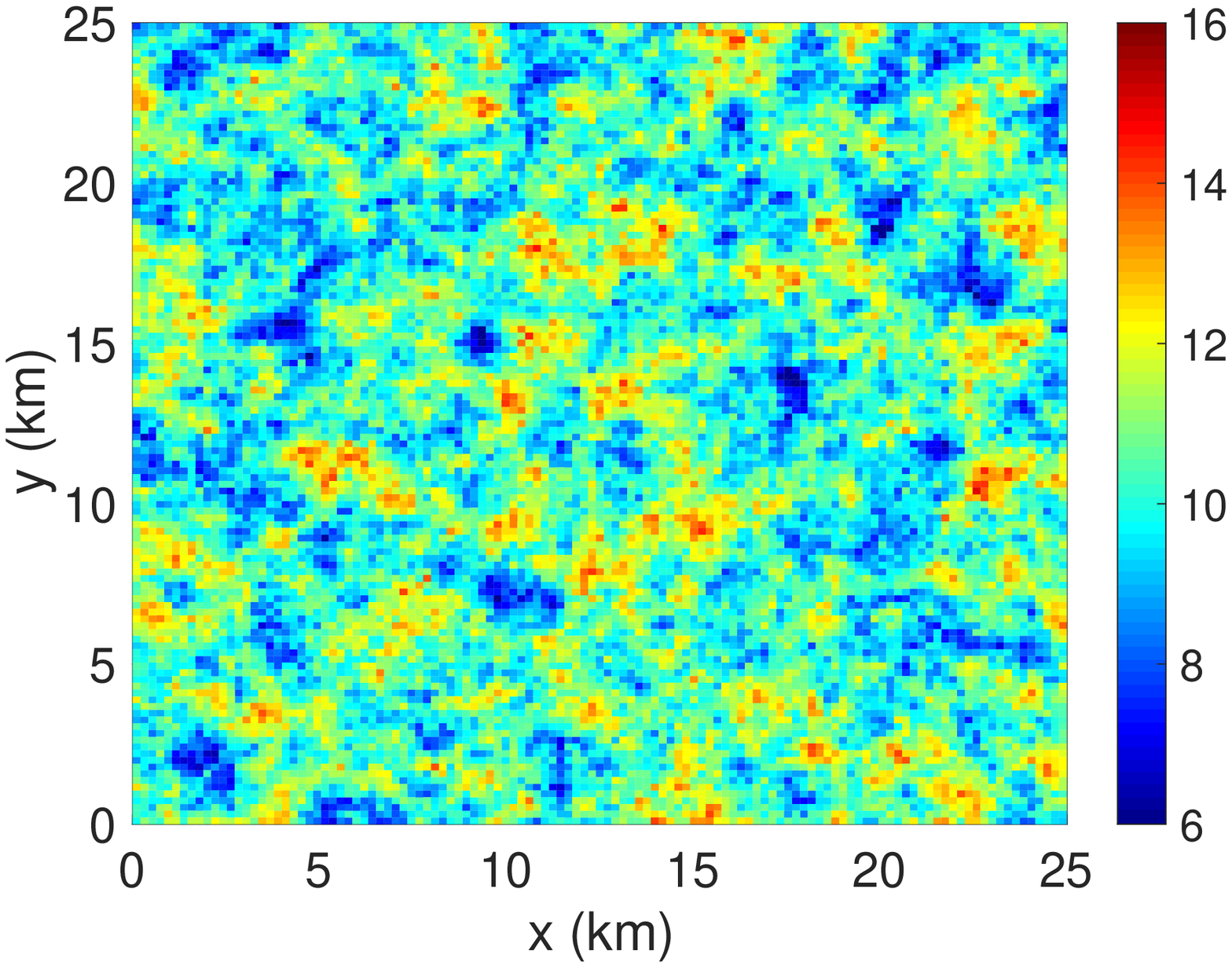}
    \caption{\small{The logarithm of the peak hour traffic density for the non-isotropic request model, which is generated based on actual traffic demand in \cite{zhou2015spatial}. (Left) Urban area. (Right) Rural Area.}}
    \label{fig:nonisotropic_request}
\end{figure*}

\begin{prop}
The cache violation probability of $\GEC$ is upper bounded as
\begin{align}
\label{Bernstein}
\mathbb{P}(C(x)>C) \leq \exp\Big( -\frac{(C-N)^2}{{\rm Var}[C(x)]+\frac{1}{3}(C-N)} \Big),
\end{align}
where ${\rm Var}[C(x)]=\sum\nolimits_{i=1}^M{\rm Var}[z_{xi}]$ since the placement is independent across items, where for each $i\in\{1,\hdots, M\}$
\begin{align}
&{\rm Var}[z_{xi}]=\mathbb{E}[z_{xi}^2]-\mathbb{E}[z_{xi}]^2\nonumber\\
&=p(x,m_i,v,\Phi)(1-p(x,m_i,v,\Phi)), \,\, x\in\Phi.\nonumber
\end{align}

\end{prop}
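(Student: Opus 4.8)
The plan is to recognize the cache occupancy $C(x)=\sum_{i=1}^M z_{xi}$ at a fixed node $x\in\Phi$ as a sum of \emph{independent, bounded} random variables and then invoke Bernstein's inequality for the upper tail. Each caching variable $z_{xi}=\mathbbm{1}\{x\in\Phi_{th,i}\}$ is Bernoulli with mean $p_i\triangleq p(x,m_i,v,\Phi)$ and variance $p_i(1-p_i)$, as recorded in the statement. Two structural facts drive the argument: (i) conditioned on the point configuration, the thinning for item $i$ is governed by its own independent family of marks $\{m_k^{(i)}\}$ and weights $\{v_k^{(i)}\}$, so the decisions $\{z_{xi}\}_{i=1}^M$ decouple across items---this is exactly the independence used in step $(a)$ of the preceding Chernoff bound, and it gives $\Var{C(x)}=\sum_{i=1}^M p_i(1-p_i)$; and (ii) each summand is bounded, since $z_{xi}\in\{0,1\}$ and $p_i\in[0,1]$ imply $|z_{xi}-p_i|\le 1$ almost surely.

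First I would center the variables, writing $C(x)-N=\sum_{i=1}^M (z_{xi}-p_i)$ with $N=\Ex{C(x)}=\sum_i p_i$, so that the cache-violation event $\{C(x)>C\}$ becomes the one-sided large-deviation event $\{\sum_i (z_{xi}-p_i)>C-N\}$ with deviation $t=C-N=\epsilon>0$. Next I would apply Bernstein's inequality to the independent, zero-mean, uniformly bounded summands $z_{xi}-p_i$, identifying the variance proxy $\sigma^2=\Var{C(x)}$ and the almost-sure bound $b=1$. Substituting $t=C-N$, $\sigma^2=\Var{C(x)}$ and $b=1$ into Bernstein's inequality then yields the stated estimate, with the $\tfrac13(C-N)$ term in the denominator arising directly from $b=1$ on the centered indicators; the precise numerical constant in the exponent is that of the particular form of Bernstein's inequality one adopts.

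Unlike the Chernoff bound, this estimate retains second-order information through $\Var{C(x)}$, which is the whole point of using Bernstein: for small $\epsilon$ the exponent scales like $-(C-N)^2/\Var{C(x)}$, so a placement with smaller per-node variance (as $\GEC$ provides relative to independent placement, cf.\ Cor.~\ref{subadditivevariance} and Prop.~\ref{var_neg_assoc}) violates the constraint less often.

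There is no genuinely hard step here---the result is a direct application of a standard concentration inequality---so the only point needing care is justifying the across-item independence and boundedness invoked above. The independence is the delicate one: it holds because the $M$ thinnings are driven by mutually independent mark/weight families once the baseline configuration is fixed, so it is cleanest to carry out the computation conditionally on $\Phi$ (and on the mark and weight attached to $x$), exactly as the variance expression $p_i(1-p_i)$ in the statement already presumes. As a robustness remark I would also note that even if one only had negative association across items rather than full independence, Bernstein-type tail bounds still hold for negatively associated variables \cite{wajc2017negative}, so the conclusion would be unchanged.
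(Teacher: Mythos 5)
Your proposal is correct and follows essentially the same route as the paper, which likewise proves the bound by applying Bernstein's inequality to the indicators $\mathbbm{1}\{x\in\Phi_{th,i}\}$, using their independence across items $i$. Your additional care about conditioning on the configuration and the remark on negative association are sound elaborations of the paper's one-line argument rather than a different approach.
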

\begin{proof}
It follows from employing Bernstein inequality as $\mathbbm{1}_{x\in\Phi_{th,i}}$ are independent  
across $i$.
\end{proof}

As ${\rm Var}[C(x)]$ drops, the upper bound in (\ref{Bernstein}) drops. Hence, the cache violation probability of $\GEC$ is negligible if the cache placement strategy has a very low-variance.  
While it is nontrivial to design 
negatively associated placement techniques to satisfy the cache size constraint with probability $1$, in Sect. \ref{experiments}, we demonstrate that for $\GEC$ the cache violation probability can be made negligibly small, and contrast its performance with independent caching and $\HEC$.

\begin{figure*}[t!]
    \centering
    \includegraphics[width=0.4\textwidth]{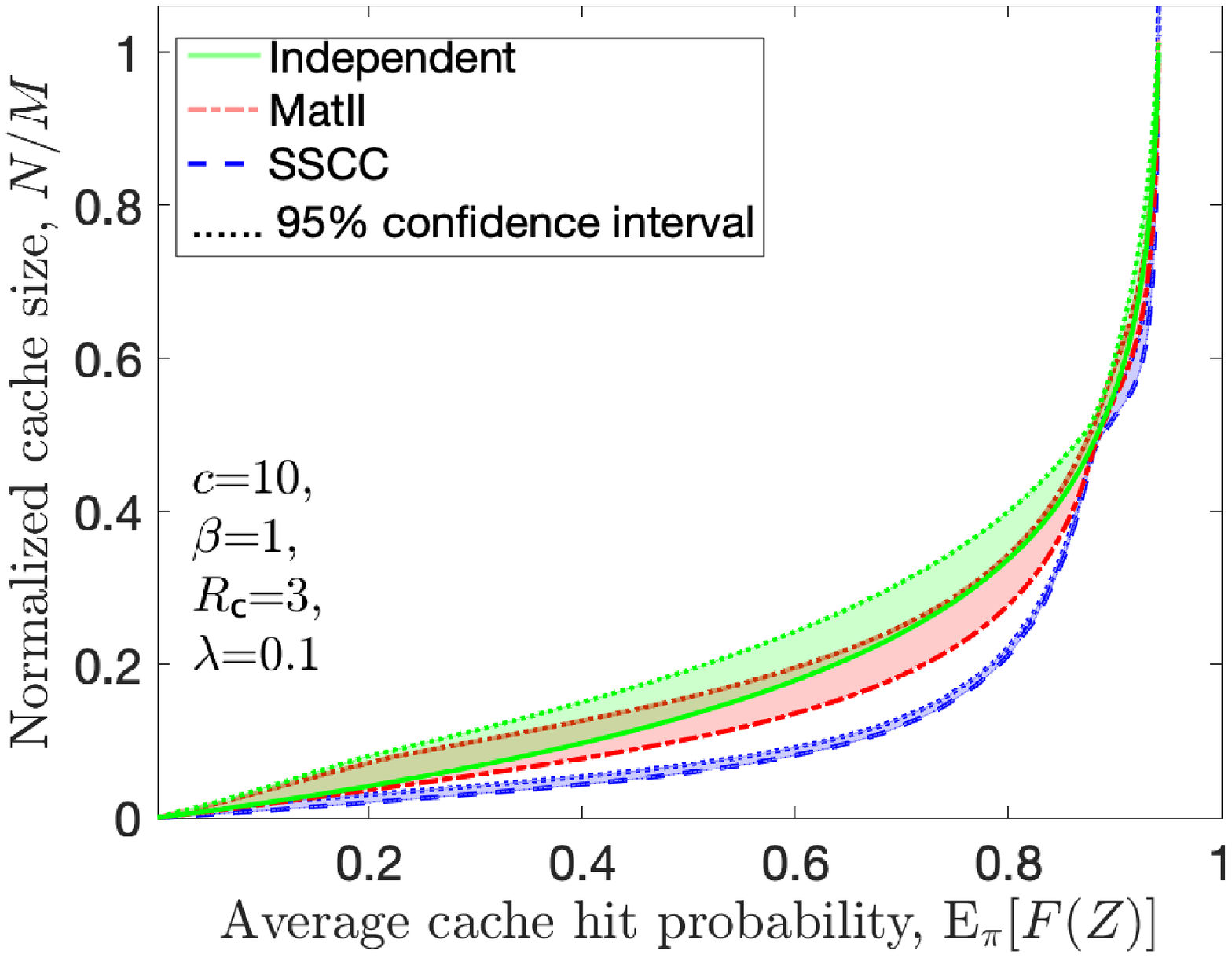}
    \includegraphics[width=0.4\textwidth]{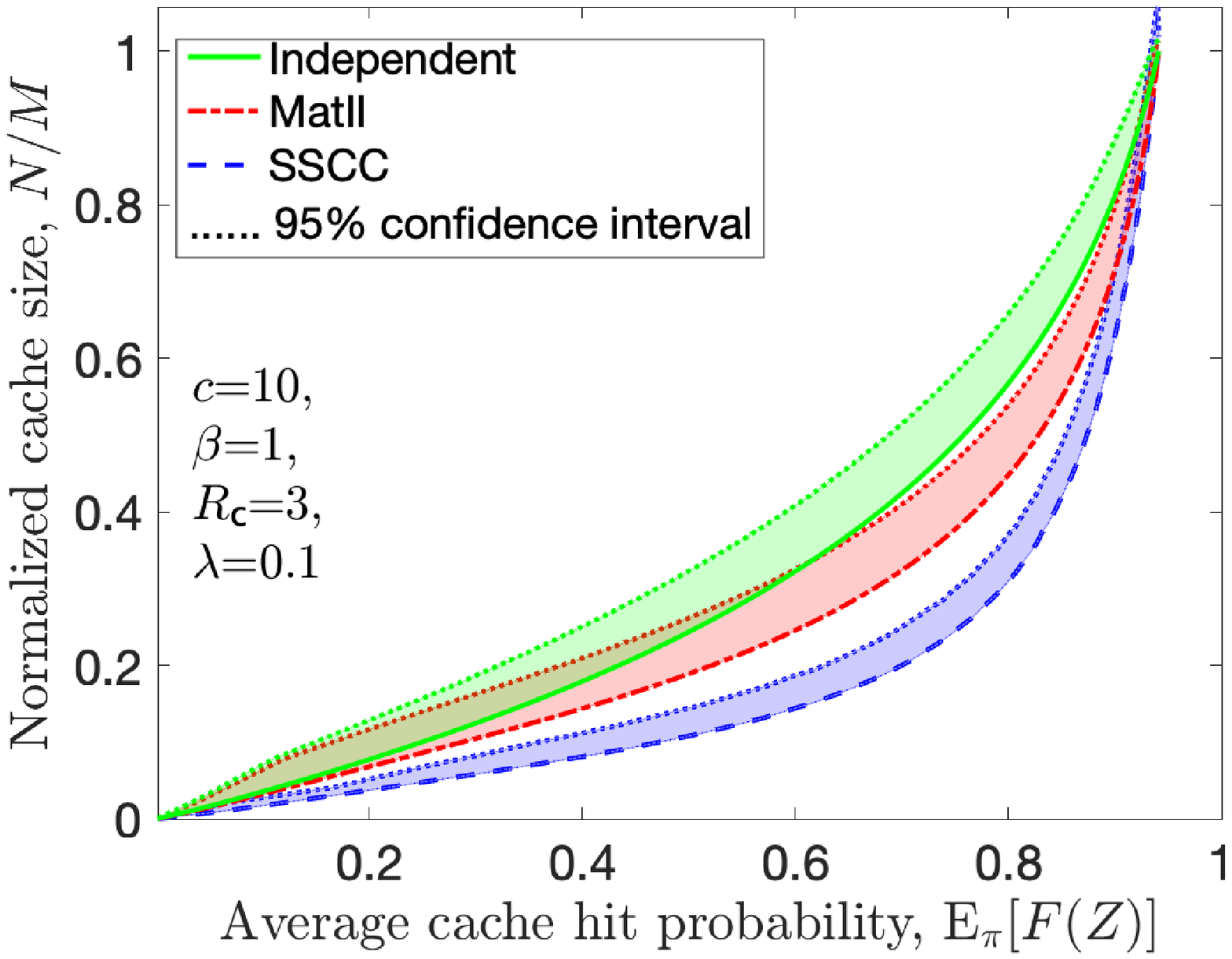}
    \caption{\small{Performance of the non-isotropic model in an urban area scenario. (Left) (\ref{non_uniform_demand_v1}) and (Right) (\ref{non_uniform_demand_v2}). 
    }}
    \label{fig:performance2_nonisotropic}
\end{figure*}

\section{Numerical Simulations}
\label{experiments}

The nodes live in 
a square region of the Euclidean plane with area $L^2$ where $L=100km$. To avoid edge effects, we evaluate the performance only for the middle square region with area $L^2/9$. The network parameters are $\lambda=0.1$ and $\Rdd\in\{3,\,10\}$ which is consistent with the PPP coverage models in \cite{Andrews2011}. We initially assume that the request process is isotropic and Zipf distributed with parameter $\gamma_r=0.1$ over $M=100$ items. 

For MatII, there is a fixed exclusion range for a given item, and we have derived the optimal exclusion radii in \cite{Malak2016twc}. Let $r_i$ be the optimized exclusion range for item $i$ for MatII. For $\GEC$, the weights $v_i\sim U[0,1]$ are i.i.d. for each node $x$ and each item $i$. We assume that the marks $m^{(i)}$ for item $i$ (exclusion radii) are distributed according to a gamma distribution $\mu^{(i)}=\Gamma(0.7 r_i,1)$ for each $x\in\Phi$, and all items $i$, where we choose its parameters such that the average value of the radius mark for item $i$ equals $\bar{m}^{(i)}=0.7 r_i$. Hence, $\Phi_{th}\sim$$\GEC$$[0.1,\Gamma(0.7 r_i,1),U[0,1],1,f]$ with $c=10$, where $c$ is a parameter of $f(r,m,n)$ in (\ref{fcfunction}) that determines its decay rate.   
We can observe that the $\GEC$ model with gamma-distributed exclusion radius can be used to optimize the cache hit probability-cache violation probability tradeoff. However, as variance of exclusion range increases, the violation probability might also increase for a cache hit probability target. We leave the optimization of the distributions of the marks $\mu^{(i)}$ (across all $i$) over a class of distributions and the study of the fundamental performance limits of $\GEC$ as future work.   

We numerically investigate how much cache over-provisioning is required for different spatial cache placement policies: spatially independent \cite{Blaszczyszyn2014}, MatII \cite{Malak2016twc}, and $\GEC$ cache placement. In Fig. \ref{fig:performance1}, we investigate the required cache size $N$ (normalized) of each policy given that the probability of cache violation is small such that $\mathbb{P}[|C(x)-N|\leq \epsilon] > 0.95$ to characterize the required cache size for a given average cache hit probability. We also illustrate the $95\%$ confidence intervals represented by the shaded regions, and mark the cache sizes for different policies when the average cache hit probability is $\mathbb{E}_{\pi}[F(Z)]=0.7$. 
For example, when $\Rdd=3$, for the $95\%$ confidence interval, the excess cache ratio for independent placement in \cite{Blaszczyszyn2014}, and MatII placement in \cite{Malak2016twc} with respect to the $\GEC$ policy is $
142\%$, and $93\%$, respectively. When $\Rdd=10$, the respective excess ratios for the independent and MatII placement policies are $188\%$ and $109\%$, which are illustrated on the plots. $\GEC$ yields a better concentration of the required cache size, which is desired. Hence, it can be exploited to ensure that the cache does not overrun or underrun its capacity constraint. The performance of caching can be improved if the variance of the marks are higher to better exploit the geographic variation of the demand. In this case, the packing will be denser (see Fig. \ref{fig:softcoremodel}), which is desired for a spatially balanced caching design.

We next generate spatially varying request distributions. We use two sets of model parameters for an urban area and a rural area. The parameters are chosen in accordance with the downlink parameters in \cite{zhou2015spatial} such that the logarithm of the peak hour traffic density is characterized by a normal distribution. 
The non-isotropic request (demand) distribution model for urban and rural areas, with the normal distribution parameters ($\mu^* = 15.999$, $\sigma = 1.4116$), and ($\mu^* = 10.2496$, $\sigma = 1.3034$), respectively, are shown in Fig. \ref{fig:nonisotropic_request}. These maps show the peak hour traffic densities in $(\log(Z))({\rm Kbytes/km}^2)$, i.e., the colorbar values are in $\log({\rm Kbytes/km}^2)$ , where the averages are $\sim 10^7$ (${\rm Kbytes/km}^2$) for urban area and $\in[0.8\times 10^4, 2\times 10^4]$ (${\rm Kbytes/km}^2$) for rural area \cite{zhou2015spatial}. The parameter $r$ for the exponential variogram model is chosen to be one third of the range, where ranges for urban and rural areas are $0.0154({\rm km})$ and $ 1.7139({\rm km})$, respectively.

The performance of the non-isotropic request model for (\ref{non_uniform_demand_v1}) and (\ref{non_uniform_demand_v2}) are shown in Fig. \ref{fig:performance2_nonisotropic}. For the non-uniform spatial traffic, we used Fig. \ref{fig:nonisotropic_request} urban area model which contains $120\times 120$ pixels. We randomly picked a $10\times 10$ pixel region and averaged the performance of the spatial request distribution models in (\ref{non_uniform_demand_v1}) and (\ref{non_uniform_demand_v2}), where the request distribution is tilted as function of the peak hour demand intensity. From Fig. \ref{fig:performance2_nonisotropic} (left), we observe that the tilted request model in (\ref{non_uniform_demand_v1}) provides a reduction in the required cache size $N$ for any average cache hit probability desired $\mathbb{E}_{\pi}[F(Z)]$, compared to the uniform request model in Fig. \ref{fig:performance1} (left). Different from the uniform demand model, we see that the scaling for the three different models is similar for $\mathbb{E}_{\pi}[F(Z)]> 0.9$. This is because under this spatially correlated demand model, the tail of the demand is negligible compared to the uniform request model such that less cache over-provisioning is required to capture requests that have lower popularities. While the concentration results are improved because the demand is more correlated rather than being uniform at random, over-provisioning is still required to improve the cache hit performance beyond $\mathbb{E}_{\pi}[F(Z)]>0.8$. Note that for the same amount of provisioning the threshold required for the uniform demand model is approximately $\mathbb{E}_{\pi}[F(Z)]>0.6$, as shown in Fig. \ref{fig:performance1} (left). From Fig. \ref{fig:performance2_nonisotropic} (right), we observe that for the tilted request model in (\ref{non_uniform_demand_v2}), the concentration results are not improved. The performances of the independent and MatII models match the results obtained for the uniform spatial traffic model because the tilting of the demand within a $10\times 10$ pixel region does not modify the demand as the intensity does not fluctuate much within an urban region (see Fig. \ref{fig:nonisotropic_request} (Left)). 
However, the $\GEC$ scheme requires more over-provisioning to achieve a similar average performance.  
The distribution parameters of the gamma-distributed marks could be fine tuned to improve the tail performance of exclusion-based placement configurations.  While more analytical investigation is needed to realize the attainable provisioning performance of the $\GEC$ model under different mark distribution models, note that Gamma distribution is an exponential family that arises in the context of the maximum-entropy distributions with specified means. As a result, the negative dependence is promoted and the performance of GEC is significantly improved, which renders this distribution favorable in the context of heterogeneous networks. 

The achievable caching gains of GEC rely on the assumption that nodes cooperate. Otherwise, each node could perform spatially independent policy \cite{Blaszczyszyn2014}, and obtain local caching gains.  
Because $F(Z)$ is a submodular function, the required cache size to provide the desired guarantee will scale superlinearly in $\mathbb{E}_{\pi}[F(Z)]$. 
For non-uniform demand, the concentration results improve via (\ref{non_uniform_demand_v2}) and (\ref{non_uniform_demand_v1}) that capture the heterogeneity, and the tail of the demand becomes negligible.

Cache size over-provisioning is required for performance optimization in hybrid storage techniques. Our results indicate that the over-provisioning requirement can be made much smaller than the existing spatial caching models, for both the spatially uniform demand and the non-uniform demand. While having no over-provisioned space decreases the performance due to the update cost, having too much over-provisioning can cause performance decrease due to less data caching \cite{oh2012caching}. 
Our $\GEC$ approach shows that it is possible to decrease both the average cache space needed and the over-provisioning required. Via a more effective use of cache capacity, the maintenance costs can be significantly reduced and the lifetime of caches can be prolonged.

\section{Conclusions}
\label{conclusion}
We devised a spatial $\gamma$-exclusion caching model ($\GEC$) in which nodes can cooperate via proximity-based techniques to decide how to populate their caches. Our approach exploits the second order properties of the placement configuration and the variability of the demand in a geographic setting. We demonstrate that our approach provides significantly better concentration of the cache storage size compared to non-cooperative or simple cooperative techniques that do not make a full use of the correlations in content placement. Our results suggest guarantees on the over-provisioned cache space, which can help improve the performance of caching networks. 
We believe that $\GEC$ gives insights into not only how to cache the content, but also how to effectively sample in spatial settings. $\GEC$ is suited for enabling proximity-based applications such as D2D and P2P as it promotes the item diversity and reciprocation. 

Extensions of this work include devising more balanced algorithms exploiting strong Rayleigh measures that imply negative association and have the stochastic covering property \cite[Sect. 4.2]{borcea2009negative}. They also include  
the joint optimization of the spatio-temporal demand dynamics, and employment of exclusion-based models to optimize the performance of time-to-live caches.

\begin{appendix}
\section{Appendices}

\subsection{Functions Describing the Second-Order Behavior of the Point Processes}
\label{SecondOrderProperties}
The spatial regularity and second-order properties of $\Phi$ can be characterized by the reduced second moment function which is known as Ripley's $K$-function $K(r)$, $r\geq 0$. The mean number of points of $\Phi$ within a ball of radius $r$ and centred at the typical point, which is not itself counted in the mean is given by $\lambda K(r) = \mathbb{E}^{!\circ}[\Phi(B_0(r))]$ \cite[Ch. 2.3]{Stoyan1996}. While the $K$ function for $\GEC$ is not easy to characterize, some special cases are easy to analyze.

Using the Campbell's theorem \cite[Ch. 1.4]{BaccelliBook1}, we deduce that the average number of transmitters of the stationary point process $\Phi_{th,i}$ -- conditioned on there being a point at the origin but not counting it -- contained in the ball $B=B_0(\Rdd)$ equals
\begin{align}
\label{reducedPalmMeasure}
\lambda K(B_0(\Rdd)) &= \mathbb{E}^{!\circ}[\Phi_{th,i}(B)]\\
&=\lambda^{-1}\int\nolimits_{B_0(\Rdd)}\rho_i^{(2)}(x){\rm d}x=\lambda \int\nolimits_{B_0(\Rdd)} k_i(x)\,{\rm d}x,\nonumber
\end{align}
where $\rho^{(2)}_i (r)$ is the second-order product density (SOPD) corresponding to $\Phi_{th,i}$, and $k_i(r)=\lambda^{-2} \rho^{(2)}_i (r)$ is the two-point Palm probability that two points of $\Phi$ separated by distance $r$ are both retained to store item $i$. For a stationary p.p. $\Phi_{th}$, the SOPD is the joint probability that there are two points of $\Phi_{th}$ at locations $x$, $y$ in the infinitesimal volumes $dx$, $dy$ \cite[Ch. 5.4]{Stoyan1996}. 

We next give the SOPD for MatII.
\begin{prop}
For a stationary p.p. $\Phi_{th}$ that is MatII, the SOPD is given by \cite[Ch. 5.4]{Stoyan1996}
\begin{align}
\label{SOPD_MatII}
   & \rho^{(2)}_i(r) =\lambda^2_{\rm hcp}\times \nonumber\\
   &\begin{cases}
    1, \quad r\geq 2r_i\\
    \frac{2V_{r_i}(r)[1-\exp(-\lambda\pi r_i^2)]-2\pi r_i^2 [1-\exp(-\lambda V_{r_i}(r))]}{\pi r_i^2 V_{r_i}(r)[V_{r_i}(r)-\pi r_i^2]},\quad r_i<r<2r_i,\\
    0,\quad r\leq r_i,
    \end{cases}
\end{align}
where $V_{r_i}(r) = 2\pi r^2_i- l_2(r_i,r)=2\pi r^2_i - 2r^2_i \cos^{-1}\big(\frac{r}{2r_i}\big)+r\sqrt{r_i^2 -\frac{r^2}{4}} $ \cite[Eqn. (5.60)]{Stoyan1996} is the area of the union of two circles with radii $r_i$ that are separated by $r$ where $l_2(r,\delta)$ is given in (\ref{AreaIntersectingCircles}). Pairwise correlations between the points separated by $r > r_i$ are modeled using the SOPD -- $\rho^{(2)}_i (r)$ for item $i$ -- of the MatII hard-core point process.
\end{prop}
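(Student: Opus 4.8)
The plan is to derive the second-order product density from first principles using the Mecke/Slivnyak machinery for the underlying PPP, since MatII is a dependent thinning of $\Phi$ and every retained point is already a point of $\Phi$. First I would invoke the second-order Campbell--Mecke formula for the PPP of intensity $\lambda$ to write $\rho_i^{(2)}(r)=\lambda^2\,p_2(r)$, where $p_2(r)$ is the Palm probability that two points of $\Phi$ a distance $r=\|x-y\|$ apart are \emph{both} retained when each carries an independent $U[0,1]$ mark and survives iff its mark is the smallest among all points of $\Phi$ inside its exclusion disc $B_{\cdot}(r_i)$. By Slivnyak's theorem this probability is evaluated in the augmented configuration $\Phi\cup\{x,y\}$.

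The key geometric step is to partition the two exclusion discs $B_x(r_i)$ and $B_y(r_i)$ into three disjoint zones, which simultaneously produces the three regimes: the two private crescents $B_x(r_i)\setminus B_y(r_i)$ and $B_y(r_i)\setminus B_x(r_i)$, each of area $\pi r_i^2-l_2(r_i,r)$, and the lens $B_x(r_i)\cap B_y(r_i)$ of area $l_2(r_i,r)$. When $r\le r_i$ each added point lies inside the other's exclusion disc, so at most one can hold the minimal mark and $p_2=0$. When $r\ge 2r_i$ the two discs are disjoint, the survival events are governed by independent portions of the PPP, and $p_2=(\lambda_{\rm hcp}/\lambda)^2$, giving $\rho_i^{(2)}=\lambda_{\rm hcp}^2$. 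The substantive regime is $r_i<r<2r_i$, where the lens is nonempty.

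For the overlapping regime I would condition on the two marks $s,t\in[0,1]$. Since the points of $\Phi$ whose mark falls below a given threshold form an independent PPP, the event that neither $x$ nor $y$ is deleted factorizes across the three zones: the crescents contribute $\exp(-\lambda s(\pi r_i^2-l_2))$ and $\exp(-\lambda t(\pi r_i^2-l_2))$, while the shared lens, where a competitor must exceed \emph{both} marks, contributes $\exp(-\lambda\max(s,t)\,l_2)$. Writing $a=\pi r_i^2-l_2=V_{r_i}(r)-\pi r_i^2$ and using the identities $a+l_2=\pi r_i^2$ and $\pi r_i^2+a=V_{r_i}(r)$, I would integrate the product over $s,t\in[0,1]$, use symmetry in $(s,t)$ to reduce to $\{s<t\}$ so that $\max(s,t)=t$, perform the inner $s$-integral to obtain $(1-e^{-\lambda a t})/(\lambda a)$, and then integrate in $t$. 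The two elementary integrals left evaluate to $(1-e^{-\lambda\pi r_i^2})/(\lambda\pi r_i^2)$ and $(1-e^{-\lambda V_{r_i}(r)})/(\lambda V_{r_i}(r))$, and multiplication by $\lambda^2$ collapses everything to the stated fraction.

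The main obstacle is the bookkeeping in the overlapping regime: correctly assigning the $\max(s,t)$ threshold to the lens (rather than $s$ and $t$ separately), justifying the factorization of the survival event across zones via independence of the disjoint PPP restrictions together with independence of the marks, and pushing the double integral through the $\max$ without slips. As a consistency check I would verify continuity at $r=2r_i$, where $l_2\to 0$ and $V_{r_i}(r)\to 2\pi r_i^2$, confirming that the fraction tends to $\lambda_{\rm hcp}^2$ and thus matches the $r\ge 2r_i$ branch.
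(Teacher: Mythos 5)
Your derivation is correct and complete. The paper itself offers no proof of this proposition---it is stated as a citation to Stoyan, Kendall and Mecke, Ch.~5.4---so the only meaningful comparison is with the classical textbook argument, which is precisely what you reconstruct: write $\rho^{(2)}_i(r)=\lambda^2 k_i(r)$ with $k_i$ the two-point retention probability (the same $k_i(r)=\lambda^{-2}\rho^{(2)}_i(r)$ the paper introduces after (\ref{reducedPalmMeasure})), reduce via Slivnyak to the configuration $\Phi\cup\{x,y\}$, split the two exclusion discs into the two private crescents of area $a=\pi r_i^2-l_2=V_{r_i}(r)-\pi r_i^2$ and the shared lens of area $l_2$, assign the threshold $\max(s,t)$ to the lens, and integrate over the two uniform marks. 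The factorization over disjoint zones, the $\max(s,t)$ bookkeeping, the identities $a+l_2=\pi r_i^2$ and $\pi r_i^2+a=V_{r_i}(r)$, and the two boundary regimes ($p_2=0$ for $r\le r_i$ since one point necessarily deletes the other; $p_2=(\lambda_{\rm hcp}/\lambda)^2$ for $r\ge 2r_i$ by independence over disjoint discs) are all handled correctly, and the double integral does collapse to the displayed fraction divided by $\lambda^2$, so that $\rho^{(2)}_i(r)=\lambda^2 k_i(r)$ equals the displayed fraction in the overlap regime.

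One point your own consistency check exposes: the fraction in the middle branch is already the SOPD itself---as you note, it tends to $\lambda_{\rm hcp}^2$ as $r\to 2r_i^-$, matching the outer branch---so the overall prefactor $\lambda_{\rm hcp}^2$ in the proposition as printed can only be meant to apply to the $r\ge 2r_i$ and $r\le r_i$ branches. Taken literally, multiplying the middle branch by $\lambda_{\rm hcp}^2$ as well would make $\rho^{(2)}_i$ jump from $\lambda_{\rm hcp}^4$ to $\lambda_{\rm hcp}^2$ at $r=2r_i$, contradicting the continuity of the Mat\'{e}rn~II product density on $(r_i,\infty)$. Your version is the correct, continuous one; no gap in the argument.
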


A special case of MatII is MatI in which all nodes have the same mark values. The $K$ function for MatI can be given using (\ref{reducedPalmMeasure}), where a pair of points of $\Phi$ separated by less than a critical distance  
$r_i$ are deleted and thus $\lambda_{\rm hcp-I}=\lambda e^{-\lambda\pi r_i^2}$. 
The second-order product density for MatI is
\begin{align}
\label{k_MatI}
\rho_i^{(2)}(r)=\begin{cases}
\lambda^2\exp(-\lambda (2\pi r_i^2-V_{r_i}(r))),\quad r\geq r_i,\\
0,\quad r<r_i.
\end{cases}
\end{align}

The covariance or two-point probability function of a random set with volume fraction $p$ is
\begin{align}
C(r)&=\mathbb{P}(0\in B,\, {\bf r}\in B)\nonumber\\
&=\mathbb{P}(0\notin B,\, {\bf r}\notin B)+2p-1,\quad {\bf r}\in\mathbb{R}^2. 
\end{align}
For example, if $\Xi$ is a Boolean model with typical grain $\Xi_0$, then its covariance is
\begin{align}
C(r)=2p-1+(1-p)^2\exp(\lambda \mathbb{E}[\gamma_{\Xi_0}(r)]),
\end{align}
where $\gamma_{B}(r)=\nu_2(B \cap (B-{\bf r}))$ is the set covariance of the convex set $B$, where ${\bf r}\in\mathbb{R}^2$ with $\norm{{\bf r}}=r$, and $\nu_2$ is the 2-dimensional Lebesgue measure (the area measure) \cite[1.7.2]{Stoyan1996}.

We summarize the functions for the baseline PPP $\Phi$ and MatII $\Phi_{th,i}$ models in Table \ref{functiontable}.

\begin{table*}[t!]\footnotesize
\centering
\setlength{\extrarowheight}{4pt}
\begin{tabular}{| l | l | l | l | l |}
\hline
{\bf Definition} & {\bf Function} & {\bf Poisson p.p.} $\Phi$ & {\bf Mat\'{e}rn II hard-core p.p.} $\Phi_{th,i}$\\
\hline
Volume fraction & $p$ & $p=1$ & $p=\lambda_{\rm hcp}(i)/\lambda$ \\ 
\hline
Covariance
& $C(r)=C_B(r)$ & $p^2$ & $\approx p(1-p)\exp{(-\alpha r)}+p^2$ \cite{debye1957scattering}, \cite{bondesson2003mean}\\ 
\hline
Set covariance (Variogram) & $\gamma_B(r)=C(0)-C(r)$ & $0$ & $p-C(r)$ \\ 
\hline
Covariance function & $k(r)=C(r)-p^2$ & $1$ & $\lambda^{-2}\,\rho^{(2)}_i(r)$ \\ 
\hline
Ripley's $K$ function & $K(r)$ & $\pi r^2$ & $\lambda^{-1} \mathbb{E}^{!\circ}[\Phi_{th,i}(B)]$ 
\\
\hline
Second-order product density & $\rho^{(2)}(r)=\lambda^2\frac{{\rm d}K(r)}{{\rm d}r}\Big/2\pi r$ & $\lambda^2$ & Eq. (\ref{SOPD_MatII}).\\
\hline
Pair correlation function & $g(r)=\frac{\rho^{(2)}(r)}{\lambda^2}\approx\frac{C(r)}{p^2}$ & 1 &  Eq. $(\ref{SOPD_MatII})/\lambda^2$ \\
\hline
\end{tabular}
\caption{Functions describing the second-order behavior of the point processes in $\mathbb{R}^2$ \cite[Ch. 4]{Stoyan1996}. For the exponential covariance approximation for MatII given above, $\alpha\in\mathbb{R}^+$ describes the degree of variability 
in $B$ 
\cite{debye1957scattering}.}
\label{functiontable}
\end{table*}

\subsection{Proof of Proposition \ref{var_neg_assoc}}
\label{App:var_neg_assoc}
We can compute the variance of $F(Z)=\sum\limits_{(i,q)\in\mathcal{R}}{\lambda_{(i,q)}\sum\limits_{k=1}^{|q|-1}w_{q_{k+1}q_k}{\Big(1-f_{z_i}(1,\hdots, k)\Big)} }$ as
\begin{align}
&\mathrm{Var}[F(Z)]=\mathbb{E}\Big[\Big(F(Z)\Big)^2\Big]-\Ex{F(Z)}^2\nonumber\\
&=\sum\limits_{(i,q)\in\mathcal{R}}\sum\limits_{(i',q')\in\mathcal{R}}\sum\limits_{k=1}^{|q|-1}\sum\limits_{k'=1}^{|q'|-1}\lambda_{(i,q)}\lambda_{(i',q')}w_{q_{k+1}q_k}w_{q'_{k'+1}q'_{k'}}\times\nonumber\\
&\big\{\Ex{f_{z_i}(1,\hdots, k))f_{z_{i'}}(1,\hdots, k')}\nonumber\\
&-\Ex{f_{z_i}(1,\hdots, k)}\Ex{f_{z_{i'}}(1,\hdots, k')}\big\}\nonumber\\
&=\sum\limits_{(i,q)\in\mathcal{R}}\sum\limits_{k=1}^{|q|-1}\lambda_{(i,q)}^2w_{q_{k+1}q_k}^2 \big\{\Ex{f_{z_i}(1,\hdots, k)^2}\nonumber\\
&-\Ex{f_{z_i}(1,\hdots, k)}^2  \big\} \nonumber\\
&+\sum\limits_{(i,q)\in\mathcal{R}}\sum\limits_{\underset{(i',q')\neq (i,q)}{(i',q')\in\mathcal{R},}}\sum\limits_{k=1}^{|q|-1}\sum\limits_{k'=1}^{|q'|-1}\lambda_{(i,q)}\lambda_{(i',q')}w_{q_{k+1}q_k}w_{q'_{k'+1}q'_{k'}}\times\nonumber\\
&\big\{\Ex{f_{z_i}(1,\hdots, k)f_{z_{i'}}(1,\hdots, k')}\nonumber\\
&-\Ex{f_{z_i}(1,\hdots, k)}\Ex{f_{z_{i'}}(1,\hdots, k')}\big\}.\nonumber
\end{align}
In the above expression, the terms $\{\Ex{f_{z_i}(1,\hdots, k))f_{z_{i'}}(1,\hdots, k')}\}$ can be evaluated using the first-order and second-order properties of $f_{z_i}(1,\hdots, k)$ in (\ref{product_observation}) 
\begin{align}
\mathbb{E}\Big[f_{z_i}(1,\hdots, k) f_{z_j}(1,\hdots, k)\Big]=\mathbb{P}\Big(\sum\limits_{k=i,j} \Phi_{th,k}(B)=0\Big)\nonumber\\
=\mathbb{P}\Big(\sum\limits_{k'=1}^k z_{q_{k'} i}=0,\,\sum\limits_{l'=1}^l z_{q_{l'} j}=0\Big),\nonumber
\end{align}
where the first order properties of $f_{z_i}(1,\hdots, k)$ are derived in Prop. \ref{NA}. Similarly, we have 
\begin{align}
\Cov{f_{z_i}(1,\hdots, k),\,f_{z_j}(1,\hdots, l)}=\mathbb{P}\Big(\sum\limits_{k=i,j} \Phi_{th,k}(B)=0\Big)\nonumber\\
-\mathbb{P}\Big(\sum\limits_{k'=1}^k z_{q_{k'} i}=0\Big) \mathbb{P}\Big(\sum\limits_{l'=1}^l z_{q_{l'} j}=0\Big).\nonumber
\end{align}
We note that the variance of $F(Z^*)$ is given as
\begin{align}
\mathrm{Var}[F(Z^*)]=\sum\limits_{(i,q)\in\mathcal{R}}\sum\limits_{k=1}^{|q|-1}\lambda_{(i,q)}^2w_{q_{k+1}q_k}^2 \Big\{\prod\limits_{l=1}^k\Ex{\big(1-z^*_{q_l i}\big)^2}\nonumber\\
-\prod\limits_{l=1}^{k}\Ex{\big(1-z^*_{q_{l} i}\big)}^2  \Big\}.\nonumber
\end{align}
Given feasible negatively associated sequences $Z$, we have 
\begin{multline}
\big\{\Ex{f_{z_i}(1,\hdots, k)f_{z_{i'}}(1,\hdots, k')}\nonumber\\
-\Ex{f_{z_i}(1,\hdots, k)}\Ex{f_{z_{i'}}(1,\hdots, k')}\big\} \leq 0,\nonumber
\end{multline}
and equality is satisfied for $Z=Z^*$. 

Our goal is to prove that the following relation holds: 
\begin{align}
\Ex{f_{z_i}(1,\hdots, k)^2}-\Ex{f_{z_i}(1,\hdots, k)}^2 \nonumber\\
\leq \prod\limits_{l=1}^k\Ex{\big(1-z^*_{q_l i}\big)^2}-\prod\limits_{l=1}^{k}\Ex{\big(1-z^*_{q_{l} i}\big)}^2.\nonumber 
\end{align}
Since $\big(1-z_{q_l i}\big)^2=\big(1-z_{q_l i}\big)$ for binary variables, we need to show that 
\begin{align}
&\Ex{f_{z_i}(1,\hdots, k)}-\prod\limits_{l=1}^k\Ex{\big(1-z^*_{q_l i}\big)} \nonumber\\
&\leq \Ex{f_{z_i}(1,\hdots, k)}^2-\prod\limits_{l=1}^{k}\Ex{\big(1-z^*_{q_{l} i}\big)}^2 \nonumber\\
&\leq \Big(\Ex{f_{z_i}(1,\hdots, k)}-\prod\limits_{l=1}^{k}\Ex{\big(1-z^*_{q_{l} i}\big)}\Big)\nonumber\\
&\times\Big(\Ex{f_{z_i}(1,\hdots, k)}+\prod\limits_{l=1}^{k}\Ex{\big(1-z^*_{q_{l} i}\big)}\Big).\nonumber
\end{align}
Since $\Ex{f_{z_i}(1,\hdots, k)}-\prod\nolimits_{l=1}^k\Ex{\big(1-z^*_{q_l i}\big)}\leq 0$, the above condition is true when $\Ex{f_{z_i}(1,\hdots, k)}+\prod\nolimits_{l=1}^{k}\Ex{\big(1-z^*_{q_{l} i}\big)}\in[0,1]$. Furthermore, because $\Ex{f_{z_i}(1,\hdots, k)}\leq \prod\nolimits_{l=1}^{k}\Ex{\big(1-z^*_{q_{l} i}\big)}$, it is sufficient to have $ \prod\nolimits_{l=1}^{k}\Ex{\big(1-z^*_{q_{l} i}\big)}\in[0,1/2]$. This is true whenever $\Ex{z^*_{q_{l} i}}\leq 1/2$ for all $q_l\in q$.

\subsection{Proof of Prop. \ref{Avg_Hit_MHCP}}
\label{App:Avg_Hit_MHCP}

For the special case of MatII (Corollary \ref{MatIIsimplified}), we have that
\begin{align}
\label{equalityfortail}
\mathbb{P}(\Phi_{th,i}(B)>0)=\lambda_{\rm hcp}(i)\pi\Rdds,\,\, r_i\geq\Rdd.
\end{align}
The cache hit probability  for $r_i<\Rdd$ is bounded below as
\begin{align}
\label{LB}
\mathbb{P}(\Phi_{th,i}(B)>0)\geq 1-\exp(-\lambda_{\rm hcp}(i)\pi\Rdds),\,\, r_i<\Rdd,
\end{align}
which is due to the negative association property of MatII. Hence, the miss probability is lower than that of the independently thinning of the original PPP $\Phi$  
with probability $\lambda_{\rm hcp}(i)/\lambda$.

Similarly, the cache hit probability  for $r_i<\Rdd$ can be bounded above as
\begin{align}
\label{UB}
\mathbb{P}(\Phi_{th,i}(B)>0)&\leq 1-\exp(-\lambda_{\rm hcp}(i)\pi\Rdds)\nonumber\\
&+\lambda^{-1}\int\nolimits_{B_0(\Rdd)}\rho_i^{(2)}(x){\rm d}x,\quad r_i<\Rdd,
\end{align}
where the upper bound follows from Markov's inequality.

\subsection{Proof of Prop. \ref{MHCP}}
\label{VarMHCII}

Using the notation $A_i=\mathbbm{1}(\Phi_{th,i}(B)>0)$ for $i=1,\hdots, M$, the variance of $F(Z)$ satisfies
\begin{align}
&\mathrm{Var}[F(Z)]
=\sum\limits_{i}{p_r^2(i)\mathrm{Var}[A_i]}\nonumber\\
&+2\sum\limits_{1\leq i<j\leq M}{p_r(i)p_r(j)\Cov{A_i,A_j}}
\overset{(a)}{=}\sum\limits_{i}{p_r^2(i)\mathrm{Var}[A_i]}\nonumber\\
&=\sum\limits_{i}{p_r^2(i)\mathbb{P}(\Phi_{th,i}(B)>0)(1-\mathbb{P}(\Phi_{th,i}(B)>0))}\nonumber\\
&\overset{(b)}{\leq}\sum\limits_{i=1}^{m_c}{p_r^2(i)\Big[\lambda_{\rm hcp}(i)\pi\Rdds+\lambda^{-1}\int\nolimits_{B_0(\Rdd)}\rho_i^{(2)}(x){\rm d}x\Big]e^{-\lambda_{\rm hcp}(i)\pi\Rdds}}\nonumber\\
&+\sum\limits_{i=m_c+1}^{M}{ p_r^2(i)\lambda_{\rm hcp}(i)\pi\Rdds e^{-\lambda_{\rm hcp}(i)\pi\Rdds}}\nonumber\\
&=\sum\limits_{i=1}^{M}{p_r^2(i)\lambda_{\rm hcp}(i)\pi\Rdds e^{-\lambda_{\rm hcp}(i)\pi\Rdds}}\nonumber\\
&+\sum\limits_{i=1}^{m_c}{p_r^2(i)\Big[\lambda^{-1}\int\nolimits_{B_0(\Rdd)}\rho_i^{(2)}(x){\rm d}x\Big]e^{-\lambda_{\rm hcp}(i)\pi\Rdds}}\nonumber\\
&\leq \frac{1}{e}\sum\limits_{i=1}^{M}{p_r^2(i)}+\frac{2\pi}{\lambda}\sum\limits_{i=1}^{m_c}{p_r^2(i)\int\nolimits_{r_i}^{\Rdd}\rho_i^{(2)}(r)r{\rm d}r}\nonumber\\
&\leq \frac{1}{e}\sum\limits_{i=1}^{M}{p_r^2(i)}+\frac{\pi}{\lambda}\sum\limits_{i=1}^{m_c}{p_r^2(i)\frac{\big(1-e^{-\lambda \pi r_i^2}\big)^2}{\pi^2 r_i^4}(\Rdds-r_i^2)}\nonumber\\
& = \frac{1}{e}\sum\limits_{i=1}^{M}{p_r^2(i)}+\pi\lambda\sum\limits_{i=1}^{m_c}{p_r^2(i)p_c^2(i)(\Rdds-r_i^2)}\nonumber\\
&\leq \sum\limits_{i=1}^{M}{p_r^2(i)\Big(\frac{1}{e}+\pi\lambda p_c^2(i)(\Rdds-r_i^2)_+\Big)},\nonumber
\end{align}
where $(a)$ is due to the independent placement of each item across caches. $(b)$ is from (\ref{LB})-(\ref{UB}) and by noting the relation $1-\exp(-\lambda_{\rm hcp}(i)\pi\Rdds)\leq \lambda_{\rm hcp}(i)\pi\Rdds$.

\subsection{Proof of Theorem \ref{convexity}}
\label{App:convexity}

From Prop. \ref{CTPP_SSCC}, we have that
\begin{align}
\label{eta_SSCC}
\etaGM(r,\vec{\delta})&=\int\nolimits_{\mathbb{R}}\int\nolimits_{0}^1
    \exp\Big(-u \lambda \int\nolimits_{\mathbb{R}} (\pi (m+n)^2\nonumber\\
    &-l_2(r,n)) \, 
    \mu({\rm d}n) \Big)\, {\rm d}u\, \mu({\rm d}m) \nonumber\\
    &=\mathbb{E}_{m}\Big[\mathbb{E}_{U}\Big[
    \exp{\left(-U q(\lambda,r,m)\right)}\Big]\Big] \nonumber\\
    &=\mathbb{E}_{m}\left[\frac{1-\exp{\left(-q(\lambda,r,m)\right)}}{q(\lambda,r,m)}\right].
\end{align}
In (\ref{eta_SSCC}), $U\sim U[0,1]$ is a uniformly distributed random variable, and
\begin{align}
q(\lambda,r,m)&=\lambda \mathbb{E}_{m_2}\Big[\pi (m+m_2)^2-l_2(r,m_2)\Big]\nonumber\\
&=\lambda\pi \big(m^2+2m\bar{m}_2\big)+\lambda \mathbb{E}_{m_2}\big[\pi m_2^2-l_2(r,m_2)\big].\nonumber
\end{align}
In (\ref{eta_SSCC}), let $f(m)=\exp(-U\lambda\pi(m^2+2m\bar{m}_2))$. Hence, $f'=\big(-U\lambda \pi(2m+2\bar{m}_2)\big)f<0$, and $f''=(-2U\lambda\pi)f+\big(U \lambda \pi\big(2m + 2\bar{m}_2 \big)\big)^2f\geq 0$ that is convex in $m$ provided that $m + \bar{m}_2\geq \sqrt{\frac{1}{2U\lambda \pi}}$. Since the baseline process $\Phi$ is a homogeneous PPP with intensity $\lambda$, we have $\bar{m}_2\geq\frac{1}{2\sqrt{\lambda}}$. If $\bar{m}+m_2\geq\frac{1}{\sqrt{\lambda}}$ is satisfied, then the condition $U>\frac{1}{2\pi}$ is sufficient for the convexity. 
In addition, we can show that $\etaGM(r,\vec{\delta})=\mathbb{E}_{m}[g(m)]$, with $g=\frac{f-1}{\log(f)}=\frac{1-\exp{(-x)}}{x}$ where $x=U\lambda\pi(m^2+2m\bar{m}_2)$. Then $g'=\frac{\exp{(-x)}(x+1)-1}{x^2}$, and 
$g''=\frac{2-\exp{(-x)}[x^2+2x+2]}{x^3}>0$. 
Hence, $$\etaGM(r,\vec{\delta})=\mathbb{E}_{m}[g(m)]\geq g(\bar{m})=\etaM(r,\bar{m}).$$

\subsection{Proof of Prop. \ref{spatial_var}}
\label{App:spatial_var}
From \cite[Ch. 4]{Stoyan1996}, the variances of point-counts of $\Phi_{th,i}$ can be calculated as  
\begin{align}
\mathrm{Var}[\Phi_{th,i}(B)]&=\lambda^2_{\rm hcp}(i)\int\nolimits_0^{\infty} \gamma_B(r){\rm d}K(r)\nonumber\\
&+\lambda_{\rm hcp}(i) \pi \Rdds - (\lambda_{\rm hcp}(i) \pi \Rdds)^2,\nonumber 
\end{align}
where $\gamma_B(r)$ is the set covariance of the convex set $B$ for MatII (see Appendix \ref{SecondOrderProperties}, Table \ref{functiontable}). 

For large $B=B_0(\Rdd)$, i.e., spherically infinite according to \cite{girling1982approximate}, we have that
\begin{align}
    &\frac{\mathrm{Var}[\Phi_{th,i}(B)]}{\pi \Rdds}\approx \lambda_{\rm hcp}(i)+2\pi \lambda^2_{\rm hcp}(i) \int\nolimits_0^{\infty} (g_i(r)-1)r {\rm d}r \nonumber\\
    &= \lambda_{\rm hcp}(i)-4\pi\lambda^2_{\rm hcp}(i) r_i^2+2\pi\int_{r_i}^{2r_i} \rho_i^{(2)}(r)r{\rm d}r\nonumber\\
    &\overset{(a)}{\leq}\lambda_{\rm hcp}(i)-\pi\lambda^2_{\rm hcp}(i) r_i^2 
    \overset{(b)}{\leq} \lambda_{\rm hcp}(i) e^{-\lambda\pi r_i^2}, \nonumber
\end{align}
where $g_i(r)$ is the pair correlation function of $\Phi_{th,i}$ (see Table \ref{functiontable}), $(a)$ follows from observing that $\rho_i^{(2)}(r)\leq \lambda^2_{\rm hcp}(i)$ from (\ref{SOPD_MatII}), and $(b)$ from $1-x-\exp(-x)\leq 0$ for $x\geq 0$. 

\end{appendix}

\bibliographystyle{IEEEtran}
\bibliography{references}

\end{document}